\definecolor{dark-blue}{rgb}{0.15,0.15,0.4}
\definecolor{dark-red}{rgb}{0.4,0.15,0.15}
\definecolor{medium-red}{rgb}{0.6,0,0}
\definecolor{medium-blue}{rgb}{0,0,0.6}
        \theoremstyle{plain} 
        \newtheorem{theorem}             {Theorem}  [section]
        \newtheorem{corollary}  [theorem]{Corollary}
        \newtheorem{proposition}[theorem]{Proposition}
        \theoremstyle{definition}
        \newtheorem{definition} [theorem]{Definition}
        \newtheorem{example}    [theorem]{Example}
        \theoremstyle{remark}
        \newtheorem{remark}              {Remark}
\definecolor{dark-blue}{rgb}{0.15,0.15,0.4}
\definecolor{dark-red}{rgb}{0.4,0.15,0.15}
\definecolor{medium-red}{rgb}{0.6,0,0}
\definecolor{medium-blue}{rgb}{0,0,0.6}
                \numberwithin{equation}{section}
\renewcommand*\env@matrix[1][\arraystretch]{%
  \edef\arraystretch{#1}%
  \hskip -\arraycolsep
  \let\@ifnextchar\new@ifnextchar
  \array{*\c@MaxMatrixCols c}}
\begin{document}

\title[Canonoid and Poissonoid transformations]{Canonoid and Poissonoid transformations, symmetries and biHamiltonian structures}
\author[]{%
	Giovanni Rastelli\\
	Manuele Santoprete\\
}

\address{Dipartimento di Matematica\\
           Universit\'a di Torino\\
           Torino, via Carlo Alberto 10, Italia\\
          }
\email{ giovanni.rastelli@unito.it}
\address{Department of Mathematics\\
           Wilfrid Laurier University\\
           75 University Avenue West\\
           Waterloo, ON,Canada\\
           }%
\email{msantoprete@wlu.ca}   

\subjclass[2010]{ 53D05, 37K10, 53D17}

\keywords{Canonoid transformations; biHamiltonian systems; symplectic geometry; Poisson geometry.}

 
\begin{abstract}
    We give a characterization of linear canonoid transformations on symplectic manifolds and we use it to generate biHamiltonian structures for some mechanical systems. Utilizing this characterization we also study the behavior of the harmonic oscillator under canonoid transformations. We present a description of canonoid transformations due to E.T. Whittaker, and we show that it leads, in a natural way, to  the  modern, coordinate-independent definition of canonoid transformations. We also generalize canonoid transformations to Poisson manifolds by introducing Poissonoid transformations. We give examples of such transformations for   Euler's equations  of the rigid body (on $ \mathfrak{ so}^\ast (3) $ and $ \mathfrak{ so}^\ast (4)$)  and for an integrable case of  Kirchhoff's equations for the motion of a rigid body immersed in an ideal fluid. We study the relationship between biHamiltonian structures and Poissonoid transformations  for these examples. We analyze the link between Poissonoid transformations, constants of motion, and symmetries.

\end{abstract}
\maketitle
\setcounter{tocdepth}{1}
\tableofcontents
\section{Introduction }


 BiHamiltonian systems are, in a nutshell, dynamical systems described by a vector field that is  Hamiltonian  with respect to two  distinct Poisson (or symplectic) structures and  two  associated  (possibly distinct) Hamiltonian functions.   Under certain additional hypothesis, possessing a biHamiltonian structure is enough to guarantee the integrability of the system (see for example \cite{magri_eight_2004}). During the last few decades it has been shown that many integrable systems are in fact biHamiltonian, consequently,  biHamiltonian structures are now an important paradigm for understanding integrability.
In some cases, a new Hamiltonian structure can be obtained with a transformation of coordinates. This may be possible when, on a symplectic manifold, the transformation changes the Hamiltonian characterization of a Hamiltonian vector field. Such transformations (in the case the symplectic manifold is $ \mathbb{R}^{2n}  $ and the symplectic form the standard one) were dubbed ``canonoid"  and popularized by Saletan and Cromer \cite{saletan_theoretical_1971}, and by Currie and Saletan \cite{currie_canonical_1972}, but they were know well before the 1970s, in fact, they were already present in  the 1904  edition of the classical book of Whittaker \cite{whittaker_treatise_1988}. This type of transformations include the well known canonical ones.
The main difference between these   transformations  is that, while the canonoid ones are specific to the problem considered, the canonical ones preserve the Hamiltonian form of every Hamiltonian system on the manifold, and  leave invariant the symplectic structure. Therefore, canonical transformations  cannot be used to  generate different symplectic structures. Strictly canonoid transformations (i.e., those canonoid transformations that are not canonical), in contrast, change the symplectic structure and only preserve the Hamiltonian form of some chosen Hamiltonian systems, and thus can be used to  generate different symplectic structures. Canonoid transformations are the argument of about 20 papers, among them, we cite here the ones more closely related to the content of our article. A first set of papers concerns primarily the characterization of canonoid transformations and their relations with canonical transformations: \cite{currie_canonical_1972,negri_canonoid_1987,carinena_canonoid_1988,vegas_canonical_1989,carinena_generating_1990,carinena_canonoid_2013}. A second set deals primarily with applications of canonoid transformations to the analysis of Hamiltonian systems: \cite{tempesta_quantum_2002,landolfi_certain_2007}.

In this paper we use a modern geometrical definition of canonoid transformation based on locally Hamiltonian vector fields. This definition coincides to the so called quasi-canonical transformations of Marmo \cite{marmo_equivalent_1976} and  reduces to  the  definition of Saletan and Cromer  \cite{saletan_theoretical_1971}  in the simplest case of a topologically trivial system, or at least when considering only local expressions for the system.  By generalizing the approach of  \cite{fasano_analytical_2006}, we obtain simple explicit conditions for linear canonoid transformations on $ \mathbb{R}^n  $. We use this method to analyze some  examples, including the harmonic oscillator in $ \mathbb{R}  ^{ 4 } $. We also recall the approach of Whittaker \cite{whittaker_treatise_1988} and show that the modern definition of canonoid  transformation we employ follows naturally from such approach.  
Moreover, we  extend this type of transformations to the case of Poisson manifolds, by introducing a generalization of the canonoid transformations that we dub Poissonoid transformations. This type of transformations, as far as we know, have not been studied before, and they  allow us to find  biHamiltonian structures in the case of Poisson manifolds. The Casimirs of the new Poisson structures found this way provide first integrals of the Hamiltonian system. Furthermore, if the Poisson structures are compatible, the integrability of the systems follows from the theory of biHamiltonian systems. 
We also study the relationship between  Linear Poissonoid transformations and biHamiltonian structures  in some examples, namely Euler's equations  for the rigid body (on $ \mathfrak{ so}^\ast (3) $ and $ \mathfrak{ so}^\ast (4)$)  and an integrable case of Kirchhoff's equations for the motion of a rigid body immersed in an ideal fluid.  
We conclude with a study of the relations among infinitesimal Poissonoid transformations, Noether theorem and master symmetries, generalizing to the Poisson case some results obtained in \cite{carinena_canonoid_2013} for canonoid transformations.

Our aim is to provide to the non-specialists  an  introduction to canonoid transformation and  biHamiltonian systems through the analysis of several examples. For the specialists, we highlight the new definition of Poissonoid transformations, the role played by simple linear canonoid (and Poissonoid) transformations in the determination of several biHamiltonian structures, and the relationship between Poissonoid transformations, integrals of motion, and symmetries.

 The paper is organized as follows. In section 2 we recall some essential facts concerning Poisson geometry, symplectic geometry and biHamiltonian structures, and we set the notations   employed in the rest of the article. Section 2 can be skipped by readers already familiar with these topics. In section 3  we introduce canonoid transformations on symplectic manifolds and we study examples of linear canonoid transformations.  In section 4  we analyze how the superintegrable structure of some simple systems behaves under linear canonoid transformations.  In  section 5 we translate into more modern language  the characterization  of canonoid transformations given in Whittaker \cite{whittaker_treatise_1988}. In section 6, we extend  the idea of canonoid transformations to Poisson manifolds by introducing Poissonoid transformations, and we give several examples of such transformations. In the last section we analyze the link between infinitesimal Poissonoid transformations, master symmetries and constants of motion.  

 \section{Poisson, symplectic and biHamiltonian structures}
 \subsection{Poisson structures}
We now recall the fundamental  definitions  and some of  the main results concerning Poisson structures, for a more detailed account we refer the reader to the following references: \cite{marsden_introduction_1999,laurent-gengoux_poisson_2012,choquet-bruhat_analysis_2000,libermann_symplectic_1987,rudolph_differential_2012}.

\begin{definition} Let $M$ be a smooth manifold, and let $ C ^{ \infty }  (M) $ be the set of smooth functions on $M$. A {\bf Poisson bracket}  or {\bf Poisson structure}  is a skew-symmetric bilinear operation $ \{ \cdot , \cdot \} :C ^{ \infty }  (M) \times C ^{ \infty }  (M) \to C ^{ \infty }  (M)$ which satisfies the Jacobi identity 
    \[
        \{ \{ F, G \} , H \} + \{ \{ G,H \} , F \} + \{ \{ H,F \} , G \} = 0 
    \]
    and  the Leibnitz identity
    \[
        \{ f, gh \} = \{ f , g \} h + g \{ f , h \}.
    \]
    Associated with the bracket there is a bivector field defined by
    \[ \{ F, G \} (x) = \pi (x) (\mathbf{d} F (x)  , \mathbf{d} G (x)),\]
    called {\bf Poisson bivector } or {\bf Poisson tensor} field.   The pair $ (M, \pi ) $  is called {\bf Poisson manifold}.
 
\end{definition} 
 
Let $\alpha$ and $\beta$ be differential forms, than we define the map
$ \pi ^\sharp : T ^\ast M \to T M$  as $ \pi ( \alpha , \beta ) = \left\langle \alpha , \pi ^\sharp \cdot \beta \right\rangle $.
The  {\bf rank } of $\pi$ at $ x \in M $ is the rank of the linear map $ \pi _x ^\sharp : T ^\ast _x M \to T  _x M $. In general, the rank will vary from point to point.

\begin{definition}
	A {\bf regular point}  of a Poisson manifold is a point where the rank of the Poisson bivector is locally constant, the remaining points are called  {\bf singular points}.
	A {\bf regular Poisson bivector} $ \pi $ is a Poisson bivector whose rank is constant. Similar definitions apply to Poisson structures. A {\bf regular Poisson manifold } is a Poisson manifold endowed with a regular Poisson structure.
\end{definition}

\begin{definition} Let $ (M , \pi) $ be a Poisson manifold. A smooth, real valued function $ C: M \to \mathbb{R}  $ is called a {\bf Casimir function } if the Poisson bracket of $C$ with any other real-valued function vanishes identically, i.e. $ \{ C, H \} = 0 $ for all $ H: M \to \mathbb{R}  $.  
\end{definition}

An alternative way of introducing the Poisson bivector uses the so called Schouten-Nijenhuis bracket, namely   an extension of the Lie bracket of vector fields to skew-symmetric multivector fields, see \cite{marsden_introduction_1999,vaisman_lectures_1994}).
\begin{proposition}  
    A bivector field $\pi$   on $M$ is the Poisson bivector of a Poisson structure on $M$ if and only if
 $ [\pi , \pi ] = 0 $, where $ [\cdot , \cdot ] $ is the Schouten-Nijenhuis  bracket.
\end{proposition}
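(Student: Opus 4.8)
The plan is to reduce the statement to the well-known fact that for any bivector field $\pi$ the associated bracket $\{F,G\} := \pi(\mathbf{d}F,\mathbf{d}G)$ automatically satisfies skew-symmetry and the Leibniz identity, so that the only substantive condition for $\pi$ to be a Poisson bivector is the Jacobi identity, and then to identify the Jacobiator with $[\pi,\pi]$. First I would observe that skew-symmetry of $\{\cdot,\cdot\}$ is immediate from skew-symmetry of $\pi$, and that the Leibniz identity follows from $\mathbf{d}(gh) = g\,\mathbf{d}h + h\,\mathbf{d}g$ together with tensoriality of $\pi$ (i.e. $C^\infty(M)$-bilinearity on one-forms). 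Hence $\pi$ is a Poisson bivector if and only if the Jacobiator
\[
    J(F,G,H) := \{\{F,G\},H\} + \{\{G,H\},F\} + \{\{H,F\},G\}
\]
vanishes for all $F,G,H \in C^\infty(M)$.

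The core step is to establish the identity
\[
    [\pi,\pi](\mathbf{d}F,\mathbf{d}G,\mathbf{d}H) = 2\,J(F,G,H),
\]
with the normalization of the Schouten--Nijenhuis bracket adopted in \cite{marsden_introduction_1999,vaisman_lectures_1994}. I would verify this in local coordinates: writing $\pi = \tfrac12\,\pi^{ij}\,\partial_i\wedge\partial_j$, one has $\{F,G\} = \pi^{ij}\,\partial_i F\,\partial_j G$, and expanding $\{\{F,G\},H\}$ produces one term involving first derivatives of the coefficients $\pi^{ij}$ and two terms involving second derivatives of the arguments $F,G,H$. Summing cyclically over $(F,G,H)$ and using $\pi^{ij}=-\pi^{ji}$, all second-derivative terms cancel in pairs, leaving precisely the contraction of the components of $[\pi,\pi]$ — namely the cyclic sum of $\pi^{\ell i}\,\partial_\ell \pi^{jk}$ — against $\mathbf{d}F\otimes\mathbf{d}G\otimes\mathbf{d}H$.

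This coordinate bookkeeping is the step I expect to be the main (though entirely mechanical) obstacle: one must arrange the indices and signs so that the cancellation of the second-order terms, and the matching of what remains with the known coordinate formula for $[\pi,\pi]$, is transparent. Finally I would conclude as follows. The bracket $[\pi,\pi]$ is a trivector field, hence a tensor, so it vanishes identically on $M$ if and only if it vanishes on every triple of covectors at every point; since the differentials of local coordinate functions form a basis of $T^\ast_x M$ at each $x$, exact one-forms span the cotangent space pointwise. Therefore $[\pi,\pi]=0$ if and only if $[\pi,\pi](\mathbf{d}F,\mathbf{d}G,\mathbf{d}H)=0$ for all $F,G,H$, i.e. (by the displayed identity) if and only if $J\equiv 0$, i.e. if and only if $\{\cdot,\cdot\}$ obeys the Jacobi identity. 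Combining this with the first step yields the asserted equivalence.
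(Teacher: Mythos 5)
Your proposal is correct: the paper itself gives no proof of this proposition, deferring to the cited references, and your argument --- reducing the Poisson condition to the Jacobi identity and verifying in local coordinates that $[\pi,\pi](\mathbf{d}F,\mathbf{d}G,\mathbf{d}H)$ equals (up to the normalization constant you rightly flag) the Jacobiator, with the second-derivative terms cancelling by antisymmetry and exact one-forms spanning the cotangent spaces --- is exactly the standard proof found in those references. No gaps; the only point to be careful about in a written version is fixing the Schouten--Nijenhuis sign/normalization convention so the constant in $[\pi,\pi](\mathbf{d}F,\mathbf{d}G,\mathbf{d}H)=2J(F,G,H)$ comes out consistently.
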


\begin{definition}
	Let $ ( M , \pi ) $ be a Poisson manifold. The  {\bf Hamiltonian vector field } of  a smooth function $H: M \to M$  is the vector field such that
	\[\mathcal{X} _H [F] = \{ F , H \}  = \left\langle \mathbf{d} F, \pi ^\sharp \cdot  \mathbf{d} H \right\rangle \]
	for every smooth function $F$ on $M$. The function $H$ is called  {\bf Hamiltonian function}. The triplet $ (M , \pi , H) $ is called {\bf Hamiltonian system}. 
\end{definition}

The definition of Hamiltonian system can be generalized to systems that are Hamiltonian in a neighborhood of each point of the manifold $ (M , \pi) $.  
Here we use a definition of locally Hamiltonian given in \cite{laurent-gengoux_poisson_2012}, note that this definition differs from the one used in \cite{libermann_symplectic_1987} and \cite{choquet-bruhat_analysis_2000}.

\begin{definition}
	A vector field $\mathcal{X}$ on a Poisson  manifold $ ( M , \pi  ) $ is  called {\bf locally Hamiltonian } if for every $ x \in M $ there is a neighborhood $U$ of $x$ and a smooth function $ H _U $ defined on this neighborhood such that $ \mathcal{X} = \pi ^\sharp \cdot  \mathbf{d} H _U $, that is $ \mathcal{X} $ is Hamiltonian in $U$ with the locally defined Hamiltonian $ H _U $. A triplet $ (M , \pi , \mathcal{X}) $ as above is called a {\bf locally Hamiltonian system}. 
\end{definition}

Let $ ( M , \pi ) $ be a  Poisson manifold of dimension $d$. In  a  neighborhood $ U $ of a point $ p \in M $, with local coordinates $  \mathbf{x} = ( x ^1  , \ldots , x ^d  ) $, the bivector field $ \pi $ can be written as
\[
	\pi = \sum_{ 1 \leq i < j \leq d} \{ x ^i  , x ^j  \} \frac{ \partial } { \partial x ^i  } \wedge \frac{ \partial } { \partial x ^j  }
\]
and the vector field
\[
	\mathcal{X} _H  = \sum_{ i, j = 1 } ^d  \{ x ^i  , x ^j  \} \frac{ \partial H } { \partial x ^j  } \frac{ \partial } { \partial x ^i  }.
\]

We recall that a function $ F \in C ^{ \infty } (M) $ such that $ \{ F, H \}  = 0 $ is called a {\bf constant of motion} or first integral of the Hamiltonian system. 
\begin{definition} 
    Let $ (M , \pi) $ be a Poisson manifold, and let $ (M , \pi , H) $ be a Hamiltonian system on $M$ with Hamiltonian vector field $ \mathcal{X} _{ H } $.   
    A vector field $\xi  $ on $M$ is called an  {\bf infinitesimal symmetry  of the vector field} $ \mathcal{X} _H $ if $ \mathcal{L} _{ \xi } \mathcal{X} _{ H } = 0$. Moreover,  $ \xi $  is called a  {\bf  Poisson infinitesimal symmetry } of $ (M , \pi , H) $ if it is an infinitesimal symmetry of both $ \pi $ and $ H $, that is if 
    \[ \mathcal{L} _{ \xi  } \pi = 0 \quad \mbox{ and } \quad  \mathcal{L} _{ \xi  } H = \xi [H] = 0.\]
     A Poisson infinitesimal symmetry  $ \xi   $  is called  a {\bf (locally) Hamiltonian infinitesimal symmetry  } of $ (M , \pi , H) $ if, in addition, $ \xi  $ is (locally) Hamiltonian. That is there is a (locally defined) function $K$ such that $ \xi  = \pi ^\sharp \cdot \mathbf{d} K $. 
\end{definition}	

Note that, in the non-degenerate (symplectic) case, Poisson infinitesimal symmetries coincide with locally Hamiltonian infinitesimal symmetries.

It is important to keep in mind there is a distinction  between Poisson infinitesimal symmetries  of   $ (M , \pi , H) $ and infinitesimal symmetries of the Hamiltonian vector field $ \mathcal{X} _H = \pi ^\sharp \cdot \mathbf{d} H $. The following proposition clarifies the relationship  between the two types of symmetries

\begin{proposition}  Let $ (M , \pi) $ be a Poisson manifold, and let $ (M , \pi , H) $ be a Hamiltonian system on $M$ with Hamiltonian vector field $ \mathcal{X} _{ H } $. If  $ \xi $  is a    Poisson infinitesimal symmetry  of $ (M , \pi , H) $ then it is also an  infinitesimal symmetry  of the vector field $ \mathcal{X} _H $.
\end{proposition}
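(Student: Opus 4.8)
The plan is to reduce the statement to the elementary fact that a vector field annihilating the Poisson bivector under Lie derivative is exactly a derivation of the Poisson bracket, and then to use the second hypothesis $\xi[H]=0$.

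First I would record the standard reformulation of the hypothesis $\mathcal{L}_\xi \pi = 0$. Evaluating the Lie derivative of the bivector $\pi$ on exact one-forms and using that $\mathcal{L}_\xi$ is a derivation that commutes with $\mathbf{d}$ (so $\mathcal{L}_\xi \mathbf{d}F = \mathbf{d}(\mathcal{L}_\xi F) = \mathbf{d}(\xi[F])$), one obtains
\[
(\mathcal{L}_\xi \pi)(\mathbf{d}F, \mathbf{d}G) = \xi[\{F,G\}] - \{\xi[F], G\} - \{F, \xi[G]\}
\]
for all $F,G \in C^{\infty}(M)$. Hence $\mathcal{L}_\xi \pi = 0$ is equivalent to the identity $\xi[\{F,G\}] = \{\xi[F], G\} + \{F, \xi[G]\}$ for all $F,G$. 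Next I would specialize this identity to $G = H$ and invoke the second part of the hypothesis, $\xi[H] = \mathcal{L}_\xi H = 0$, which yields
\[
\xi[\{F,H\}] = \{\xi[F], H\} \qquad \text{for every } F \in C^{\infty}(M).
\]

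Finally, I would recall that for a vector field $\mathcal{X}$ one has $\mathcal{L}_\xi \mathcal{X} = [\xi, \mathcal{X}]$, so that $(\mathcal{L}_\xi \mathcal{X}_H)[F] = \xi[\mathcal{X}_H[F]] - \mathcal{X}_H[\xi[F]]$, and that by the definition of the Hamiltonian vector field $\mathcal{X}_H[F] = \{F, H\}$. Combining this with the previous display gives
\[
(\mathcal{L}_\xi \mathcal{X}_H)[F] = \xi[\{F,H\}] - \{\xi[F], H\} = 0
\]
for all $F \in C^{\infty}(M)$; since $F$ is arbitrary, this forces $\mathcal{L}_\xi \mathcal{X}_H = 0$, i.e.\ $\xi$ is an infinitesimal symmetry of $\mathcal{X}_H$.

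The only step requiring some care is the reformulation of $\mathcal{L}_\xi \pi = 0$ as a derivation property of the bracket, where one must keep the duality pairings and sign conventions straight; alternatively one may write $\mathcal{X}_H = \pi^\sharp(\mathbf{d}H)$ and use the Leibniz rule $\mathcal{L}_\xi(\pi^\sharp \mathbf{d}H) = (\mathcal{L}_\xi \pi)^\sharp \mathbf{d}H + \pi^\sharp(\mathcal{L}_\xi \mathbf{d}H)$, or simply expand $\mathcal{X}_H^i = \pi^{ij}\partial_j H$ in local coordinates. I do not expect any genuine obstacle; I would only remark that the argument nowhere uses that $\xi$ be (even locally) Hamiltonian, which is consistent with the fact that the converse implication does not hold in general.
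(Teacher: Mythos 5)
Your proof is correct and follows essentially the same route as the paper: both arguments rest on the Leibniz rule for $\mathcal{L}_\xi$ applied to the defining relation $\mathcal{X}_H = \pi^\sharp\cdot\mathbf{d}H$, then invoke $\mathcal{L}_\xi\pi=0$ and $\xi[H]=0$; the paper merely pairs with an arbitrary one-form $\alpha$ where you test against exact forms $\mathbf{d}F$ (equivalently, evaluate on functions), which is only a cosmetic difference since a vector field vanishes once it annihilates all smooth functions. Your closing remark that the argument never uses that $\xi$ be (locally) Hamiltonian is accurate and consistent with the paper's discussion.
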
  
\begin{proof} 
Let $ \alpha $ be an arbitrary $1$-form. Since $ \mathcal{X} _H =  \pi ^\sharp \cdot \mathbf{d} H $ we have 
\begin{equation} \begin{aligned}\label{eqn:symmetry} 
     0& =   \mathcal{L} _{ \xi } \left\langle \alpha , \mathcal{X} _H - \pi  ^\sharp \cdot \mathbf{d} H \right\rangle  = \mathcal{L} _{ \xi } \left\langle \alpha , \mathcal{X} _H \right\rangle - \mathcal{L} _{ \xi } \pi (\alpha , \mathbf{d} H)  \\
     & = \left\langle \mathcal{L} _{ \xi } \alpha , \mathcal{X} _H \right\rangle + \left\langle \alpha , \mathcal{L} _{ \xi } \mathcal{X} _H \right\rangle - (\mathcal{L} _{ \xi } \pi) (\alpha , \mathbf{d} H) - \pi (\mathcal{L} _{ \xi } \alpha , \mathbf{d} H) - \pi (\alpha, \mathcal{L} _{ \xi } \mathbf{d} H)\\
     & = \left\langle \mathcal{L} _{ \xi } \alpha , \mathcal{X} _H \right\rangle + \left\langle \alpha , \mathcal{L} _{ \xi } \mathcal{X} _H \right\rangle - (\mathcal{L} _{ \xi } \pi) (\alpha , \mathbf{d} H) - \left\langle \mathcal{L} _{ \xi } \alpha , \pi ^\sharp \cdot (\mathbf{d} H)\right\rangle  -  \left\langle \alpha, \pi ^\sharp \cdot (\mathcal{L} _{ \xi } \mathbf{d}  H)\right\rangle \\
    &=   \left\langle \alpha , \mathcal{L} _{ \xi } \mathcal{X} _H \right\rangle - (\mathcal{L} _{ \xi } \pi) (\alpha , \mathbf{d} H)  -  \left\langle \alpha, \pi ^\sharp \cdot ( \mathbf{d}( \mathcal{L} _{ \xi } H))\right\rangle.
\end{aligned} \end{equation}  
    If $ \xi $ is a Poisson infinitesimal symmetry of $ (M, \pi , H) $, then $\mathcal{L} _{ \xi  } \pi = 0$  and  $ \mathcal{L} _{ \xi  } H =  0$. Since $ \alpha $  is arbitrary, by the equation above we have $  \mathcal{L} _{ \xi } \mathcal{X} _H = 0 $.
\end{proof}

The converse of the proposition above is  clearly not true in general, even when $ \pi $ is non-degenerate \cite{carinena_canonoid_2013}.  

Note that Hamiltonian infinitesimal symmetries are very important because, through Noether's theorem (see below), they give rise to constants motion, which are very useful in the process of reduction the Hamiltonian system.   
\begin{theorem}[Noether's theorem] Let $ (M , \pi , H) $ an Hamiltonian system. If $F$ is a constant of motion then its vector field is a Hamiltonian infinitesimal symmetry. Conversely, each Hamiltonian infinitesimal symmetry is the Hamiltonian vector field of a constant of motion, which is unique up to a (time dependent) Casimir function.  
\end{theorem}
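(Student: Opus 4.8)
The plan is to establish the two implications separately, treating the statement as essentially a matter of unwinding the definitions once the one genuinely nontrivial fact — that a Hamiltonian vector field preserves $\pi$ — is in hand. For the forward implication, suppose $F$ is a constant of motion, so $\{F,H\}=0$; I must show that $\mathcal{X}_F=\pi^\sharp\cdot\mathbf{d}F$ is a Hamiltonian infinitesimal symmetry of $(M,\pi,H)$, i.e.\ that it is Hamiltonian and a Poisson infinitesimal symmetry (both $\mathcal{L}_{\mathcal{X}_F}\pi=0$ and $\mathcal{X}_F[H]=0$). That $\mathcal{X}_F$ is Hamiltonian is immediate, with Hamiltonian function $F$; that $\mathcal{X}_F[H]=\{H,F\}=-\{F,H\}=0$ follows at once from the defining relation $\mathcal{X}_H[G]=\{G,H\}$ together with the hypothesis. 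The only substantive point is $\mathcal{L}_{\mathcal{X}_F}\pi=0$, which I would prove by evaluating on exact one-forms: for arbitrary $G,G'\in C^{\infty}(M)$,
\[
(\mathcal{L}_{\mathcal{X}_F}\pi)(\mathbf{d}G,\mathbf{d}G')=\{\{G,G'\},F\}-\{\{G,F\},G'\}-\{G,\{G',F\}\},
\]
and the right-hand side vanishes by the Jacobi identity together with skew-symmetry. (Equivalently, one may argue via the Schouten--Nijenhuis bracket, writing $\mathcal{X}_F=[\pi,F]$ and using $[\pi,\pi]=0$ from the Proposition above.) Hence $\mathcal{X}_F$ is a Poisson infinitesimal symmetry which is moreover Hamiltonian, as required.

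For the converse, let $\xi$ be a Hamiltonian infinitesimal symmetry of $(M,\pi,H)$. By definition there is a function $K$ with $\xi=\pi^\sharp\cdot\mathbf{d}K=\mathcal{X}_K$, and, $\xi$ being a Poisson infinitesimal symmetry, $\xi[H]=0$. Since $\xi[H]=\mathcal{X}_K[H]=\{H,K\}$, this gives $\{K,H\}=0$, so $K$ is a constant of motion and $\xi=\mathcal{X}_K$ is its Hamiltonian vector field, which is the assertion. For the uniqueness clause, if $\mathcal{X}_K=\mathcal{X}_{K'}$ with $K,K'$ constants of motion, then $\pi^\sharp\cdot\mathbf{d}(K-K')=0$, hence $\{G,K-K'\}=\langle\mathbf{d}G,\pi^\sharp\cdot\mathbf{d}(K-K')\rangle=0$ for every $G$, i.e.\ $K-K'$ is a Casimir function; conversely, any Casimir may be added to $K$ without affecting either $\mathcal{X}_K$ or the relation $\{K,H\}=0$, and in the time-dependent framework the same computation shows the ambiguity is a time-dependent Casimir.

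I do not anticipate a real obstacle: the entire content beyond bookkeeping is the identity $\mathcal{L}_{\mathcal{X}_F}\pi=0$, and that is a direct consequence of the Jacobi identity. What does require care is keeping the sign convention $\mathcal{X}_H[G]=\{G,H\}$ straight throughout, and respecting the distinction — emphasized just before the statement — between a Poisson infinitesimal symmetry of $(M,\pi,H)$ and a mere symmetry of the vector field $\mathcal{X}_H$, so that ``Hamiltonian infinitesimal symmetry'' is used in the strong sense (a globally defined generating function, together with $\mathcal{L}_\xi\pi=0$ and $\xi[H]=0$) that is needed for the converse and the uniqueness statement to hold as written.
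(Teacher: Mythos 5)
Your proof is correct and follows essentially the same route as the paper: both directions are handled by unwinding the definitions with the paper's sign convention $\mathcal{X}_H[G]=\{G,H\}$, and the uniqueness clause is settled by pairing $\pi^\sharp\cdot\mathbf{d}(K-K')$ against an arbitrary function to identify the ambiguity as a Casimir. The only difference is cosmetic: where the paper invokes its earlier proposition that (locally) Hamiltonian vector fields are Poisson vector fields, you prove $\mathcal{L}_{\mathcal{X}_F}\pi=0$ directly from the Jacobi identity, which merely makes that cited step self-contained.
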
 
\begin{proof}
   Suppose $F$ is a constant of motion. Then  $ \mathcal{X} _F = \pi ^\sharp \cdot \mathbf{d} F $ is an Hamiltonian vector field, so that $ \mathcal{L} _{ \mathcal{X} _F } \pi = 0 $. Moreover, since $ \mathcal{X}_H [F]= 0 $ , we have that
   \[0=\mathcal{X}_H [F] = \{ F, H \} = - \{ H, F \} = \mathcal{X} _F [H]=0.  \]
  Thus $ \mathcal{X} _F $ is a Hamiltonian infinitesimal symmetry.
  Now suppose that $V $ is a Hamiltonian infinitesimal symmetry of $ (M , \pi , H) $. Since $ V $ is Hamiltonian there is a function $F$ such that $ V = \mathcal{X} _F = \pi ^\sharp \cdot \mathbf{d} F $. Since it is an infinitesimal  symmetry of $ (M , \pi , H) $  we have that 
  \[
      0=\mathcal{X} _F [H]= \{ H , F \} = - \{ F , H \} = \mathcal{X} _H [F]
  \]
  so $F$ is a constant of motion. 
  If $ \tilde F $ is another function that satisfies $\mathcal{X} _{ \tilde F } = V = \mathcal{X} _{ F } $, then 
      \[
         0 =  \mathcal{X} _{ \tilde F } - \mathcal{X} _F = \pi ^\sharp \cdot  (\mathbf{d} (\tilde F - F) )   
      \]
      so $ \tilde F - F $ must be a Casimir of $ \pi $, since if we apply the above to an arbitrary function $G$ we have that 
      \[0 = \left\langle\mathbf{d} G,\pi ^\sharp \cdot  \mathbf{d} (\tilde F - F ) \right \rangle = \pi (\mathbf{d} G, \mathbf{d} (\tilde F - F)) = \{G, \tilde F - F  \} .  
      \]
  \end{proof}

\begin{definition}
	A vector field  $ \mathcal{X} $ on a Poisson  manifold $ ( M , \pi  ) $ is called a {\bf Poisson vector field }  iff $ \mathcal{L} _\mathcal{X} \pi = 0 $.
\end{definition}

In particular it follows that any locally Hamiltonian vector field is Poisson:
\begin{proposition}
	If $ \mathcal{X} $ is a locally Hamiltonian vector field on a Poisson manifold $ (M , \pi) $, then   it is a Poisson vector field.
\end{proposition}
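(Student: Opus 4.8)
The assertion is local: $\mathcal{L}_{\mathcal{X}}\pi$ is a bivector field, and its value at a point depends only on the germ of $\mathcal{X}$ there. So it suffices to show that $\mathcal{L}_{\mathcal{X}}\pi$ vanishes on each open set $U$ of a cover of $M$. By the very definition of a locally Hamiltonian vector field we may take a cover by sets $U$ on each of which $\mathcal{X}|_U = \pi^\sharp\cdot\mathbf{d}H_U = \mathcal{X}_{H_U}$ for some local Hamiltonian $H_U$. Hence the entire content of the proposition is the purely local statement that $\mathcal{L}_{\mathcal{X}_H}\pi = 0$ for a genuinely Hamiltonian vector field $\mathcal{X}_H$, and I would reduce to proving that.

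For this local claim I would run a direct computation in coordinates, which is precisely where the Jacobi identity enters. Using the coordinate expressions recorded above, write $\pi^{ij}=\{x^i,x^j\}$ and $\mathcal{X}_H=\sum_{i,j}\pi^{ij}(\partial_j H)\,\partial_i$, and substitute into the standard component formula $(\mathcal{L}_{\mathcal{X}_H}\pi)^{ij}=\sum_k\big(\mathcal{X}_H^k\,\partial_k\pi^{ij}-\pi^{kj}\,\partial_k\mathcal{X}_H^i-\pi^{ik}\,\partial_k\mathcal{X}_H^j\big)$. Upon expanding, the terms containing second derivatives of $H$ cancel in pairs because $\pi^{ij}$ is skew-symmetric, and the terms with only first derivatives of $H$ regroup, after relabeling summation indices, as $\sum_l(\partial_l H)\big(\sum_k(\pi^{kl}\partial_k\pi^{ij}+\pi^{ki}\partial_k\pi^{jl}+\pi^{kj}\partial_k\pi^{li})\big)$. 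The inner sum is exactly the left-hand side of the Jacobi identity written for the coordinate functions $x^i,x^j,x^l$ (equivalently, the $(i,j,l)$-component of $[\pi,\pi]$), so it vanishes, and the claim follows.

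A more conceptual, coordinate-free alternative uses the Schouten–Nijenhuis bracket already mentioned in the excerpt: with the usual convention a Hamiltonian vector field is $\mathcal{X}_H=[\pi,H]$, viewing the function $H$ as a degree-zero multivector, so that $\mathcal{L}_{\mathcal{X}_H}\pi=[\mathcal{X}_H,\pi]=-[\pi,[\pi,H]]$; the graded Jacobi identity for the Schouten bracket together with $[\pi,\pi]=0$ then forces $[\pi,[\pi,H]]=0$ at once.

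The computation is routine in either form; the only thing requiring care is the bookkeeping — the index antisymmetrization that makes the $H$-second-derivative terms cancel and coalesces the remaining ones into the Jacobi expression in the first approach, or the degree-dependent signs in the graded Jacobi identity in the second. There is no real obstacle here: the proposition is essentially a restatement of the fact that the Jacobi identity for $\pi$ is equivalent to the invariance of $\pi$ under Hamiltonian flows.
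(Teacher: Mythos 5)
Your proposal is correct. Note that the paper does not actually prove this proposition; it simply refers the reader to the literature (Laurent-Gengoux et al.), so you are supplying the argument the paper omits, and both of your routes are the standard ones found there. The localization step is legitimate because $\mathcal{L}_{\mathcal{X}}\pi$ is a tensor field whose value at a point depends only on the germ of $\mathcal{X}$ and $\pi$ near that point, so the statement reduces to $\mathcal{L}_{\mathcal{X}_{H_U}}\pi=0$ on each chart of the cover furnished by the definition of locally Hamiltonian. Your coordinate computation is sound: with $\mathcal{X}_H^i=\pi^{ij}\partial_j H$ the second-derivative terms in $(\mathcal{L}_{\mathcal{X}_H}\pi)^{ij}$ cancel by skew-symmetry of $\pi$, and the remaining first-derivative terms regroup exactly into $\sum_l(\partial_l H)\bigl(\pi^{kl}\partial_k\pi^{ij}+\pi^{ki}\partial_k\pi^{jl}+\pi^{kj}\partial_k\pi^{li}\bigr)$, which vanishes by the Jacobi identity for the coordinate functions. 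The Schouten--Nijenhuis version is even better suited to this paper, since it uses precisely the characterization $[\pi,\pi]=0$ stated earlier in Section 2: writing $\mathcal{X}_H=[\pi,H]$ and $\mathcal{L}_{\mathcal{X}_H}\pi=[\mathcal{X}_H,\pi]$, the graded Jacobi identity gives $[\pi,[\pi,H]]=\pm\tfrac12[[\pi,\pi],H]=0$, with only the degree-dependent signs requiring care. Either argument is complete and could stand in place of the paper's citation.
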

\begin{proof}
	See \cite{laurent-gengoux_poisson_2012}.
\end{proof}
The converse is not true in general. For example, if the Poisson structure is trivial, then any vector field is Poisson, while the only Hamiltonian vector field is the trivial one. In the special case of a symplectic manifold, a vector field is Poisson if and only if it is locally  Hamiltonian (see Proposition \ref{prop:locally_Hamiltonian}).

\begin{theorem}[Weinstein's splitting theorem ]
	Let $ (M , \pi) $ be a Poisson manifold, let $ x \in M $ be an arbitrary point and denote the rank of $ \pi $  at $x$ by $ 2r $. There exists a coordinate neighborhood $U$ of $x$ with coordinates  $ (q ^1  , \ldots q ^r  , p _1 , \ldots, p _r , z ^1  , \ldots z ^s ) $ centered at $x$, such that, on $U$,
	\begin{equation}\label{eqn:splitting_theorem}
		\pi = \sum _{ i = 1 } ^r \frac{ \partial } { \partial q ^i  } \wedge \frac{ \partial } { \partial p _i } + \sum _{ 1 \leq k <l \leq s } \phi ^{ kl } (z) \frac{ \partial } { \partial z ^k  } \wedge \frac{ \partial } { \partial z ^l  },
	\end{equation}
	where the functions $ \phi ^{ k l } $ are smooth functions which depend on $ z = (z ^1  , \ldots , z ^s) $ only, and which vanish when $ z = 0 $. Such local coordinates are called {\bf splitting coordinates}, centered at $x$.
   
    In particular, if there is a neighborhood $V$ of $x$ such that the rank is constant and equal to $ 2r $, then there exists a coordinate neighborhood $U$ of $x$ with coordinates  $ (q ^1  , \ldots q ^r  , p _1 , \ldots, p _r , z ^1  , \ldots z ^s ) $ centered at $x$, such that, on $U$,
    	\begin{equation}
		\pi = \sum _{ i = 1 } ^r \frac{ \partial } { \partial q ^i  } \wedge \frac{ \partial } { \partial p _i } 
	\end{equation}
    	Moreover, $\pi$ is locally of the form above, in terms of arbitrary splitting coordinates on $M$. Such coordinates are called {\bf Darboux coordinates}.

\end{theorem}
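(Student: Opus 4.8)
The plan is to peel off one canonically conjugate pair $(q^1,p_1)$ at a time and to induct on the rank $2r$ of $\pi$ at $x$. The base case $2r=0$ is immediate: any coordinates centered at $x$ will do, the first sum in \eqref{eqn:splitting_theorem} is empty, and the structure functions $\phi^{kl}=\{z^k,z^l\}$ automatically vanish at the origin since $\pi_x=0$. So I assume $2r>0$.

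First I would produce the pair. Since $\pi_x\neq 0$ there is a function $p_1$ defined near $x$ with $\mathcal{X}_{p_1}(x)\neq 0$; picking a hypersurface through $x$ transverse to $\mathcal{X}_{p_1}$ and letting $q^1$ be the flow parameter of $\mathcal{X}_{p_1}$ measured from that hypersurface gives $\{q^1,p_1\}=\mathcal{X}_{p_1}[q^1]=1$ near $x$. Since $\mathbf{d}q^1(\mathcal{X}_{p_1})=1$ while $\mathbf{d}p_1(\mathcal{X}_{p_1})=0$, the forms $\mathbf{d}q^1,\mathbf{d}p_1$ are independent at $x$, and the analogous computation with $\mathcal{X}_{q^1}$ shows the vector fields $\mathcal{X}_{p_1},\mathcal{X}_{q^1}$ are independent at $x$ as well; moreover $[\mathcal{X}_{q^1},\mathcal{X}_{p_1}]=\mathcal{X}_{\{q^1,p_1\}}=\mathcal{X}_1=0$, so they commute. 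By the simultaneous rectification theorem for commuting, pointwise-independent vector fields, I can then complete $(q^1,p_1)$ to coordinates $(q^1,p_1,y^1,\dots,y^{d-2})$ centered at $x$ in which $\mathcal{X}_{p_1}=\partial/\partial q^1$ and $\mathcal{X}_{q^1}=-\partial/\partial p_1$, with the $y^i$ chosen to be common first integrals of the two fields.

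The decoupling step comes next. In these coordinates $\{p_1,y^i\}=\mathcal{X}_{p_1}[y^i]=\partial y^i/\partial q^1=0$ and $\{q^1,y^i\}=-\partial y^i/\partial p_1=0$, so the Jacobi identity $\{\{y^i,y^j\},p_1\}+\{\{y^j,p_1\},y^i\}+\{\{p_1,y^i\},y^j\}=0$ collapses to $\{\{y^i,y^j\},p_1\}=0$, that is $\partial\{y^i,y^j\}/\partial q^1=0$, and likewise $\partial\{y^i,y^j\}/\partial p_1=0$. Hence $\{y^i,y^j\}$ depends only on $y$, the mixed brackets vanish, and
\[ \pi=\frac{\partial}{\partial q^1}\wedge\frac{\partial}{\partial p_1}+\pi',\qquad \pi'=\sum_{i<j}\{y^i,y^j\}(y)\,\frac{\partial}{\partial y^i}\wedge\frac{\partial}{\partial y^j}, \]
where $\pi'$ is a Poisson bivector on the $y$-slice (the Jacobi identity for its structure functions is inherited from that of $\pi$) whose rank at the origin is $2r-2$, by the block form. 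Applying the induction hypothesis to $\pi'$ — through a change of the $y$-coordinates alone, which therefore leaves $q^1,p_1$ untouched — and combining with $(q^1,p_1)$ yields the splitting coordinates on $M$. For the regular statement I would observe that in any splitting coordinates $\operatorname{rank}\pi=2r+\operatorname{rank}(\phi^{kl}(z))$ at each point, so if the rank is constant and equal to $2r$ near $x$ then $\phi^{kl}\equiv 0$ near $z=0$; this gives Darboux coordinates, and shows at the same time that every system of splitting coordinates on a regular Poisson manifold is a Darboux system.

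The hard part is the second paragraph: the construction of $q^1$ with $\{q^1,p_1\}=1$ and, above all, the simultaneous straightening of the two commuting Hamiltonian vector fields so that the transverse coordinates can be taken to be common first integrals of both. That normal form is precisely what forces the mixed brackets $\{q^1,y^i\}$ and $\{p_1,y^i\}$ to vanish and lets the Jacobi identity split the structure; once it is in hand, the remaining steps are routine.
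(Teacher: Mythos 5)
Your proposal is correct: the paper itself gives no proof of this theorem (its ``proof'' is just a citation of \cite{laurent-gengoux_poisson_2012}), and your induction on the rank --- producing a conjugate pair $(q^1,p_1)$ with $\{q^1,p_1\}=1$, simultaneously rectifying the commuting, pointwise independent fields $\mathcal{X}_{q^1},\mathcal{X}_{p_1}$, using the Jacobi identity to make the transverse brackets $\{y^i,y^j\}$ independent of $(q^1,p_1)$, and then inducting on $r$, with the regular/Darboux case read off from $\operatorname{rank}\pi=2r+\operatorname{rank}\bigl(\phi^{kl}(z)\bigr)$ --- is precisely the standard argument given in that cited reference. The only step you gloss is the verification that $(q^1,p_1,y^1,\dots,y^{d-2})$ is genuinely a chart after rectification (pair $\mathbf{d}q^1,\mathbf{d}p_1$ against $\mathcal{X}_{p_1},\mathcal{X}_{q^1}$ and use that the $\mathbf{d}y^i$ annihilate both fields), which you flag yourself and which is routine.
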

\begin{proof}
	See \cite{laurent-gengoux_poisson_2012}.
\end{proof}
\begin{remark}
	For a given point $x$ on a Poisson manifold $M$, splitting coordinates are not unique. The Poisson structure, which is defined in a neighborhood of $ z = 0 $ by the second term of \eqref{eqn:splitting_theorem}, however, is unique up to a Poisson diffeomorphism (see for example \cite{laurent-gengoux_poisson_2012}).
\end{remark}

\begin{proposition} \label{prop:Hamilton's equations I}   Let $ (M , \pi ) $ be a Poisson manifold and let $ x \in M $. Suppose  $\mathcal{X} $ is a locally Hamiltonian vector field. Let $U$ be a neighborhood of $x$ with splitting coordinates  $(\mathbf{q} , \mathbf{p}, \mathbf{z} ) =   (q ^1  , \ldots , q ^r  , p _1 , \ldots , p _r, z ^1  , \ldots , z ^s  ) $, and $ H _U $ such that $ \mathcal{X} = \pi ^\sharp \cdot  \mathbf{d} H _U  = \pi ( \cdot, \mathbf{d} H _U ) $. In these coordinates
	\[
		\mathcal{X} = \begin{bmatrix}[1.3]
		\frac{ \partial H _U  }{ \partial \mathbf{p}  }  \\
		- \frac{ \partial H _U  }{ \partial \mathbf{q}  }\\
		\Phi ^T \frac{ \partial H _U  }{ \partial \mathbf{z}  }
		\end{bmatrix}
	\]
	where $ \Phi $ is the matrix of entries $ [\Phi ]_{ k l } = \tilde \phi ^{ k l }(z) $ (where  $ \tilde \phi ^{ k l }(z)=\phi ^{ k l }(z)$ for $ k<l $, $  \tilde \phi ^{ k l }(z)= -   \phi ^{ lk }(z)$ for $ k> l $ and $  \tilde \phi ^{ k l }(z)=0 $ for $ k = l $).
	
	Thus $ (\mathbf{q}  (t) , \mathbf{p}  (t), z ( t ) ) $ is an integral curve of $ \mathcal{X} $ if an only if {\bf Hamilton's equations} hold:
	\[
		\dot q ^i  = \frac{ \partial H _U  } { \partial p _i } , \quad \dot p _i = -\frac{ \partial H _U  } { \partial q ^i  } , \quad \dot z ^k  = \sum _l \tilde \phi ^{ k l }(z) \frac{ \partial H _U  }{ \partial z ^l  }
	\]
	for $ i = 1 , \ldots , r $, and $ k = 1, \ldots , s $, where the functions $ \tilde \phi ^{ i j } (z) $ depend on the choice of splitting coordinates.
	Moreover, if the rank is locally constant at the point $x$, then the vector field, written  in Darboux coordinates, is
	\[
		\mathcal{X} = \begin{bmatrix}[1.3]
		\frac{ \partial H _U  }{ \partial \mathbf{p}  }  \\
		- \frac{ \partial H _U  }{ \partial \mathbf{q}  }\\
		0
		\end{bmatrix}.
	\]
	and Hamilton's equations take the simpler form
	\[
		\dot q ^i  = \frac{ \partial H _U  } { \partial p _i } , \quad \dot p _i = -\frac{ \partial H _U  } { \partial q ^i  } , \quad \dot z ^k  =0
	\]
	for $ i = 1 , \ldots , r $, and $ k = 1, \ldots , s $.
\end{proposition}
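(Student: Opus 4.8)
The plan is to reduce the statement to the coordinate expression for a (locally) Hamiltonian vector field recorded earlier in this section: in any chart $(x^1,\dots,x^d)$ one has $\mathcal{X}_{H}=\sum_{i,j=1}^{d}\{x^i,x^j\}\,\frac{\partial H}{\partial x^j}\,\frac{\partial}{\partial x^i}$. This is itself a one-line consequence of the definitions, which I would reproduce: pairing $\mathcal{X}=\pi^{\sharp}\cdot\mathbf{d}H_U$ with the differential $\mathbf{d}x^{k}$ gives $\langle\mathbf{d}x^{k},\mathcal{X}\rangle=\pi(\mathbf{d}x^{k},\mathbf{d}H_U)=\sum_{j}\{x^{k},x^{j}\}\,\partial H_U/\partial x^{j}$, and this is exactly the $\partial/\partial x^{k}$-component of $\mathcal{X}$. (Shrinking $U$ if necessary, we may assume the splitting-coordinate chart and the domain of the local Hamiltonian $H_U$ coincide, so that $\mathcal{X}=\pi^{\sharp}\cdot\mathbf{d}H_U$ holds throughout $U$.)

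The next step is to read off the structure functions $\{x^{i},x^{j}\}$ directly from the splitting form \eqref{eqn:splitting_theorem} of $\pi$, with coordinates ordered as $(q^{1},\dots,q^{r},p_{1},\dots,p_{r},z^{1},\dots,z^{s})$. The term $\sum_{i}\partial_{q^{i}}\wedge\partial_{p_{i}}$ produces $\{q^{i},p_{j}\}=\delta^{i}_{j}$, hence $\{p_{j},q^{i}\}=-\delta^{i}_{j}$, and $\{q^{i},q^{j}\}=\{p_{i},p_{j}\}=0$; the term $\sum_{k<l}\phi^{kl}(z)\,\partial_{z^{k}}\wedge\partial_{z^{l}}$ produces $\{z^{k},z^{l}\}=\tilde{\phi}^{kl}(z)$ with precisely the antisymmetrized coefficients defined in the statement; and there are no cross brackets, $\{q^{i},z^{k}\}=\{p_{i},z^{k}\}=0$. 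Substituting these into the component formula block by block gives the $q$-components $\partial H_U/\partial p_{i}$, the $p$-components $-\partial H_U/\partial q^{i}$, and the $z$-components $\sum_{l}\tilde{\phi}^{kl}(z)\,\partial H_U/\partial z^{l}$, i.e. the matrix $\Phi$ applied to $\partial H_U/\partial\mathbf{z}$. This is the claimed column vector; the one point requiring attention is keeping track of the sign in the antisymmetrization $\tilde{\phi}^{kl}=-\tilde{\phi}^{lk}$ when matching the $z$-block with the matrix $\Phi$.

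The integral-curve characterization is then immediate: a curve $(\mathbf{q}(t),\mathbf{p}(t),\mathbf{z}(t))$ is an integral curve of $\mathcal{X}$ iff its velocity equals $\mathcal{X}$ along it, which in these coordinates is exactly the displayed system of ODEs — Hamilton's equations. Finally, when the rank is locally constant at $x$, the second half of Weinstein's splitting theorem supplies Darboux coordinates in which $\pi=\sum_{i}\partial_{q^{i}}\wedge\partial_{p_{i}}$ with no $z$-term, so all $\phi^{kl}$, hence all $\tilde{\phi}^{kl}$, vanish identically; the computation above then collapses the last block to $0$ and the evolution equations to $\dot{z}^{k}=0$. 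I do not expect a genuine obstacle: the substance is carried by Weinstein's theorem, which we may invoke, and the remainder is a routine unwinding of the pairing defining $\pi^{\sharp}$; the one spot to be careful is the sign/index conventions relating $\tilde{\phi}^{kl}$ to the matrix $\Phi$.
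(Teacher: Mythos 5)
Your proof is correct and is essentially the standard computation: the paper itself gives no argument for this proposition (it defers to the cited reference \cite{laurent-gengoux_poisson_2012}), and that proof proceeds exactly as you do, reading the structure functions $\{q^i,p_j\}=\delta^i_j$, $\{z^k,z^l\}=\tilde\phi^{kl}(z)$, etc., off the splitting form and substituting them into the coordinate expression for $\pi^\sharp\cdot\mathbf{d}H_U$, with the constant-rank case handled by the Darboux part of Weinstein's theorem. On the point you flag: your $z$-block comes out as $\Phi\,\frac{\partial H_U}{\partial\mathbf{z}}$, which agrees with the displayed Hamilton equations $\dot z^k=\sum_l\tilde\phi^{kl}(z)\,\frac{\partial H_U}{\partial z^l}$; the $\Phi^T$ appearing in the proposition's vector display differs from this by a sign (since $\Phi$ is antisymmetric), so that discrepancy sits in the statement's display rather than in your computation.
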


\begin{proof}
    See \cite{laurent-gengoux_poisson_2012}.
\end{proof}

\subsection{Symplectic structures}


We now give a brief account of symplectic structures, for more details see \cite{abraham_foundations_1978,marsden_introduction_1999}.
\begin{definition}
   A {\bf symplectic form } (or {\bf symplectic structure})  on a manifold $M$ is a nondegenerate, closed two-form $ \omega $ on $M$. A  {\bf symplectic manifold } $ (M , \omega) $ is a manifold $M$ together with a symplectic form $ \omega $ on $M$.    
\end{definition} 
\begin{definition} Let $ ( M , \omega ) $ be a symplectic manifold, and $\mathcal{X}$ a vector field on $M$. If there is a smooth function $ H : M \to \mathbb{R}  $ such that
	\[ \mathbf{i} _\mathcal{X} \omega = \mathbf{d} H\]
	we say that $\mathcal{X}$ is a  {\bf Hamiltonian vector field }.
\end{definition}

\begin{definition} Let $ (M , \omega) $ be a symplectic manifold. The {\bf Poisson bracket associated with } $ \omega $ is defined by
    $  \{ F , G \}  = \omega( \mathcal{X} _F  , \mathcal{X}  _G ).$ \end{definition} 
From the above definition it follows that, associated to $ \omega $ there is a Poisson bivector $ \pi $, so that a symplectic structure is a regular Poisson structure of maximal rank.  The basic link between the Poisson bivector $ \pi $ and the symplectic form $ \omega $ is that they are associated to the same Poisson bracket
\[
    \{ F, H \} = \pi (\mathbf{d} F , \mathbf{d} H) = \omega (\mathcal{X}  _F ,\mathcal{X} _H), 
\]
that is $\langle \mathbf{d} F , \pi ^\sharp \cdot \mathbf{d} H \rangle = \langle \mathbf{d} F, \mathcal{X}  _H \rangle$. On the other hand, by definition, $ \omega (\mathcal{X}_H,v)   = \mathbf{d} H \cdot v $, and so $  \langle \omega ^\flat \cdot  \mathcal{X}  _H , v \rangle = \langle \mathbf{d} H , v\rangle $, whence 
\[
    \mathcal{X}  _H = \omega ^\sharp \cdot \mathbf{d} H 
\]
since $ \omega ^\sharp = (\omega ^\flat) ^{ - 1 } $ (see \cite{abraham_foundations_1978}). Thus $ \pi ^\sharp \cdot  \mathbf{d} H = \omega ^\sharp \cdot \mathbf{d} H $, for all $H$, and thus 
\[\pi  ^\sharp = \omega ^\sharp.\]

\begin{definition}
	A vector field on a symplectic manifold $ ( M , \omega ) $ is called {\bf locally Hamiltonian } if for every $ x \in M $ there is a neighborhood $U$ of $x$ and a smooth function $ H _U $ defined on this neighborhood such that $\mathbf{i} _\mathcal{X} \omega = \mathbf{d} H  _U $, that is $ \mathcal{X} $ is Hamiltonian in $U$ with the locally defined Hamiltonian $ H _U $.
\end{definition}
If the manifold $M$ has zero first group of real homology $H^1(M,\mathbb R)$, then all local Hamiltonian vector fields are globally Hamiltonian \cite{fomenko_integrability_1988}.

\begin{proposition}\label{prop:locally_Hamiltonian}
	The following statements are equivalent:
	\begin{enumerate}[(i)]
		\item $\mathcal{X}$ is locally Hamiltonian.
		\item $ \mathbf{d} (\mathbf{i} _\mathcal{X} \omega)=0 $, that is $ \mathbf{i} _\mathcal{X} \omega $  is closed.
		\item $ \mathcal{L} _\mathcal{X} \omega = 0 $.
	\end{enumerate}
\end{proposition}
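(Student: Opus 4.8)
The plan is to prove the equivalence of the three statements by a short cyclic argument $(i)\Rightarrow(ii)\Rightarrow(iii)\Rightarrow(i)$, relying on the fact that $\omega$ is closed together with Cartan's magic formula $\mathcal{L}_{\mathcal{X}}=\mathbf{d}\,\mathbf{i}_{\mathcal{X}}+\mathbf{i}_{\mathcal{X}}\,\mathbf{d}$.

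For $(i)\Rightarrow(ii)$: if $\mathcal{X}$ is locally Hamiltonian, then around each point $x$ there is a neighborhood $U$ and a function $H_U$ with $\mathbf{i}_{\mathcal{X}}\omega=\mathbf{d} H_U$ on $U$; hence $\mathbf{d}(\mathbf{i}_{\mathcal{X}}\omega)=\mathbf{d}\,\mathbf{d} H_U=0$ on $U$, and since this holds near every point, $\mathbf{i}_{\mathcal{X}}\omega$ is closed on all of $M$. For $(ii)\Rightarrow(iii)$: by Cartan's formula, $\mathcal{L}_{\mathcal{X}}\omega=\mathbf{d}(\mathbf{i}_{\mathcal{X}}\omega)+\mathbf{i}_{\mathcal{X}}(\mathbf{d}\omega)$; the first term vanishes by $(ii)$ and the second vanishes because $\omega$ is closed, so $\mathcal{L}_{\mathcal{X}}\omega=0$. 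For $(iii)\Rightarrow(i)$: running Cartan's formula backwards, $0=\mathcal{L}_{\mathcal{X}}\omega=\mathbf{d}(\mathbf{i}_{\mathcal{X}}\omega)$ again using $\mathbf{d}\omega=0$, so $\mathbf{i}_{\mathcal{X}}\omega$ is a closed $1$-form; by the Poincar\'e lemma it is locally exact, i.e.\ on a (contractible) neighborhood $U$ of each point there is $H_U$ with $\mathbf{i}_{\mathcal{X}}\omega=\mathbf{d} H_U$, which is exactly the definition of locally Hamiltonian.

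There is no serious obstacle here — the only point requiring a little care is the last implication, where one must invoke the Poincar\'e lemma to pass from \emph{closed} to \emph{locally exact}, and observe that this is precisely why the definition of ``locally Hamiltonian'' is phrased with local potentials rather than a global one (the distinction being governed by $H^1(M,\mathbb{R})$, as already noted above). One should also remark that the equivalence $(i)\Leftrightarrow(ii)\Leftrightarrow(iii)$ uses the nondegeneracy of $\omega$ nowhere; it is only closedness that matters, and nondegeneracy enters elsewhere (e.g.\ to guarantee that $\mathcal{X}$ is determined by $H_U$). It may be worth stating the three implications in exactly the cyclic order above so that Cartan's formula is the single computational tool, invoked twice in opposite directions.
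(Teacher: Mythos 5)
Your proof is correct: the cyclic argument via Cartan's magic formula together with the Poincar\'e lemma for the implication $(iii)\Rightarrow(i)$ is exactly the standard argument, and the paper itself gives no proof here, simply citing \cite{abraham_foundations_1978}, where the same reasoning appears. Your closing remarks (that only closedness of $\omega$ is used, and that the Poincar\'e lemma is what makes ``locally Hamiltonian'' the right notion) are accurate and in the spirit of the paper's surrounding discussion of $H^1(M,\mathbb{R})$.
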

\begin{proof}
    See \cite{abraham_foundations_1978}.
\end{proof}
\begin{remark}
	Another equivalent way to define locally Hamiltonian vector fields is the following. A vector field is locally Hamiltonian if there exists a closed 1-form $\alpha$  such that $ \mathcal{X} = \omega ^\sharp \cdot  \alpha $. In fact, if $ \mathcal{X} = \omega ^\sharp \cdot \alpha $, then $ \mathbf{i} _\mathcal{X} \omega = \alpha  $. So saying that $ \alpha $ is closed is equivalent to saying that $\mathbf{i} _\mathcal{X} \omega $ is closed.
\end{remark}

\begin{theorem}[Darboux' Theorem for the symplectic case]
	Let $ (M , \omega) $ be a symplectic manifold of dimension $ 2n $, then for each point $ x \in M $ there exists a neighborhood $U$ of  $x$ with coordinates $ (q ^1  , \ldots , q ^n  , p _1 , \ldots , p _n) $ (called {\bf canonical coordinates })  such that
	\[
		\omega|_U  = \sum _{ i = 1 } ^n \mathbf{d} q ^i  \wedge \mathbf{d} p _i.
	\]
\end{theorem}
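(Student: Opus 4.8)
The quickest route is to deduce the theorem from Weinstein's splitting theorem, which is already available above. Indeed, as established in the preceding subsection, a symplectic manifold $(M,\omega)$ of dimension $2n$ is a regular Poisson manifold whose Poisson bivector $\pi$ satisfies $\pi^\sharp = \omega^\sharp$; since $\omega$ is nondegenerate, $\pi^\sharp$ is an isomorphism at every point, so $\pi$ has constant rank $2n$. Applying the constant–rank clause of Weinstein's splitting theorem at $x$, with $2r = 2n$ and hence $s = \dim M - 2r = 0$, yields a neighborhood $U$ of $x$ with coordinates $(q^1,\dots,q^n,p_1,\dots,p_n)$ on which $\pi = \sum_{i=1}^n \tfrac{\partial}{\partial q^i}\wedge\tfrac{\partial}{\partial p_i}$. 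Inverting $\pi^\sharp = \omega^\sharp$ in these coordinates (a one–line linear–algebra check) converts this into $\omega|_U = \sum_{i=1}^n \mathbf{d}q^i\wedge\mathbf{d}p_i$, which is the assertion.

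For completeness I would also record the classical self-contained argument, Moser's trick. First, by linear algebra for skew–symmetric nondegenerate forms, choose a chart around $x$ identifying $x$ with $0\in\mathbb{R}^{2n}$ and a linear change of coordinates so that, \emph{at the origin only}, $\omega$ equals the constant form $\omega_0 := \sum_{i}\mathbf{d}q^i\wedge\mathbf{d}p_i$; write $\omega_1$ for the given form $\omega$ in this chart. Set $\omega_t := \omega_0 + t(\omega_1-\omega_0)$ for $t\in[0,1]$. Each $\omega_t$ is closed, and since $\omega_0$ and $\omega_1$ coincide at $0$, every $\omega_t$ agrees with $\omega_0$ at $0$ and is therefore nondegenerate on a common neighborhood of $0$ for all $t\in[0,1]$ (compactness of $[0,1]$ plus openness of nondegeneracy). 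The two-form $\omega_1-\omega_0$ is closed and vanishes at $0$, so the Poincar\'e lemma, used with its explicit homotopy operator, gives a $1$-form $\sigma$ with $\mathbf{d}\sigma = \omega_1-\omega_0$ and $\sigma(0)=0$.

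Now define the time-dependent vector field $X_t$ by $\mathbf{i}_{X_t}\omega_t = -\sigma$; this is well posed where $\omega_t$ is nondegenerate, and $X_t(0)=0$ since $\sigma(0)=0$. Because $X_t$ vanishes at $0$, its flow $\phi_t$ exists for all $t\in[0,1]$ on a (possibly smaller) neighborhood of $0$ and fixes $0$. Using $\mathcal{L}_{X_t}\omega_t = \mathbf{d}\,\mathbf{i}_{X_t}\omega_t$ (as $\omega_t$ is closed) and Cartan's formula for the derivative of a pullback along a time-dependent flow,
\[
\frac{d}{dt}\big(\phi_t^\ast\omega_t\big) = \phi_t^\ast\!\left(\mathcal{L}_{X_t}\omega_t + \tfrac{d}{dt}\omega_t\right) = \phi_t^\ast\!\left(\mathbf{d}\,\mathbf{i}_{X_t}\omega_t + (\omega_1-\omega_0)\right) = \phi_t^\ast\!\left(-\mathbf{d}\sigma + \mathbf{d}\sigma\right) = 0,
\]
so $\phi_1^\ast\omega = \phi_1^\ast\omega_1 = \phi_0^\ast\omega_0 = \omega_0$. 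Composing the original chart with $\phi_1$ then produces coordinates in which $\omega = \sum_i \mathbf{d}q^i\wedge\mathbf{d}p_i$. The only genuinely delicate point is the bookkeeping of domains: one must arrange nondegeneracy of all $\omega_t$ and existence of the flow up to time $1$ on a single neighborhood of $x$, which is precisely where compactness of $[0,1]$ and the vanishing $X_t(0)=0$ enter; the rest is the routine homotopy computation above. An alternative that bypasses Moser's trick is induction on $n$: pick $q^1$ with $\mathbf{d}q^1(x)\neq 0$, so $\mathcal{X}_{q^1}$ is nonvanishing near $x$; choose $p_1$ with $\{q^1,p_1\}=1$ near $x$ (possible since $\mathcal{X}_{q^1}$ is nonvanishing); observe that $[\mathcal{X}_{q^1},\mathcal{X}_{p_1}]$ equals, up to sign, $\mathcal{X}_{\{q^1,p_1\}}=\mathcal{X}_1 = 0$, so these two independent vector fields commute and can be rectified simultaneously to $\tfrac{\partial}{\partial p_1}$ and $-\tfrac{\partial}{\partial q^1}$; then restrict to the symplectic orthogonal of their span, a symplectic submanifold of dimension $2n-2$, and apply the inductive hypothesis.
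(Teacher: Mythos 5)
Your argument is correct, but note that the paper does not actually prove this theorem: its ``proof'' is a citation to Abraham--Marsden, so any real argument you give is necessarily a different route. Your first reduction, via the constant-rank clause of Weinstein's splitting theorem with $s=0$, is legitimate inside this paper's framework (the splitting theorem is stated earlier), and the sign bookkeeping in your ``one-line linear-algebra check'' does work out under the paper's convention $\pi^\sharp=\omega^\sharp$: the bivector $\sum_{i}\frac{\partial}{\partial q^i}\wedge\frac{\partial}{\partial p_i}$ is precisely the inverse of $\sum_{i}\mathbf{d}q^i\wedge\mathbf{d}p_i$. The one caveat is that the splitting theorem is itself quoted without proof, and in most references it is proved by a Darboux-type argument, so this route is no more self-contained than the paper's citation. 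Your Moser-trick proof, by contrast, is genuinely self-contained and is carried out correctly: the homotopy-operator primitive $\sigma$ does vanish at the origin, $X_t(0)=0$ ensures the flow exists up to time $1$ on a common neighborhood, and the computation giving $\frac{d}{dt}\bigl(\phi_t^\ast\omega_t\bigr)=0$ is right, so $\phi_1^\ast\omega=\omega_0$ as claimed. The closing induction sketch is the other standard alternative, though as written it glosses over why the common level set of $q^1,p_1$ (equivalently the symplectic orthogonal of the span of the two commuting Hamiltonian fields) is a symplectic submanifold and how the inductive coordinates extend to a full chart; that step would need to be filled in if you kept it. In terms of trade-offs: the Weinstein reduction is short and matches the paper's expository, citation-based style, while Moser's trick buys the reader an actual proof at the cost of the domain bookkeeping you correctly flag as the only delicate point.
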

\begin{proof}
	See \cite{abraham_foundations_1978}.
\end{proof}
\begin{proposition}\label{prop:Hamilton's equations}   Let $ (M , \omega) $ be a symplectic manifold and let $ x \in M $, and let $\mathcal{X} $ be a locally Hamiltonian vector field. Let $U$ be a neighborhood of $x$ with canonical coordinates  
    $(\mathbf{q} , \mathbf{p}) =( q ^1  , \ldots , q ^n  , p _1 , \ldots , p _n) $
, and $ H _U $ such that 
   $ \mathbf{i} _\mathcal{X} \omega  = \mathbf{d} H _U $. In these coordinates
	\[
		\mathcal{X} = \begin{bmatrix}[1.3]
		\frac{ \partial H _U  }{ \partial \mathbf{p}  }  \\
		- \frac{ \partial H _U  }{ \partial \mathbf{q}  }\\
		\end{bmatrix} = \mathbb{J}  \nabla H _U
	\]
	with
	\[
		\mathbb{J}  = \begin{bmatrix}
		\mathbf{0}  & \mathbf{1}  \\
		{\bf - 1 }  & \mathbf{0}
		\end{bmatrix} \mbox{ and }
		\nabla H _U =  \begin{bmatrix}[1.3]
		\frac{ \partial H _U  }{ \partial \mathbf{q}  }  \\
		\frac{ \partial H _U  }{ \partial \mathbf{p}  }\\
		\end{bmatrix}
	\]
	where  $\mathbf{1} $ and $ \mathbf{0} $ define the $ n \times n $   identity and zero matrix, respectively.

	Thus $ (\mathbf{q}  (t) , \mathbf{p}  (t)) $ is an integral curve of $ \mathcal{X} $ if an only if {\bf Hamilton's equations} hold:
	\[
		\dot q ^i  = \frac{ \partial H _U  } { \partial p _i } , \quad \dot p _i = -\frac{ \partial H _U  } { \partial q ^i  }
	\]
	for $ i = 1 , \ldots , n $.
\end{proposition}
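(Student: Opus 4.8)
The plan is to carry out a direct computation in the canonical coordinates supplied by Darboux' Theorem and then simply read off the differential equation satisfied by integral curves. Since $\mathcal{X}$ is locally Hamiltonian, by definition there is a neighborhood of $x$ on which $\mathbf{i}_\mathcal{X}\omega = \mathbf{d} H_U$; after shrinking $U$ if necessary we may assume that $U$ also carries canonical coordinates $(\mathbf{q},\mathbf{p}) = (q^1,\ldots,q^n,p_1,\ldots,p_n)$ with $\omega|_U = \sum_{i=1}^n \mathbf{d} q^i \wedge \mathbf{d} p_i$. Write $\mathcal{X} = \sum_{i=1}^n \bigl(a^i \tfrac{\partial}{\partial q^i} + b_i \tfrac{\partial}{\partial p_i}\bigr)$ for smooth functions $a^i,b_i$ on $U$, so that $\mathbf{i}_\mathcal{X}\mathbf{d} q^i = a^i$ and $\mathbf{i}_\mathcal{X}\mathbf{d} p_i = b_i$.

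Next I would compute $\mathbf{i}_\mathcal{X}\omega$ using the fact that the interior product is an antiderivation on forms: for each $i$ one has $\mathbf{i}_\mathcal{X}(\mathbf{d} q^i \wedge \mathbf{d} p_i) = (\mathbf{i}_\mathcal{X}\mathbf{d} q^i)\,\mathbf{d} p_i - (\mathbf{i}_\mathcal{X}\mathbf{d} p_i)\,\mathbf{d} q^i = a^i\,\mathbf{d} p_i - b_i\,\mathbf{d} q^i$, hence $\mathbf{i}_\mathcal{X}\omega = \sum_{i=1}^n\bigl(a^i\,\mathbf{d} p_i - b_i\,\mathbf{d} q^i\bigr)$. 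On the other hand $\mathbf{d} H_U = \sum_{i=1}^n\bigl(\tfrac{\partial H_U}{\partial q^i}\,\mathbf{d} q^i + \tfrac{\partial H_U}{\partial p_i}\,\mathbf{d} p_i\bigr)$. Since $\{\mathbf{d} q^i,\mathbf{d} p_i\}_{i=1}^n$ is a local coframe, equating the two expressions coefficient by coefficient forces $a^i = \partial H_U/\partial p_i$ and $b_i = -\partial H_U/\partial q^i$ for every $i$. This is precisely the asserted column-vector form $\mathcal{X} = [\,\partial H_U/\partial\mathbf{p}\,;\,-\partial H_U/\partial\mathbf{q}\,]$, which equals $\mathbb{J}\nabla H_U$ for the matrix $\mathbb{J}$ and the gradient $\nabla H_U$ as defined in the statement.

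Finally, a curve $t\mapsto(\mathbf{q}(t),\mathbf{p}(t))$ in $U$ is by definition an integral curve of $\mathcal{X}$ if and only if $\dot q^i(t) = a^i(\mathbf{q}(t),\mathbf{p}(t))$ and $\dot p_i(t) = b_i(\mathbf{q}(t),\mathbf{p}(t))$ for all $i$; substituting the values of $a^i$ and $b_i$ just obtained yields exactly Hamilton's equations $\dot q^i = \partial H_U/\partial p_i$, $\dot p_i = -\partial H_U/\partial q^i$. As an alternative route, one may observe that in the symplectic (nondegenerate) case Darboux coordinates are splitting coordinates with $s=0$, so the statement is the special case of Proposition \ref{prop:Hamilton's equations I}. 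Either way there is no genuine obstacle here: the computation is routine, and the only point requiring care is the sign produced by the antiderivation identity together with the chosen orientation convention $\omega = \sum \mathbf{d} q^i \wedge \mathbf{d} p_i$ (with the opposite convention $\omega = \sum \mathbf{d} p_i \wedge \mathbf{d} q^i$ the signs of the two blocks of $\mathcal{X}$ would be interchanged), and keeping the placement of upper and lower indices consistent throughout.
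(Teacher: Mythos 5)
Your computation is correct and is essentially the paper's own argument: the paper simply cites Abraham--Marsden for this standard fact, and the underlying verification there (expand $\mathbf{i}_\mathcal{X}\omega$ in Darboux coordinates via the antiderivation rule, compare coefficients with $\mathbf{d} H_U$, then read off the integral-curve equations) is exactly what you carried out. Your alternative remark, that this is the $s=0$ case of Proposition \ref{prop:Hamilton's equations I}, is also valid but offers no additional content here.
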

\begin{proof}See \cite{abraham_foundations_1978}.

\end{proof}

\subsection{BiHamiltonian structures}
Here we briefly recall some of the most important facts about biHamiltonian structures. 
\begin{definition}
   Let $ \pi _1 $ and $ \pi _2 $ be two Poisson bivector fields defined on a manifold $M$. We say that $ \pi _1 $ and $ \pi _2 $ are {\bf compatible } if their Schouten-Nijenhuis  bracket is zero, that is if $ [\pi _1 , \pi _2 ] = 0 $. The triple $ (M, \pi _1 , \pi _2) $ is called a {\bf biHamiltonian manifold}.  
\end{definition}
We now recall the local coordinate representations of the pull-back of two forms and bivectors, and of the push-forward of vector fields. These expressions are useful in  some of the computations done in the following sections. 
Let  $M$ and $N$ be manifolds, $ f: M \to N $ be a smooth map, and let  $ \rho $ be a two-form on $N$.  Recall that $ f ^\ast \rho   $ denotes the pull-back of $ \rho   $ by  $ f $,  that in local coordinates takes the form
\[
	(f ^\ast \rho) _{ jk } =  \sum _{ rs }  (\rho _{ rs } \circ f ) \left( \frac{ \partial f ^r } { \partial x ^j } \right) \left( \frac{ \partial f ^s } { \partial x ^k }  \right).
\]
Now suppose $ f $ is a diffeomorphism, and let $ \pi $ be a bivector field on $N$, then  the pull-back of $ \pi  $ by  $ f $ in local coordinates is 
\begin{equation}\label{eqn:pull_back_bivector}
	(f  ^\ast \pi) ^{ jk } = \sum _{ rs } \left( \frac{ \partial (f ^{ - 1 }) ^j } { \partial X ^r } \circ f \right) \left(  \frac{ \partial (f ^{ - 1 }) ^k } { \partial X ^s } \circ f \right) \pi ^{ rs } \circ f .
\end{equation}

Suppose  $\mathcal{X} $ is a vector field on $M$. Then the pushforward  $ f _\ast \mathcal{X} $ of $ \mathcal{X} $ by  $ f $,  in local coordinates, takes the form
\[
	(f _\ast \mathcal{X}) ^j  = \sum _k \left( \frac{ \partial f ^j } { \partial x ^k } \circ f ^{ - 1 } \right) (\mathcal{X} ^k \circ f ^{ - 1 }).
\]
\begin{proposition}\label{prop:diffeo_poisson}
	Let  $M$ be a manifold and let $ ( N , \pi ) $ be a Poisson manifold. Suppose $f:M \to N$ is a diffeomorphism. Then  $f ^\ast  \pi  $ is a Poisson tensor on M.
\end{proposition}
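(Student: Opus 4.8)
The plan is to transport the Poisson structure from $N$ to $M$ by pulling back functions, verify the axioms by naturality, and then check that the bivector attached to the transported bracket is exactly $f^{\ast}\pi$ as given by the coordinate formula \eqref{eqn:pull_back_bivector}. Since $f$ is a diffeomorphism, $f^{\ast}\colon C^{\infty}(N)\to C^{\infty}(M)$, $f^{\ast}g = g\circ f$, is an isomorphism of commutative associative algebras with inverse $(f^{-1})^{\ast}$. I would define
\[
\{F,G\}_{M} := f^{\ast}\{(f^{-1})^{\ast}F,(f^{-1})^{\ast}G\}_{N} = \{F\circ f^{-1},\,G\circ f^{-1}\}_{N}\circ f, \qquad F,G\in C^{\infty}(M),
\]
where $\{\cdot,\cdot\}_{N}$ is the Poisson bracket of $\pi$. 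This operation is visibly $\mathbb{R}$-bilinear and skew-symmetric.

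Next I would check the Leibniz and Jacobi identities. Writing $\tilde F=(f^{-1})^{\ast}F$ and so on, the Leibniz rule follows because $f^{\ast}$ and $(f^{-1})^{\ast}$ are algebra homomorphisms:
\[
\{F,GH\}_{M} = f^{\ast}\{\tilde F,\tilde G\tilde H\}_{N} = f^{\ast}\bigl(\{\tilde F,\tilde G\}_{N}\tilde H + \tilde G\{\tilde F,\tilde H\}_{N}\bigr) = \{F,G\}_{M}H + G\{F,H\}_{M},
\]
using $f^{\ast}(\tilde G\tilde H)=GH$. Likewise, the Jacobiator of $\{\cdot,\cdot\}_{M}$ equals $f^{\ast}$ applied to the Jacobiator of $\{\cdot,\cdot\}_{N}$ evaluated on $\tilde F,\tilde G,\tilde H$, which vanishes since $\pi$ is Poisson. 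Hence $\{\cdot,\cdot\}_{M}$ is a Poisson bracket on $M$.

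Finally I would identify its bivector with $f^{\ast}\pi$. Working in local coordinates $x=(x^{1},\dots,x^{d})$ on $M$ and $X=(X^{1},\dots,X^{d})$ on $N$, the chain rule gives $\partial(F\circ f^{-1})/\partial X^{r} = \sum_{j}(\partial F/\partial x^{j}\circ f^{-1})\,\partial(f^{-1})^{j}/\partial X^{r}$, and similarly for $G$. Substituting this into $\{F\circ f^{-1},G\circ f^{-1}\}_{N} = \sum_{rs}\pi^{rs}\,\partial_{r}(F\circ f^{-1})\,\partial_{s}(G\circ f^{-1})$ and composing with $f$ yields
\[
\{F,G\}_{M} = \sum_{j,k}\Bigl(\sum_{r,s}\bigl(\tfrac{\partial(f^{-1})^{j}}{\partial X^{r}}\circ f\bigr)\bigl(\tfrac{\partial(f^{-1})^{k}}{\partial X^{s}}\circ f\bigr)(\pi^{rs}\circ f)\Bigr)\frac{\partial F}{\partial x^{j}}\frac{\partial G}{\partial x^{k}} = \sum_{j,k}(f^{\ast}\pi)^{jk}\,\frac{\partial F}{\partial x^{j}}\frac{\partial G}{\partial x^{k}},
\]
which is precisely $(f^{\ast}\pi)(\mathbf{d}F,\mathbf{d}G)$. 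Thus $f^{\ast}\pi$ is the Poisson bivector of $\{\cdot,\cdot\}_{M}$, so it is a Poisson tensor on $M$.

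There is no genuine obstacle here; the only point needing care is the bookkeeping that reconciles the intrinsic pullback of the bracket with the coordinate expression \eqref{eqn:pull_back_bivector} for the pullback of the bivector — i.e. that the two notions of "pullback" agree. Alternatively, and more briefly, one may invoke the Proposition characterizing Poisson bivectors by $[\pi,\pi]=0$ together with the naturality of the Schouten–Nijenhuis bracket under diffeomorphisms, $f^{\ast}[\pi,\pi]=[f^{\ast}\pi,f^{\ast}\pi]$: since $[\pi,\pi]=0$ and $f^{\ast}$ is linear, $[f^{\ast}\pi,f^{\ast}\pi]=0$, hence $f^{\ast}\pi$ is Poisson.
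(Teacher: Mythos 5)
Your proposal is correct. Your main argument, however, takes a different and more elementary route than the paper: you transport the bracket to $M$ via the algebra isomorphism $f^{\ast}$, verify skew-symmetry, Leibniz and Jacobi by naturality, and then do the chain-rule bookkeeping to show that the bivector of the transported bracket is exactly the coordinate pullback $(f^{\ast}\pi)^{jk}$ of \eqref{eqn:pull_back_bivector}. That buys you an explicit, self-contained verification that needs nothing beyond the definition of a Poisson bracket, and it makes transparent the (worth-checking) point that the intrinsic pullback of the bracket and the coordinate pullback of the bivector agree. The paper instead proves the proposition in one line by the route you only sketch at the end: it invokes the earlier Proposition that $\pi$ is Poisson iff $[\pi,\pi]=0$ together with the naturality of the Schouten--Nijenhuis bracket, $f^{\ast}[\pi,\pi]=[f^{\ast}\pi,f^{\ast}\pi]$, so $[f^{\ast}\pi,f^{\ast}\pi]=0$. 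The paper's argument is shorter and scales directly to the compatibility statement of Corollary \ref{cor:compatible_poisson}, while yours avoids appealing to the Schouten calculus at the cost of the coordinate computation; your closing alternative is precisely the paper's proof.
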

\begin{proof}
	Since the  bracket has the following property (see \cite{vaisman_lectures_1994}):
	\[
		f ^\ast [\pi , \pi ] = [ f ^\ast  \pi , f ^\ast \pi ].
	\]
	it follows that  $ [\pi , \pi ] = 0 $ implies $ [ f ^\ast  \pi , f ^\ast \pi  ]=0 $.
\end{proof}

\begin{corollary}  \label{cor:compatible_poisson}  
  Let 	 $M$ be a manifold and let $ ( N , \pi _1 , \pi _2  ) $ be a biHamiltonian manifold, that is $ \pi _1 $ and $ \pi _2 $ are compatible. Suppose $f:M \to N$ is a diffeomorphism. Then   $ f ^\ast \pi _1 $ and $ f ^\ast \pi _2 $ are compatible and thus $ (M , f ^\ast \pi _1 ,  f^\ast \pi _2) $ is a biHamiltonian manifold.   
\end{corollary}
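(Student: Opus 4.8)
The plan is to reduce everything to Proposition \ref{prop:diffeo_poisson} together with the naturality of the Schouten--Nijenhuis bracket under pullback by a diffeomorphism. First I would observe that, since $f\colon M\to N$ is a diffeomorphism and $\pi_1,\pi_2$ are Poisson tensors on $N$, Proposition \ref{prop:diffeo_poisson} already gives that $f^\ast\pi_1$ and $f^\ast\pi_2$ are Poisson tensors on $M$; in particular $[f^\ast\pi_1,f^\ast\pi_1]=0$ and $[f^\ast\pi_2,f^\ast\pi_2]=0$. Hence it only remains to verify the mixed condition $[f^\ast\pi_1,f^\ast\pi_2]=0$, and then the conclusion that $(M,f^\ast\pi_1,f^\ast\pi_2)$ is a biHamiltonian manifold is immediate from the definition.

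The key step is the identity
\[
 f^\ast[\pi_1,\pi_2]=[f^\ast\pi_1,f^\ast\pi_2],
\]
which is the bilinear version of the property of the Schouten--Nijenhuis bracket already invoked in the proof of Proposition \ref{prop:diffeo_poisson} (see \cite{vaisman_lectures_1994}): pullback by a diffeomorphism commutes with the Schouten--Nijenhuis bracket of multivector fields. Applying this to $\pi_1,\pi_2$ and using the compatibility hypothesis $[\pi_1,\pi_2]=0$ on $N$, together with the fact that the pullback of the zero bivector is the zero bivector, yields $[f^\ast\pi_1,f^\ast\pi_2]=0$. Thus $f^\ast\pi_1$ and $f^\ast\pi_2$ are compatible, and $(M,f^\ast\pi_1,f^\ast\pi_2)$ is a biHamiltonian manifold.

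An equivalent route, avoiding the bilinear form of the naturality identity, is to use the elementary fact that two Poisson bivectors $\pi_1,\pi_2$ are compatible if and only if $\pi_1+\pi_2$ is again a Poisson bivector; this follows from $[\pi_1+\pi_2,\pi_1+\pi_2]=[\pi_1,\pi_1]+2[\pi_1,\pi_2]+[\pi_2,\pi_2]$, valid since the Schouten bracket of bivectors is symmetric. Since pullback is linear, $f^\ast(\pi_1+\pi_2)=f^\ast\pi_1+f^\ast\pi_2$, and Proposition \ref{prop:diffeo_poisson} applied to the Poisson bivector $\pi_1+\pi_2$ shows this sum is a Poisson bivector on $M$; combined with the fact that each $f^\ast\pi_i$ is Poisson, the polarization identity then forces $[f^\ast\pi_1,f^\ast\pi_2]=0$.

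As for difficulty, there is essentially no obstacle: the statement is a formal consequence of Proposition \ref{prop:diffeo_poisson} and the functoriality of the Schouten--Nijenhuis bracket. The only point deserving a word of care is that one needs the naturality identity for the bracket of two \emph{distinct} bivectors, $f^\ast[\pi_1,\pi_2]=[f^\ast\pi_1,f^\ast\pi_2]$, rather than merely the special case $f^\ast[\pi,\pi]=[f^\ast\pi,f^\ast\pi]$ quoted earlier; this is standard, and in any case can be deduced from the latter by applying it to $\pi_1+\pi_2$ and expanding via the polarization identity above.
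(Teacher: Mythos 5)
Your proposal is correct and follows essentially the same route as the paper: the paper also deduces the bilinear naturality $f^\ast[\pi_1,\pi_2]=[f^\ast\pi_1,f^\ast\pi_2]$ from Proposition \ref{prop:diffeo_poisson} by expanding $[\pi_1+\pi_2,\pi_1+\pi_2]$ and polarizing, exactly as in your second (and final) argument. Your first route merely quotes the bilinear identity directly, which is a cosmetic shortcut, not a different proof.
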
 
\begin{proof} 
    Since
    \[
        [ \pi _1 + \pi _2 , \pi _1 + \pi _2 ] = [ \pi _1 , \pi _1 ] + [ \pi _2 , \pi _2 ] + 2 [ \pi _1 , \pi _2 ] 
    \]
   we have 
   \[
      f ^\ast  [ \pi _1 + \pi _2 , \pi _1 + \pi _2 ] = f ^\ast  [ \pi _1 , \pi _1 ] + f ^\ast  [ \pi _2 , \pi _2 ] + 2 f ^\ast  [ \pi _1 , \pi _2 ] 
   \] 
   and,
   \[
        [ f ^\ast ( \pi _1 + \pi _2)  , f ^\ast (\pi _1 + \pi _2 )  ] = [ f ^\ast\pi _1 , f ^\ast \pi _1 ] + [ f ^\ast  \pi _2 , f ^\ast  \pi _2 ] + 2 [ f ^\ast \pi _1 , f ^\ast \pi _2 ]. 
    \]
    Comparing the two equations above,  by Proposition \ref{prop:diffeo_poisson} we obtain that $ f ^\ast [\pi _1 , \pi _2 ]= [ f ^\ast \pi _1 , f ^\ast \pi _2 ] $.  
\end{proof} 

Corresponding to the bivector fields $ \pi _1 $ and $ \pi _2 $ we can define the  Poisson brackets $\{ F , G \} _1 = \pi _1 (\mathbf{d} F , \mathbf{d} G) $ and $ \{ F , G \} _2 = \pi _2 (\mathbf{d} F , \mathbf{d} G) $. With these notations we give the following
\begin{definition} 
Let $ (M , \pi _1 , \pi _2) $ be a biHamiltonian manifold and suppose there exist functions $ H _1 $ and $ H _2 $ on $M$ for which 
\[
    \mathcal{X} [F] = \{ F , H _1 \} _1 = \{ F , H _2 \} _2
\]
for every function $F$ on $M$. 
Then $ \mathcal{X} $ is called a {\bf biHamiltonian vector field}. 
\end{definition}

The importance of biHamiltonian structures lies in the fact that, in certain situations, they can be used to show complete integrability. We do not  give a complete account, but the main idea is that one can  use them to construct a set of first integrals in involution by constructing a biHamiltonian hierarchy \cite{magri_eight_2004,laurent-gengoux_poisson_2012,bolsinov_compatible_1992}
\begin{definition}
   Let $ (M , \pi _1 , \pi _2) $ be a biHamiltonian manifold. A {\bf biHamiltonian  hierarchy } on $M$ is a sequence of functions $ \{ F _i \}_{ i \in \mathbb{Z}  } $ such that 
   \[
       \{ \cdot ,F _{ i + i } \}_1  = \{ \cdot ,F _i \} _2 
   \]   
   for every $ i \in \mathbb{Z}  $. 
\end{definition}
The following lemma explains why a biHamiltonian hierarchy yields  functions in involution.
\begin{proposition} 
Suppose $ \{ F _i \}_{ i \in \mathbb{Z}  } $  is a biHamiltonian hierarchy, then $ \{ F _i , F _j \} _1  = \{ F _i , F _j \} _2  =  0$  for all $ i<j \in \mathbb{Z}  $.
\end{proposition}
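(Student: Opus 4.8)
The plan is to run the classical telescoping (Lenard--Magri) argument; the statement is purely formal, using nothing beyond bilinearity and skew-symmetry of the two brackets together with the defining relation of the hierarchy. That relation reads $\{G,F_{i+1}\}_1=\{G,F_i\}_2$ for every smooth function $G$ and every $i\in\mathbb{Z}$; specializing $G=F_j$ gives the identity $\{F_j,F_{i+1}\}_1=\{F_j,F_i\}_2$ for all $i,j\in\mathbb{Z}$. From this, together with skew-symmetry, I would first extract two elementary ``shift rules'',
\[
  \{F_a,F_b\}_1=\{F_a,F_{b-1}\}_2\qquad\text{and}\qquad\{F_a,F_b\}_2=\{F_{a+1},F_b\}_1 ,
\]
the first being just a reindexing of the identity above and the second obtained from it by interchanging the two arguments and using skew-symmetry of both brackets. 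In words: lowering the \emph{second} index by one turns $\{\cdot,\cdot\}_1$ into $\{\cdot,\cdot\}_2$, while raising the \emph{first} index by one turns $\{\cdot,\cdot\}_2$ back into $\{\cdot,\cdot\}_1$.

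Next I would fix $i<j$ and apply the two shift rules alternately, starting from $\{F_i,F_j\}_1$:
\[
  \{F_i,F_j\}_1=\{F_i,F_{j-1}\}_2=\{F_{i+1},F_{j-1}\}_1=\{F_{i+1},F_{j-2}\}_2=\{F_{i+2},F_{j-2}\}_1=\cdots .
\]
After each pair of steps the bracket type returns to $\{\cdot,\cdot\}_1$ while the gap $j-i$ shrinks by $2$, so the chain reaches indices that coincide or are adjacent. If $j-i=2m$ it terminates at $\{F_{i+m},F_{i+m}\}_1=0$; if $j-i=2m+1$ it terminates one further step later, after one more application of the first shift rule, at $\{F_{i+m},F_{i+m}\}_2=0$. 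In either case the terminal zero comes from skew-symmetry, so $\{F_i,F_j\}_1=0$ for all $i<j$ --- hence for all $i,j\in\mathbb{Z}$, trivially when $i=j$ and by skew-symmetry when $i>j$. The vanishing of $\{\cdot,\cdot\}_2$ is then immediate: $\{F_i,F_j\}_2=\{F_{i+1},F_j\}_1=0$ by the second shift rule.

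I do not expect a genuine obstacle here: the argument is entirely formal. The only points deserving a moment's care are the bookkeeping of the two parity cases in the telescoping chain and the (harmless) observation that, since the hierarchy is indexed by all of $\mathbb{Z}$, shifting indices never runs off the end of the sequence. If one wishes to avoid the case distinction altogether, the two shift rules combine into the single relation $\{F_i,F_j\}_2=\{F_{i+1},F_{j-1}\}_2$, which shows that $\{F_i,F_j\}_2$ depends only on $i+j$; skew-symmetry then forces this common value to vanish (since $\{F_i,F_j\}_2$ and $-\{F_j,F_i\}_2$ must agree while $i+j=j+i$), and the hierarchy relation once more propagates the vanishing to $\{\cdot,\cdot\}_1$.
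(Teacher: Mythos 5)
Your proof is correct and follows essentially the same Lenard--Magri telescoping argument as the paper: you alternate the two shift rules coming from the hierarchy relation and skew-symmetry, and conclude by skew-symmetry, then transfer the vanishing to the second bracket by one more shift. The only cosmetic difference is that you terminate the chain at a diagonal (or adjacent) term, whereas the paper runs it all the way to $\{F_j,F_i\}_1$ and invokes skew-symmetry there; both are the same idea.
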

\begin{proof}
    \begin{align*}
        \{ F _i , F _j \} _1 & = \{ F _i , F _{ j - 1 }\}_2  \\
        & = \{ F _{ i + 1 } , F _{ j - 1 } \} _1 \\
        & = \ldots \\
        & = \{ F _j , F _i \} _1,  
    \end{align*}  
    so that $ \{ F _i , F _j \} _1 = 0 $ by skew-symmetry. Hence, the $ F _i $'s are in involution with respect to the Poisson bracket $ \{ \cdot, \cdot \} _1 $, and also with respect to $ \{ \cdot, \cdot \} _2$ , since   $ \{ F _i , F _j \} _2 = \{ F _i , F _{ j + 1 } \} _1 $.

\end{proof}

\section{Canonoid Transformations }

\begin{definition}
	Let  $ ( M , \omega ) $ be a symplectic manifold, and let $ \mathcal{X} $ be a locally Hamiltonian vector field on $M$, that is, for each $ x \in M $, there is a neighborhood $U$ of $x$ and  locally defined function $  H_U $ such that $ \mathbf{i} _\mathcal{X} \omega = \mathbf{d}  H_U  $. A diffeomorphism 
	$ f: M \to M $ is said to be {\bf canonoid} with respect to the vector field $ \mathcal{X} $ if the transformed vector field $ f _\ast \mathcal{X}$ is also locally Hamiltonian, that is, for each $ x \in M $, there is a neighborhood $V$ of $ f (x) $ and a locally defined function $  K_V $ such that $ \mathbf{i} _{ f _\ast \mathcal{X} } \omega =  \mathbf{d} K_V$.
\end{definition}
This is equivalent to saying that, for each $y \in M $, there is a neighborhood $V$ of $y$ and a locally defined function $ K _V $ such that $  \mathbf{i}  _\mathcal{X} (f ^\ast \omega) = f ^\ast \mathbf{d} K _V $. This is also equivalent to  $ \mathcal{L} _\mathcal{X} (f ^\ast \omega) = 0 $.

\begin{remark}
	By \ref{prop:Hamilton's equations} the previous definition means that, for each point $x \in M$ the system of equations associated with $ \mathbf{i} _\mathcal{X} \omega = \mathbf{d} H _U $  can be written, in Darboux coordinates on the neighborhood  $U$ of $x$, as:
	\begin{equation}\label{eqn:Hamilton_equations_H}
		\dot q ^i  = \frac{ \partial H _U  } { \partial p _i } , \quad \dot p _i = -\frac{ \partial H _U  } { \partial q ^i  }
	\end{equation}
	and the system associated with $ \mathbf{i } _{ f _\ast \mathcal{X} } \omega = \mathbf{d} K _V $ can be written, in Darboux coordinates on the neighborhood  $V$ of $f (x) $, as:
	\begin{equation} \label{eqn:Hamilton_equations_K}
		\dot Q ^i  = \frac{ \partial K _V  } { \partial P _i } , \quad \dot P _i =- \frac{ \partial K _V  } { \partial Q ^i  }.
	\end{equation}
	so that the transformation $f$ carries the system of Hamilton's equations \ref{eqn:Hamilton_equations_H} again into a system of Hamilton's equations \ref{eqn:Hamilton_equations_K}. 
	
\end{remark}

We modify an example found in \cite{fasano_analytical_2006} to give a general framework to construct linear canonoid transformations. Let us consider Hamiltonian systems on the symplectic manifold $ (M , \omega) = (\mathbb{R}^{2n}, \omega)   $, let $ x = (\mathbf{q} , \mathbf{p}) $ be Darboux coordinates on $ \mathbb{R}  ^{ 2n } $, then the symplectic form can be written as $ \omega = \sum \mathbf{d} q ^i  \wedge \mathbf{d} p _i $. In this case a diffeomorphism $ f : \mathbb{R}  ^{ 2n } \to \mathbb{R}  ^{ 2n } $ is called a  {\bf canonical  transformation} if $f$ preserves the 2-form $ \omega = \sum \mathbf{d} q ^i  \wedge \mathbf{d} p _i $, that is if $ f ^\ast \omega = \omega $. A diffeomorphism is canonical if and only if the matrix  representation of  $ \mathbf{d} f $ in the canonical basis of $ \mathbb{R}^{2n}$, namely $ [\mathbf{d} f]$  , is a symplectic matrix, that is $ [\mathbf{d} f] ^t \mathbb{J}  [\mathbf{d} f] = \mathbb{J}  $.

Any quadratic Hamiltonian can be written as
\[
	H ( x ) = \frac{1}{2} x ^t S x,
\]
where $S$ is a real symmetric constant $ 2n \times 2n $ matrix. With these notations, the Hamiltonian vector field corresponding to $ H $ can be written as $ \mathbb{J}  S x $, and Hamilton's equations take the form
\[
	\dot x =  \mathbb{J}  S x.
\]
Hamilton's equations above define a linear Hamiltonian system with constant coefficients.
Consider the  transformation $ f: \mathbb{R} ^{ 2n } \to \mathbb{R}^{2n }$, defined by  $ X =f(x) =  A x $, with $A$
an invertible matrix. Then  the vector field $ \mathcal{X} $  is transformed to $ f _\ast \mathcal{X}  =  A \mathbb{J}  S A ^{ - 1 } X $  and the system of Hamilton's equations is transformed into a new system of $ 2n $ differential equations
\[
	\dot X = A \mathbb{J}  S A ^{ - 1 } X
\]
expressed in terms of the variables $ X = ( \mathbf{Q}  , \mathbf{P}  ) $. In general, the new system does not have the canonical structure, that is, it is not necessarily true that there exists a Hamiltonian $ K ( X) $ such that

\[\dot X = A \mathbb{J}  S A ^{ - 1 }X = \mathbb{J}  \nabla _{ X } K\]
and thus not every transformation of this type is canonoid.
However, it is easy to see that, in order to preserve the canonical structure, we must have
\[
	A \mathbb{J}  S A ^{ - 1 } = \mathbb{J}  C
\]
for some symmetric matrix $C$. We can rewrite this condition as $ A ^t \mathbb{J}  A \mathbb{J}  S = - A ^t C A $. It follows that the existence of a symmetric matrix $C$ is equivalent to the symmetry condition
\[
	A ^t \mathbb{J}  A \mathbb{J}  S = S \mathbb{J}  A ^t \mathbb{J}  A
\]
or
\begin{equation}\label{eqn:cancond}
	\Gamma ^t \mathbb{J}  S + S \mathbb{J}  \Gamma = 0
\end{equation}
with $ \Gamma = A ^t \mathbb{J}  A = - \Gamma ^t$. Thus, in this case, the condition for having a canonoid transformation reduces to equation \eqref{eqn:cancond}.
\begin{remark}
	If the transformation is canonical the matrix $A$ is symplectic ($ A ^T \mathbb{J}  A = \mathbb{J}$),  and thus, $ \Gamma = \mathbb{J}  $ and the condition is satisfied. The same is true if $ \Gamma = a \mathbb{J}  $ ( with $ a \neq 0 $).
\end{remark}
 
\begin{remark}
	If $A$ represents a  rescaling of the given coordinates, namely
	\begin{eqnarray*}
		A_{ii}&=& a_i, \, 1\leq i \leq n, \\
		A_{ii}&=& b_i,\, n< i \leq 2n,\\
		A_{ij}&=& 0, \, i\neq j,
	\end{eqnarray*}
	then, $\Gamma \mathbb J=\mathbb J \Gamma = B$, a diagonal matrix determined by
	\begin{eqnarray*}
		B_{ii}&=& -a_ib_i, \, 1\leq i \leq n, \\
		B_{ii}&=& -a_ib_i,\, n< i \leq 2n,\\
		B_{ij}&=& 0, \, i\neq j.
	\end{eqnarray*}
	The transformation is canonoid if and only if (\ref{eqn:cancond}) holds, i.e.
	\[
		BS=SB.
	\]
	When the rescaling is a point-transformation, then $a_ib_i=1$ and it is always canonoid.
	
\end{remark}
We now find more explicit conditions to have canonoid transformations. Write  the matrices $\Gamma $ and $ S $ in  terms of $ n \times n $ blocks
as follows:
\[
	\Gamma = \begin{bmatrix}
	\lambda  & \mu  \\
	- \mu ^t  & \nu
	\end{bmatrix} , \quad \quad
	S = \begin{bmatrix}
	\alpha  & \beta  \\
	\beta  ^t  &\gamma
	\end{bmatrix}
\]
where $ \lambda ^t = - \lambda $, $ \nu ^t = - \nu $, $ \alpha ^t = \alpha $, and $ \gamma ^t = \gamma $.
The equations $ \Gamma \mathbb{J}  S = S \mathbb{J}  \Gamma $ leads to the system
\begin{align*}
	- \lambda \beta ^t + \mu \alpha & = \alpha \mu ^t + \beta \lambda    \\
	- \lambda \gamma + \mu \beta    & = - \alpha \nu + \beta \mu         \\
	\mu ^t \beta ^t + \nu \alpha    & = \beta ^t \mu ^t + \gamma \lambda \\
	\mu ^t \gamma + \nu \beta       & = - \beta ^t \nu + \gamma \mu.
\end{align*}
If we let
\[
	A = \begin{bmatrix}
	a & b \\
	c & d
	\end{bmatrix}
\]
then
\begin{align}\begin{split} \label{eqn:atja }
	\Gamma            & =  \begin{bmatrix}
	a ^t              & c ^t               \\
	b ^t              & d^t
	\end{bmatrix}
	\begin{bmatrix}
	\mathbf{0}        & \mathbf{1}         \\
	{\bf - 1 }        & \mathbf{0}
	\end{bmatrix}
	\begin{bmatrix}
	a                 & b                  \\
	c                 & d
	\end{bmatrix}\\
	                  & =
	\begin{bmatrix}
	-c ^t a + a ^t c  & - c ^t b + a ^t d  \\
	- d ^t a + b ^t c & - d ^t b + b ^t d
	\end{bmatrix}=
	\begin{bmatrix}
	\lambda           & \mu                \\
	- \mu ^t          & \nu
	\end{bmatrix}
	\end{split}
\end{align}

\begin{proposition}\label{can}
	Given the linear Hamiltonian system of Hamiltonian $ H ( x ) = \frac{1}{2} x ^t S x $, the invertible transformation $ X = A x $ preserves the canonical structure of Hamilton's equations for all Hamiltonians if and only if $\Gamma = a \mathbb{J}  $ for some constant $a \neq 0 $.
\end{proposition}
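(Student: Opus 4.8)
The plan is to extract from the analysis preceding the statement the precise algebraic condition and then show it holds for \emph{every} symmetric $S$ exactly when $\Gamma$ is a nonzero multiple of $\mathbb{J}$. By the discussion leading to \eqref{eqn:cancond}, the transformation $X = Ax$ carries the Hamilton equations of $H(x) = \tfrac12 x^t S x$ into Hamilton equations for some Hamiltonian $K$ precisely when
\[
	\Gamma^t \mathbb{J} S + S \mathbb{J} \Gamma = 0, \qquad \Gamma = A^t \mathbb{J} A = -\Gamma^t .
\]
Since $\Gamma^t = -\Gamma$, this is equivalent to $\Gamma \mathbb{J} S = S \mathbb{J} \Gamma$. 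The statement ``preserves the canonical structure for all Hamiltonians'' means this identity must hold simultaneously for \emph{all} symmetric matrices $S$; correctly reading this quantifier is the conceptual heart of the argument.

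For the ``if'' direction I would simply substitute $\Gamma = a\mathbb{J}$: using $\mathbb{J}^2 = -\mathbf{1}$ one gets $\Gamma^t \mathbb{J} S + S\mathbb{J}\Gamma = -a\mathbb{J}\mathbb{J}S + aS\mathbb{J}\mathbb{J} = aS - aS = 0$ for every symmetric $S$, so the transformation is canonoid with respect to every quadratic Hamiltonian; here $a\neq 0$ is exactly what makes $\Gamma = A^t\mathbb{J}A$ invertible, consistent with $A$ being invertible.

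For the ``only if'' direction, set $P := \Gamma\mathbb{J}$ and observe that $P^t = \mathbb{J}^t\Gamma^t = (-\mathbb{J})(-\Gamma) = \mathbb{J}\Gamma$, so the condition $\Gamma\mathbb{J}S = S\mathbb{J}\Gamma$ reads $PS = SP^t$ for all symmetric $S$. Taking $S = \mathbf{1}$ shows $P = P^t$, and the condition becomes $PS = SP$ for all symmetric $S$. Since products of symmetric matrices (for instance $E_{ii}(E_{ij}+E_{ji}) = E_{ij}$) span the full matrix algebra, $P$ commutes with every $2n\times 2n$ matrix, hence $P = c\,\mathbf{1}$ for some scalar $c$; equivalently, one may note that commuting with all rank-one symmetric matrices $vv^t$ forces every nonzero vector to be an eigenvector of $P$. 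Therefore $\Gamma\mathbb{J} = c\,\mathbf{1}$, i.e. $\Gamma = c\,\mathbb{J}^{-1} = -c\,\mathbb{J}$; writing $a = -c$ gives $\Gamma = a\mathbb{J}$, with $a\neq 0$ because $\det\Gamma = (\det A)^2\det\mathbb{J}\neq 0$. The only genuine obstacle is the small linear-algebra lemma that a matrix commuting with all symmetric matrices must be scalar; the rest is direct substitution together with the already-established equivalence \eqref{eqn:cancond}.
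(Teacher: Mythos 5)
Your proof is correct, and its ``only if'' direction is organized differently from the paper's. The paper writes $\Gamma$ and $S$ in $n\times n$ blocks ($\lambda,\mu,\nu$ and $\alpha,\beta,\gamma$), expands $\Gamma\mathbb{J}S=S\mathbb{J}\Gamma$ into four block equations, and then specializes three families of Hamiltonians ($\alpha=\gamma=\mathbf{0}$, $\alpha=\beta=\mathbf{0}$, $\beta=\gamma=\mathbf{0}$): the first forces $\mu$ to commute with every $n\times n$ matrix, hence $\mu=a\mathbf{1}$, and the other two kill $\lambda$ and $\nu$, giving $\Gamma=a\mathbb{J}$; it then goes on to record $C=a(A^{-1})^tSA^{-1}$ and $K=aH$. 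You instead stay at the level of the full $2n\times 2n$ matrices: setting $P=\Gamma\mathbb{J}$ turns \eqref{eqn:cancond} into $PS=SP^t$ for all symmetric $S$, the choice $S=\mathbf{1}$ makes $P$ symmetric, and the single lemma that a matrix commuting with all symmetric matrices is scalar (via products $E_{ii}(E_{ij}+E_{ji})=E_{ij}$, or rank-one matrices $vv^t$) gives $P=c\,\mathbf{1}$, i.e.\ $\Gamma=a\mathbb{J}$, with $a\neq0$ from $\det\Gamma=(\det A)^2\neq0$; your ``if'' direction is the same direct substitution as the paper's. The substance is the same in both arguments -- exploiting the ``for all symmetric $S$'' quantifier until only multiples of $\mathbb{J}$ survive -- but your version is shorter and avoids the block bookkeeping, at the cost of not producing the block equations and the explicit formula for the new Hamiltonian $K$, which the paper reuses in the examples that follow the proposition.
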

\begin{proof}
	Showing that $\Gamma = a \mathbb{J}  $ for some constant $a \neq 0 $ satisfies the conditions is trivial
	Conversely, consider  the particular case $ \alpha = \gamma = \mathbf{0} $. We find that $\mu$ must commute with every $ n \times n $ matrix, and therefore $ \mu = a \mathbf{1} $. Choosing $ \alpha = \beta = \mathbf{0}  $ we find $ \lambda = \mathbf{0}  $. From $ \beta = \gamma = \mathbf{0} $ it follows that $ \nu = \mathbf{0} $. Hence $ \Gamma = a \mathbb{J}  $, and in addition, from $ A ^t \mathbb{J}  A = a \mathbb{J}  $ it follows that $ \mathbb{J}  A \mathbb{J}  = - a ( A ^{ - 1 } ) ^t $. We finally find that $C = a ( A ^{ - 1 } ) ^t S A ^{ - 1 } $ and the new Hamiltonian is $ K ( X ) = \frac{1}{2} X ^t C X $. If $A$ is symplectic it holds that $ K ( X ) = H ( x ) $, and if $ a \neq 1 $ we find $ K ( X ) = a H ( x ) $.
\end{proof}
\begin{example}
	Let $S$ be  such that $ \beta = \alpha  = \mathbf{0}  $, and $ \gamma = \mathbf{1} $. Then the equations reduce to
	\begin{align*}
		\mathbf{0}       & = \mathbf{0}                        \\
		- \lambda \gamma & =- \lambda \mathbf{1} =  \mathbf{0} \\
		\gamma \lambda   & = \mathbf{1} \lambda =  \mathbf{0}  \\
		\mu ^t \gamma    & = \gamma  \mu
	\end{align*}
	Hence, from the first (or second) equation $ \lambda = \mathbf{0} $, from the last equation $\mu ^t = \mu $. Hence, by equation ( \ref{eqn:atja }) the only requirements are that $  c ^t a =  a ^t c  $ (i.e. $c ^t a $ is symmetric), and that $ - c ^t b + a ^t d = d ^t a - b ^t c $ (i.e. $ - c ^t b + a ^t d $ is symmetric ).
\end{example}

\begin{example}\label{ex:free_particle}
	We specialize the previous example. Let $ H = \frac{1}{2} ( p _1 ^2 + p _2 ^2 )$ then $ S $ is a $ 4 \times 4 $ matrix with $\beta = \alpha = \mathbf{0} $ and  $\gamma = \mathbf{1} $. Suppose that $ a = \mathbf{1} $, $b = c = \mathbf{0} $, and that
	\[
		d ^{ - 1 }  = \begin{bmatrix}
		m & l \\
		l & n
		\end{bmatrix}
	\]
	Clearly $ - c ^t b + a ^t d = a ^t d = d $ is symmetric, since $d$ is symmetric. Moreover, $ c ^t a = \mathbf{0} $  and so it is symmetric. Therefore, this transformation satisfies the conditions obtained in the previous example.
	Then we can compute $C$ as $ C = - \mathbb{J}  A \mathbb{J}  S A ^{ - 1 } $. We obtain
	\[
		C = \begin{bmatrix}
		0 & 0 & 0 &0 \\
		0 & 0 &0 &0\\
		0 &0 & m &l \\
		0& 0 &l &n \\
		\end{bmatrix}
	\]
	and hence $ K = \frac{1}{2} ( m P _1 ^2 + n P _2 ^2+ 2 l P _1 P _2 ) $.
\end{example}
We now show that given a canonoid transformation it is possible to find an additional symplectic structure and an additional first integral, and thus canonoid transformations can be used to find bihamiltonian structures and to study the integrability of Hamiltonian systems.
 
Let $ \omega _1 = \omega   $ be the symplectic form  defined by
\[
	\omega  ( x , y ) = x ^t \mathbb{J}  y.
\]
Then, the Hamiltonian vector field with Hamiltonian $H$ satisfy the equation
\[
	\omega _1 ( \mathcal{X}  _H , v ) = \mathbf{d} H(x) \cdot v
\]
for all $v \in \mathbb{R} ^{ 2n }$.
Similarly, let $ \Omega $ be the symplectic form in the ``transformed space". $\Omega$ is defined as follows:
\[
	\Omega ( X , Y ) = X ^t \mathbb{J}  Y.
\]
Then, the Hamiltonian vector field with Hamiltonian $K$ is  satisfies the equation
\[
	\Omega ( \mathcal{X}  _K , v ) = \mathbf{d} K \cdot v
\]
for all $v \in \mathbb{R} ^{ 2n }$, where $ \mathcal{X} _K = f _\ast (\mathcal{X} _H) $. 
Let $f$ be the linear transformation defined as $ X = f ( x ) = A x $, where $A$ is the $ 2n \times 2n $ invertible matrix introduced above.  We can use $f$ to define a new  canonical form in the ``$x $" space  by pulling back the canonical form $\Omega$:
\begin{align*}
	\omega _2 ( x, y ) & = (f ^\ast  \Omega ) ( x, y ) = \Omega ( f ( x ) , f(y) ) \\
	                   & = \Omega(Ax, Ay ) = (Ax) ^t\mathbb{J}  (Ay)               \\
	                   & = x ^t ( A ^t \mathbb{J}  A) y
\end{align*}
which gives an explicit expression of the symplectic form $ \omega _2 $ in terms of the matrix $A$.

Then we can write, by pulling back the equation $  \Omega ( \mathcal{X}  _K , v ) = \mathbf{d} K \cdot v $
\[
	\omega _2(\mathcal{X}  _H ,v) =  (f ^\ast  \Omega ) ( \mathcal{X}  _H , v ) = f^\ast (\mathbf{d} K)(v)= \mathbf{d} (f ^\ast K)(v)
\]
where $ (f ^\ast K)(x) = K ( f ( x ) )= K ( A x )  $. If we introduce $ H _2( x ) =  (f ^\ast K)(x) $ we can write
\[
	\omega _2(\mathcal{X}  _H ,v) = \mathbf{d} H _2 \cdot v
\]
Hence, the vector field $ \mathcal{X}  _H $ is Hamiltonian  with respect to the symplectic form $ \omega _1 $ and also Hamiltonian with respect to the symplectic form $\omega _2 $.

\begin{example} We now continue example \ref{ex:free_particle}. We compute $\omega _2 $ and $ H _2 $ for this example. The transformation is given by the matrix
	\[ A = \begin{bmatrix}
		\mathbf{1}  & \mathbf{0}  \\
		\mathbf{0}  & d
		\end{bmatrix}
	\]
	where $\mathbf{1} $ and $ \mathbf{0} $ are the $ 2 \times 2 $ identity matrix and zero matrix, respectively. The matrix $d$ is given by
	\[
		d=  \frac{ 1 } {m n - l ^2 } \begin{bmatrix}
		n & - l \\
		- l  & m
		\end{bmatrix}
	\]
	Then the matrix representative of $\omega _2 $ is given by
	\[
		[\omega _2 ]=  A ^t \mathbb{J}  A = \begin{bmatrix}
		\mathbf{0}  & d \\
		- d & \mathbf{0}
		\end{bmatrix},
	\]
	note that the matrix $A$ is symplectic if and only if $l = 0 $ and $ m= n = 1 $, so that this transformation is symplectic if and only if it is the identity.
	The new Hamiltonian, obtained after some computations, is
	\[
		H _2 ( x ) = K ( A x ) = \frac{ 1 } { 2(m n - l ^2) }[n p _1 ^2 + m p _2 ^2 - 2l p _1 p _2 ].
	\]
	Clearly $ H_2 $ is a  first integral of the system with Hamiltonian $ H $, since $ \{ H , H _2 \} = 0 $.
\end{example}
\begin{example}
	We now consider a more interesting example, namely the harmonic oscillator. In this case $ \beta = \mathbf{0} $, and $ \alpha = \gamma = \mathbf{1} $. The conditions for having a canonoid transformation reduce to  $ \nu = \lambda $ , and $ \mu = \mu ^t $ (i.e. $ \mu $ is symmetric). Now suppose $S$ is a $ 2 \times 2 $ matrix.
	\begin{enumerate}[(a)]
		\item We can specialize the previous transformation by taking $a =d= \begin{bmatrix}
		      1 & 1 \\
		      1 & - 1
		\end{bmatrix} $, $ b = \begin{bmatrix}
		2 & 0 \\
		0 & 1
		\end{bmatrix} $, and $ c = \begin{bmatrix}
		1 & 0 \\
		0 & 2
		\end{bmatrix} $. Then
		\[\Gamma = \begin{bmatrix}
			0 & 1 & 0 & 0 \\
			- 1 & 0 & 0 & 0 \\
			0 & 0 & 0 & 1 \\
			0 & 0 & - 1 & 0
			\end{bmatrix},
		\]
		where $ E _3 = \Gamma $ is the matrix of the new symplectic form. Moreover,
		\[C =
			\begin{bmatrix}
				-2 & 3  & 0  & 3  \\
				3  & 4  & -6 & 0  \\
				0  & -6 & 4  & -3 \\
				3  & 0  & -3 & -2
			\end{bmatrix}
		\]
		so that $ K = \frac{1}{2} X ^t C X $ is the Hamiltonian of the transformed system. Transforming $K$ to the old coordinates yields the Hamiltonian
		\[ W _1 = (q _2 p _1 - q _1 p _2 ).
		\]
		\item We can also specialize the previous transformation by taking $ b = c = 0 $ and taking $a$ and $d$ to be symmetric matrices, then
		      \[
		      	\Gamma  = \begin{bmatrix}
		      	0 & a ^t d \\
		      	- d ^t a & 0
		      	\end{bmatrix}, \quad
		      	a  = \begin{bmatrix}
		      	a _{ 11 }  &  a _{ 12 }  \\
		      	a _{ 12 }  & a _{ 22 }
		      	\end{bmatrix}, \quad
		      	d   = \begin{bmatrix}
		      	d _{ 11 }  & d _{ 12 }  \\
		      	d _{12 }  & d _{ 22 }
		      	\end{bmatrix}
		      \]
		      and $ \mu = a ^t d $ is a symmetric matrix. Then
		      \[
		      	C = \begin{bmatrix}
		      	C _1  & \mathbf{0}  \\
		      	\mathbf{0}  & C _2
		      	\end{bmatrix}
		      \]
		      where
		      \[C _1 = \frac{ 1 } { \det a } \left [\begin{smallmatrix}
		      		a _{ 22 } d _{ 11 } - a _{ 12 } d _{ 12 }  &  a _{ 11 } d _{ 12 }   - a _{ 12 } d _{ 11 }  \\
		      		a _{ 22 } d _{ 12 } - a _{ 12 } d _{ 22 }  & a _{ 11 } d _{ 22 } -  a _{ 12 } d _{ 12 }
		      	\end{smallmatrix}\right], \quad
		      	C _2 = \frac{ 1 } { \det d }\left [\begin{smallmatrix}
		      		a _{ 11 } d _{ 22 }  - a _{ 12 } d _{ 12 } &   a _{ 12 } d _{ 11 } - a _{ 11 } d _{ 12 } \\
		      		a _{ 12 } d _{ 22 }  - a _{ 22 } d _{ 12 }  &   a _{ 22 } d_{ 11 } - a _{ 12 } d _{ 12 }
		      	\end{smallmatrix}  \right]
		      \]
		      then $ K = \frac{1}{2} X ^t C X $, and $ H _2 = \frac{1}{2} x ^t (A ^t C A) x $, where
		      \[
		      	A ^t C A = \left[\begin{smallmatrix}
		      		a_{11} d_{11} + a_{12} d_{12} & a_{11} d_{12} + a_{12} d_{22} & 0 & 0 \\
		      		a_{12} d_{11} + a_{22} d_{12} & a_{12} d_{12} + a_{22} d_{22} & 0 & 0 \\
		      		0 & 0 & a_{11} d_{11} + a_{12} d_{12} & a_{12} d_{11} + a_{22} d_{12} \\
		      		0 & 0 & a_{11} d_{12} + a_{12} d_{22} & a_{12} d_{12} + a_{22} d_{22}
		      	\end{smallmatrix}\right]
		      \]
		      If, in particular,  we set $ a _{ 11 } = d _{ 12 } = a _{ 22 } =  0 $, $ a _{ 12 } = 1$, $ d _{ 11 } = 1 $, and $ d _{ 22 } =  1 $,  then  we obtain $ W _2 =  (q _1 q _2 + p _1 p _2)  $, which is a first integral, and the corresponding symplectic form has the following matrix representation
		      \[
		      	E _2 =  \begin{bmatrix}
		      	0 & 0&0&  1 \\
		      	0 & 0& 1&0\\
		      	0&-1&0&0\\
		      	-1&0 & 0 & 0
		      	\end{bmatrix}
		      \]
	
		      If, instead,  we set $ a _{ 12 } = d _{ 12 } = 0 $, $ a _{ 11 } = a _{ 22 } = d _{ 11 } =1 $, and  $  d _{ 22 } = -1 $ then  we obtain $ W _3 = \frac{1}{2} ( q _1 ^2 + p _1 ^2 - q _2 ^2 - p _2 ^2) $, which is a first integral, and the corresponding symplectic has the following matrix representation
		      \[
		      	E _3 = \begin{bmatrix}
		      	0 & 0&1& 0 \\
		      	0 & 0&0&-1\\
		      	-1&0&0&0\\
		      	0&1 & 0 & 0
		      	\end{bmatrix}
		      \]
	\end{enumerate}
	Now, let $ W _4 = \frac{1}{2} ( p _1 ^2 + p _2 ^2 + q _1 ^2 + q _2 ^2 ) $ and $ E _4 = \mathbb{J}  $. It is easy to see that
	\[W _4 ^2 = W _1 ^2 + W _2 ^2 + W _3 ^2. \]
	Suppose $\mathfrak{ u} ( 2 ) $ is the Lie algebra of the Lie group $ U ( 2 ) $ of $ 2 \times 2 $ unitary matrices. If we consider $ \mathfrak{ u }  ( 2 ) $ as a subspace of $ sp ( 4, \mathbb{R}  )$ , then $ \{ E _1 , E _2 , E _3 , E _4 \} $ is a basis for $ \mathfrak{ u} (2) $. Moreover, the  functions $ W _1, W _2 , W _3$, and  $ W _4 $ are a basis for the vector spaces of all quadratic integrals of the harmonic oscillator vector field.   The map
	\[
		\mathcal{H} : \mathbb{R} ^4 \to \mathbb{R}  ^4 : (q,p)\to ( w _1 ( q , p ) , w _2 ( q,p ) , w _3 ( q,p ) , w _4 ( q , p ) )
	\]
	where $ w _1 = 2 W _1, ~ w _2 = 2W _2 , ~ w _3 =2 W _3 $, and $ w_4 = 2 W _4 $ is called the {\bf Hopf map}.
\end{example}
\section{Linear Canonoid Transformations and the harmonic oscillator }

When we perform a canonoid transformation of some Hamiltonian system,  the integrability or superintegrability  of the system are  preserved. Indeed, a canonoid transformation is essentially a change of coordinates and, consequently, the existence of intrinsic structures like foliations made of invariant tori (Liouville or complete integrability) or the closure of the finite orbits (maximal superintegrability) are left unchanged. A canonoid transformation may only make these structures more or less evident and easy to handle by allowing the determination, together with the new coordinates, of a new Hamiltonian function and a new symplectic structure for the same dynamical system.   
	From Prop. \ref{prop:Poissonoid_tr_preserve_constant_of_motion}, it is clear that canonoid and  Poissonoid  transformations of a Hamiltonian system preserve the  functionally independent constants of the motion of the system. Therefore, the transformed of a superintegrable system is again  superintegrable with the transformed constants of motion. If the transformation is linear, then the degree of the polynomial constants of the motion is also preserved by the transformation. We see below how the two-dimensional harmonic oscillator, that admits three quadratic in the momenta  first integrals, $W_2$, $W_3$ and $W_4$ seen above, behaves under linear canonoid transformations.

For our purpose, it is useful that the canonoid transformation of the two dimensional harmonic oscillator leads to a system with  Hamiltonian in either one of the forms
\[
	K_1=\frac 12(P_1^2+ P_2^2+V(Q^1,Q^2)),\quad  K_2=P_1 P_2+V(Q^1,Q^2).
\]
We remark that the manifold where  $K_1$ is defined as the Euclidean plane, while $K_2$  is defined in the Minkowski plane. Moreover, in the Euclidean case the form of the Hamiltonian is non-restrictive, since  any non-degenerate Hamiltonian can be put in  this form by a real  canonical point-transformation. A linear canonical point-transformation changes $K_2$ into $P_1^2-P_2^2+V$. The Hamiltonians  $W_2$ and $W_3$ are then recovered by $K_2$. Under the same constraints, we apply linear canonoid transformations also to the system of Hamiltonian
\[
	H=\frac 12(p_1^2+p_2^2+q_1^2),
\]
that we can consider as an embedding of the one-dimensional harmonic oscillator in $\mathbb E^2$. This system possesses two evident quadratic first integrals, plus a functionally independent third-one
\[
	q_2-p_2\arctan \left( \frac {q_1}{p_1}\right),
\]
not globally defined.

We recall that the linear transformation $A$ is canonoid if and only if (\ref{eqn:cancond}) holds. In order to obtain Hamiltonians of the prescribed form, we must constrain the $2\times 2$ submatrix in the lower-right corner of $ C = - \mathbb{J}  A \mathbb{J}  S A ^{ - 1 } $ to be
\[
	\begin{bmatrix}
		1 & 0 \\
		0 & 1
	\end{bmatrix}, \quad \,\,\,\displaystyle{or} \quad \,\,\,
	\begin{bmatrix}
		0 & 1 \\
		1 & 0
	\end{bmatrix},
\]
respectively, where the matrix $S$ is determined by the  original Hamiltonian. Moreover, the $2\times 2$ submatrices in the upper-right and lower-left corners of $C$ must be equal to zero. We can check if the transformations are canonical thanks to Proposition \ref{can}.

With these constraints, we search first for canonoid transformations of the isotropic harmonic oscillator $H=\frac 12 (p_1^2+p_2^2+q_1^2+q_2^2)$. In this case
 
\[
	S = \begin{bmatrix}
	1 & 0 & 0 &0 \\
	0 & 1 &0 &0\\
	0 &0 & 1 &0 \\
	0& 0 &0 &1 \\
	\end{bmatrix}.
\]
 
In order to perform computations with a computer-algebra software, we put $c_{12}=b_{21}=0$. With this simplification we find that only one  matrix $C$  determines a Hamiltonian of the form $K_1$ and  can be obtained by several different matrices $A$. The Hamiltonian $K_1$ is
\[
	K_1=\frac 12(P_1^2+P_2^2+Q_1^2+Q_2^2).
\]
Therefore, the isotropic harmonic oscillator corresponds only to itself under a canonoid transformation of the prescribed type. We remark in particular that it is impossible to obtain by this way anisotropic harmonic oscillators, even if they are  superintegrable when the ratio of the parameters is a rational number. This because one of the constants of the motion must be of degree higher than two in momenta or coordinates \cite{jauch_problem_1940,kalnins_completeness_2001}.

If we start from the system of Hamiltonian
\begin{equation}\label{emb}
	H=\frac 12 (p_1^2+p_2^2+q_1^2),
\end{equation}
then
\[
	S = \begin{bmatrix}
	1 & 0 & 0 &0 \\
	0 & 0 &0 &0\\
	0 &0 & 1 &0 \\
	0& 0 &0 &1 \\
	\end{bmatrix}.
\]
We find  that, under the same assumptions regarding $A$ and $C$, the canonoid transformations maps (\ref{emb}) into either
\[
	K_1=\frac 12(P_1^2+P_2^2+Q_1^2),
\]
a Hamiltonian identical to (\ref{emb}), or
\[
	K_1=\frac 12(P_1^2+P_2^2)+k(\alpha_1Q_1-\alpha_2Q_2)^2,
\]
where $k$, $\alpha_1$ and $\alpha_2$ are  constants. In the last case,  the point-transformation
\[
	Q_1=(\alpha_2Y-\alpha_1X)\sqrt{(\alpha_1^2+\alpha_2^2)},\quad Q_2=(\alpha_1Y+\alpha_2X)\sqrt{(\alpha_1^2+\alpha_2^2)},
\]
makes   $K_1$ in the pristine form
\[
	\frac 12(P_X^2+P_Y^2+k'X^2),
\]
for some suitable constant $k'$.  
 
As an example of the second type of canonoid transformations, when $c_{21}=c_{22}=0$, $b_{12}d_{22}=d_{12}b_{22}$, $d_{21}a_{22}=a_{21}d_{22}$, $b_{11}=-(2a_{21}^2a_{22}d_{12}^2d_{22}^2+a_{22}d_{12}^3a_{21}d_{11}d_{22}-d_{12}^2a_{11}a_{22}d_{11}d_{22}^2+d_{12}^3a_{11}a_{21}d_{22}^2+a_{22}d_{12}^4a_{21}^2-d_{12}^4a_{21}^2d_{22}+a_{21}^2a_{22}d_{22}^4) / (a_{22}d_{22}^2d_{12}^2c_{11})$ we have
\[
	K_1=\frac 12(P_1^2+P_2^2)+(d_{22}^2+d_{12}^2)(d_{22}Q_1-d_{12}Q_2)^2.
\]
It can be checked that none of the transformations leading to the last form of $K_1$ is canonical.

By imposing the constraint corresponding to $K_2$, we are  mapping the Euclidean  harmonic oscillator into the Minkowski plane. Let us apply first the canonoid transformation to the isotropic oscillator. We obtain, with the same assumptions on $A$, that, for the admissible  solutions, $K_2$ is always in the form
\[
	K_2= P_1P_2+kQ_1Q_2,
\]
for some constant $k$. This is essentially the form of $W_2$ of the previous section. For example, if $a_{21}=c_{11}=b_{22}=d_{11}=0$, $a_{12}d_{12}=a_{22}d_{22}$, $b_{12}d_{12}=-a_{22}c_{22}$, $a_{11}=(a_{22}^2d_{21}^2+c_{21}^2a_{22}^2+c_{21}b_{11}d_{12}^2)/(d_{12}^2d_{21})$, then $k=d_{12}^2/a_{12}^2$, where we assume  that $A$ is invertible, i.e. $\det A=-a_{22}(d_{21}^2+c_{21}^2)/d_{12}\neq 0$. None of the corresponding transformations is canonical.
 
This type of Hamiltonians corresponds to a well known class of quadratically superintegrable systems  of the Minkowski plane, classified as Class II in \cite{daskaloyannis_unified_2006} (in this reference, manifolds and Hamiltonians are considered in general complex while we limit ourselves to the real case). 

We search now for canonoid transformations of the system (\ref{emb}) leading to Hamiltonians of the form $K_2$. It is possible in this case to consider the matrix $A$ in full generality. We find that the Hamiltonian of the transformed system, when the transformation is canonoid, is always in the form
\begin{equation}\label{K2}
	K_2=P_1P_2+k(\alpha_1Q_1-\alpha_2Q_2)^2,
\end{equation}
where $k$, $\alpha_1$ and $\alpha_2$ are  constants. After the point-transformation determined by
\begin{eqnarray*}
	X=\alpha_1Q_1-\alpha_2Q_2,\quad Y=\alpha_1Q_1+\alpha_2Q_2,
\end{eqnarray*}
we have
\[
	K_2=\alpha_1\alpha_2(-P_X^2+P_Y^2)+kX^2.
\]
We can divide the Hamiltonian by the constant $\alpha_1\alpha_2\ne 0$ and see that, similarly to the original one, admits two evident quadratic first integrals and the functionally independent local third integral
\[
	Y+P_Y\ln (X+P_X).
\]
In this case the local superintegrable structure remains unchanged. After the computation, we observe that  no canonoid transformation such that $\alpha_1\alpha_2=0$ do exist. Nevertheless, we can analyze the superintegrability of the system of Hamiltonian (\ref{K2}) in this case.
If, say, $\alpha_2=0$, then the system admits two evident quadratic first integrals plus the third-one
\[
	P_2(Q_2P_2-Q_1P_1)-\frac 23kQ_1^3,
\]
and the system is quadratically superintegrable.

As an example, for $c_{22}=c_{12}=0$, $c_{21}a_{22}=b_{21}d_{22}$, $d_{21}a_{22}=-a_{21}d_{22}$, $b_{12}d_{12}=d_{22}b_{22}$, $a_{12}d_{12}=d_{22}a_{22}$, $a_{11}=(-4d_{22}a_{22}b_{21}^2-d_{12}^2b_{11}b_{21}-4d_{22}a_{21}^2a_{22}+b_{21}^2d_{22}d_{12}+a_{21}^2d_{22}d_{12}+a_{22}d_{12}c_{11}b_{11}+a_{22}d_{11}d_{22}a_{21}-c_{11}d_{22}b_{21}a_{22})/(d_{12}(a_{22}d_{11}+a_{21}d_{12}))$, with $\det A=-4a_{22}d_{22}^2(a_{21}^2+b_{21}^2)/d_{12}\neq 0$, we have
\[
	K_2=P_1P_2+\frac{d_{12}}{2a_{22}d_{22}}(d_{12}Q_1-d_{22}Q_2)^2.
\]
A computation shows that none of the transformations leading to the last form of $K_2$ is canonical.
 

\section{Whittaker's characterization}

Since the first edition (1904) of  his celebrated Treatise on Analytical Mechanics \cite{whittaker_treatise_1988}, E. T. Whittaker characterizes what we call here canonoid transformations. Given a system of ODEs
\begin{equation}\label{ode}
	\frac d{dt}x^r=\mathcal{X} ^r(x^1,\ldots, x^n,t),\quad r=1,\ldots, n,
\end{equation}
and a one-form $M(x^r,t)$, the absolute and relative integral invariants of the differential equations are defined following Poincar\'e \cite{poincare_sur_1890}.  We do not need here to recall the definitions of integral invariants (for this, see \cite{whittaker_treatise_1988}, \S \S 112-116), but only their characterization in modern notation. We have that $M$ determines an absolute invariant integral if and only if
\[
	\frac \partial {\partial t}M+\mathcal{L} _{ \mathcal{X} } M=0,
\]
where $ \mathcal{L} _{ \mathcal{X} } M $ is the Lie derivative of $M$ along the vector field $ \mathcal{X} $. $M$ determines a relative invariant integral if and only if $\mathbf{d} M$ is an absolute integral invariant.
If the coordinates $(x^i)$ can be divided into two sets $(q^i,p_i)$, such that $n=2N$, then, as stated in \S 116 of \cite{whittaker_treatise_1988}, 
\begin{proposition}
	The ODEs (\ref{ode}) in coordinates $(q^i,p_i)$ are in Hamiltonian form if and only if
	\[
		\Sigma_{i=1}^Np_i\delta q^i,
	\]
	determines a relative invariant integral of (\ref{ode}).
\end{proposition}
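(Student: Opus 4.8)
The plan is to unwind Poincaré's definitions, as recalled just above, and reduce the statement to a single computation of a Lie derivative of the canonical one-form, after which it becomes the familiar fact that the vanishing of $\mathcal{L}_{\mathcal{X}}\omega$ characterizes (locally) Hamiltonian vector fields. First I would set $M=\sum_{i=1}^N p_i\,\mathbf{d}q^i$ and apply the two criteria verbatim: $M$ determines a relative invariant integral of \eqref{ode} if and only if $\mathbf{d}M$ is an absolute integral invariant, which holds if and only if
\[
	\frac{\partial}{\partial t}(\mathbf{d}M)+\mathcal{L}_{\mathcal{X}}(\mathbf{d}M)=0 .
\]
Since $\mathbf{d}M=\sum_{i=1}^N\mathbf{d}p_i\wedge\mathbf{d}q^i$ has constant coefficients and no explicit dependence on $t$, the term $\frac{\partial}{\partial t}(\mathbf{d}M)$ vanishes and the condition collapses to $\mathcal{L}_{\mathcal{X}}(\mathbf{d}M)=0$. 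Observing that $\mathbf{d}M=-\omega$, with $\omega=\sum_i\mathbf{d}q^i\wedge\mathbf{d}p_i$ the standard symplectic form in the coordinates $(q^i,p_i)$, the condition is precisely $\mathcal{L}_{\mathcal{X}}\omega=0$.

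Next I would invoke Proposition~\ref{prop:locally_Hamiltonian}: $\mathcal{L}_{\mathcal{X}}\omega=0$ is equivalent to $\mathbf{i}_{\mathcal{X}}\omega$ being closed, i.e.\ to $\mathcal{X}$ being locally Hamiltonian. To identify this with the Hamiltonian form of \eqref{ode}, I would expand the interior product: writing $\mathcal{X}=\sum_i\bigl(\mathcal{X}^{q^i}\partial_{q^i}+\mathcal{X}^{p_i}\partial_{p_i}\bigr)$ with $\mathcal{X}^{q^i},\mathcal{X}^{p_i}$ the components of \eqref{ode}, one gets
\[
	\mathbf{i}_{\mathcal{X}}\omega=\sum_{i=1}^N\bigl(\mathcal{X}^{q^i}\,\mathbf{d}p_i-\mathcal{X}^{p_i}\,\mathbf{d}q^i\bigr),
\]
and this $1$-form equals $\mathbf{d}H$ for some function $H$ (allowed to depend on $t$ as a parameter) exactly when $\mathcal{X}^{q^i}=\partial H/\partial p_i$ and $\mathcal{X}^{p_i}=-\partial H/\partial q^i$ for every $i$, i.e.\ exactly when \eqref{ode} is Hamilton's system for $H$. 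Reading the chain of equivalences in both directions then gives the proposition.

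The one place where care is needed is the distinction between \emph{locally} Hamiltonian and globally Hamiltonian: closedness of $\mathbf{i}_{\mathcal{X}}\omega$ only guarantees a Hamiltonian on simply connected pieces of the coordinate domain. On $\mathbb{R}^{2N}$ — or on any domain with vanishing first de Rham cohomology — closed forms are exact and a global $H$ is recovered, which is the setting implicit in Whittaker's exposition; this is exactly the content of Proposition~\ref{prop:locally_Hamiltonian} and the remark following it, so I would state the proposition in that topologically trivial setting (or phrase the conclusion as "locally in Hamiltonian form"). Beyond this local/global bookkeeping and keeping the signs straight, there is no genuine obstacle: the argument is simply the translation of \S\,116 of \cite{whittaker_treatise_1988} into the language of Cartan calculus.
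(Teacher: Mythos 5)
Your proposal is correct and follows essentially the same route as the paper: identify $\sum p_i\,\delta q^i$ with the Liouville one-form $\theta$, reduce relative invariance to $\mathcal{L}_{\mathcal{X}}\omega=0$ via Cartan's formula (Proposition~\ref{prop:locally_Hamiltonian}), and conclude local Hamiltonicity, with the Poincar\'e lemma upgrading this to a global $H$ on topologically trivial domains exactly as the paper does. Your explicit expansion of $\mathbf{i}_{\mathcal{X}}\omega$ and the remark that $\partial_t(\mathbf{d}\theta)=0$ automatically (rather than assuming time-independence outright) are only minor refinements of the same argument.
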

Indeed, if we consider time-independent systems and if we identify the variational quantities $\delta q^i$ with the differentials $\mathbf{d} q^i$
then $\Sigma_{i=1}^Np_i\delta q^i$ becomes the Liouville one-form $\theta = \sum _{ i } p _i \mathbf{d} q ^i $, and $- \mathbf{d} \theta = \omega = \sum \mathbf{d} q ^i \wedge \mathbf{d} p _i $ becomes the symplectic form. Hence, by Cartan's magic formula, we have that
\begin{equation}\label{ri}
	\mathcal{L} _{ \mathcal{X} } \left( \mathbf{d}\theta \right) =-\mathcal{L} _{ \mathcal{X} } (\omega )=-\mathbf{d}  ({\bf i}_{\mathcal{X}}(\omega) ) - \mathbf{i} _{ \mathcal{X} } \mathbf{d} \omega =-\mathbf{d}  ({\bf i}_{\mathcal{X}}(\omega) )=0,
\end{equation}
and thus the vector field $ \mathcal{X} $ is locally Hamiltonian.
Moreover, if the manifold is contractible, thanks to the Poincar\'e lemma we have
\[
	{\bf i}_{\mathcal{X}}(\omega )=\mathbf{d} H,
\]
for some function $H$, that means that the system is Hamiltonian, and the previous statement follows in the case of the relative integral invariance condition.
For the absolute integral invariance we have
\begin{equation}\label{ai}
0 =     \mathcal{L} _{ \mathcal{X} } (\theta )={\bf i}_{\mathcal{X}} \mathbf{d} (\theta )+\mathbf{d} {\bf i}_{\mathcal{X}}(\theta )=-{\bf i}_{\mathcal{X}} \omega +\mathbf{d} {\bf i}_{\mathcal{X}}(\theta ),
\end{equation}
and the system is clearly Hamiltonian with Hamilton function ${\bf i}_{\mathcal{X}}(\theta )$.

Finally, in \S 136 of \cite{whittaker_treatise_1988}, the transformations of coordinates $(P_j(q^i,p_i), Q^j(q^i,p_i))$ that maintain the Hamiltonian form of (\ref{ode}), our canonoid transformations, are naturally characterized as those for which the form $\bf{PdQ}$ determines an invariant integral (relative or absolute) of the ODEs. Canonical transformations are defined  in the same section of \cite{whittaker_treatise_1988}.

This characterization provides a simple direct way to characterize the possible canonoid transformations, or, equivalently, the possible alternative Hamiltonian representations for the field $\mathcal X$. Given a system of Hamiltonian $H$ on a symplectic manifold with symplectic form $\omega$, such that ${\bf i}_{\mathcal{X}}\omega =\mathbf{d} H$, we can determine another local Hamiltonian structure for the field $\mathcal X$ whenever we know some non-closed one-form $ \Theta$, such that $\mathbf{d}  \Theta$ is  non degenerate, satisfying
\begin{equation}\label{dxm}
	\mathcal{L} _{ \mathcal{X} } \mathbf{d} \Theta=0.
\end{equation}
In this case, by (\ref{ri}), we know that, at least locally, $i_{\mathcal X}\mathbf {d} \Theta=\mathbf {d}K$ for some Hamiltonian function $K$. A stronger, global, condition is provided if $ \Theta$ is an absolute invariant integral with $\mathbf{d} \Theta$ non degenerate. In this case, by (\ref{ai})
 $ \Omega = -\mathbf{d}  \Theta$   is the new symplectic form and the new Hamiltonian $K$ of the system is
\[
	K={\bf i}_{\mathcal{X}} \Theta.
\]
In both cases, when we can write $ \Theta=P_i\mathbf {d}Q^i$ for some coordinate system $(P_i,Q^i)$, the transformation $(p_i,q^i)\leftrightarrow  (P_i,Q^i)$ is canonoid.
We remark that, if $ \Theta-\sum p _i \mathbf{d} q ^i  =\mathbf {d}f$ for some function $f$, then the transformation is the identity.

By putting $\mathcal{X} =\frac {\partial H}{\partial p_i} \frac \partial {\partial q^i}- \frac {\partial H}{\partial q^i}\frac \partial {\partial p_i}$, the condition $\mathcal{L}_{\mathcal{X}} \Theta=0$  becomes a system of first-order PDEs in $ \Theta_i(q^j,p_j)$ involving the Hamiltonian $H$.
These last two conditions are not very  different from those given for generating functions of canonoid transformations in Carinena and Ranada \cite{carinena_canonoid_1988}.

We can call the one-forms $ \Theta$ such that $\mathcal{L} _{ \mathcal{X} }  \Theta=0$ {\bf absolute generators} (or global generators) of a canonoid transformation. We call $ \Theta$ a {\bf relative generator} (or local generator) of a canonoid transformation when
\[
	\mathcal{L} _{ \mathcal{X} } \mathbf{d}  \Theta=0.
\]
If $(p_i,q^i)$ are canonical coordinates, then the Liouville one-form $ \Theta=\sum p _i \mathbf{d} q ^i $ is a relative generator of the identity transformation.

For example, if $H$ is the harmonic oscillator with coordinates $ (x ^1 , x ^2 , x ^3 , x ^4) = (q _1 , q _2 , p _1 , p _2) $
\[
	H=\frac 12(p_1^2+p_2^2+ q _1 ^2+ q _2   ^2),
\]
then
\[
	\mathcal{X} =p_1\frac \partial{\partial q _1  }+p_2\frac \partial{\partial q _2  }-q _1  \frac \partial{\partial p_1}-q _2  \frac \partial{\partial p_2}.
\]
Therefore, the absolute generators of canonoid transformations of the harmonic oscillator are characterized by
\[
	\mathcal{L} _{ \mathcal{X} }  \Theta=0 \Leftrightarrow \begin{cases} \mathcal{X} ( \Theta_1)= \Theta_3,\\ \mathcal{X} ( \Theta_2)= \Theta_4,\\ \mathcal{X} ( \Theta_3)=- \Theta_1,\\ \mathcal{X} ( \Theta_4)=- \Theta_2, \end{cases}
\]
with the evident integrability conditions $\mathcal{X} ^2  ( \Theta_i)=- \Theta_i$, $i=1,\ldots,4$.

A solution is  $ \Theta_1=-p_1+q _2  $, $ \Theta_2=q _2 $, $ \Theta_3=q _1   +p_2$, $ \Theta_4=p_2$. We have in this case 
\[
	\mathbf{d}  \Theta = \begin{bmatrix}
	0 & -1 & 2 & 0  \\
	1 & 0 & 0 & 0 \\
	-2 & 0 & 0 & -1\\
	0 & 0 & 1 & 0
	\end{bmatrix},
\]
a non degenerate form with $\mathbf {d}(\mathbb {J}- \Theta)\neq 0$, and
\[
	K={\bf i}_{\mathcal{X}} \Theta=-(p_1^2+q _1 ^2+q _1 p_2-p_1q _2) ,
\]
that is a first integral of $H$. Then, the form $ \Theta$ and the function $K$ provide an alternative Hamiltonian structure for the harmonic oscillator. In the case when $H^1( M,\mathbb R)$ is zero, relative generators also determine global Hamiltonian structures.

\section{Poissonoid Transformations}
The following definition is a natural extension of the definition of canonoid transformations to the case of regular Poisson manifolds.

\begin{definition}
	Let  $ ( M , \pi ) $ be a  Poisson manifold,  and let $ \mathcal{X} $ be a locally Hamiltonian vector field on $M$, that is, for each $ x \in M $, there is a neighborhood $U$ of $x$ and  locally defined function $  H_U $ such that $ \mathcal{X} = \pi ^\sharp \mathbf{d}  H_U  $. A diffeomorphism 
	$ f: M \to M $ is said to be {\bf Poissonoid} with respect to the vector field $ \mathcal{X} $ if the transformed vector field $ f _\ast \mathcal{X}$ is also locally Hamiltonian, that is, for each $ x \in M $, there is a neighborhood $V$ of $ f (x) $  and  a locally defined function $  K_V $ such that $ f _\ast  \mathcal{X} = \pi ^\sharp \cdot  \mathbf{d} K_V$.
\end{definition}
This is equivalent to saying that, for each $y \in M $, there is a neighborhood $V$ of $y$ and a locally defined function $ K _V $ such that $ \mathcal{X} = (f ^\ast \pi ^\sharp) \cdot \mathbf{d} (f ^\ast K _V) $.

\begin{remark}\label{rmk:poissonoid_locally_hamiltonian}
	By Proposition \ref{prop:Hamilton's equations I} the previous definition means that, for each point $ x \in M $ the system of equations associated with $ \mathcal{X} = \pi ^\sharp  \cdot \mathbf{d} H _U $ can be written, in splitting coordinates in the neighborhood $U$ of $x$, as:
	\begin{equation}\label{eqn:Hamilton_equations_H I}
		\dot q ^i  = \frac{ \partial H _U  } { \partial p _i } , \quad \dot p _i = -\frac{ \partial H _U  } { \partial q ^i  } , \quad \dot z ^k  = \sum _l \tilde \phi ^{ k l }(z) \frac{ \partial H _U  }{ \partial z ^l  }
	\end{equation}
	and the system associated with $ f _\ast \mathcal{X}  =  \pi ^\sharp\cdot  \mathbf{d} K _U $  can be written, choosing appropriate splitting coordinates in a neighborhood $V$ of $f(x)$, as:
	
	\begin{equation} \label{eqn:Hamilton_equations_K I}
		\dot Q ^i  = \frac{ \partial K _U  } { \partial P _i } , \quad \dot P _i =- \frac{ \partial K _U  } { \partial Q ^i  }, \quad \dot Z ^k  = \sum _l \tilde \phi ^{ k l }(Z) \frac{ \partial K _U  }{ \partial Z ^l  }
	\end{equation}
	so that the transformation $f$ carries the system of Hamilton's equations \ref{eqn:Hamilton_equations_H I} again into a system of Hamilton's equations \ref{eqn:Hamilton_equations_K I}. 
\end{remark}
 Note that, in the case of a Poisson manifold, not all  Poisson vector fields are locally Hamiltonian. This means that if $f $ is a Poissonoid map then $ \mathcal{L} _{ \mathcal{X} } (f ^\ast \pi) = 0 $, but the converse is in general not true. For instance, if $ \pi $ is the trivial Poisson bivector, then any  diffeomorphism pushes $ \pi $ to the trivial bivector and $ \mathcal{L} _{ \mathcal{X} } (f ^\ast \pi) = 0 $, for  any $ \mathcal{X}$. On the other hand, the only locally Hamiltonian vector field is the trivial one. If $ \mathcal{L} _{ \mathcal{X} } (f ^\ast \pi) = 0 $, we say that the map is {\bf weakly Poissonoid} with respect to the vector field $\mathcal{X} $. Weakly Poissonoid maps, in general, do not lead to the nice structure described in Remark \ref{rmk:poissonoid_locally_hamiltonian}.  
 \begin{remark} Since Poissonoid transformations are diffeomorphisms, by Corollary \ref{cor:compatible_poisson} they send compatible Poisson bivectors into compatible Poisson bivectors. This may be of use in finding hierarchies of (compatible) Poisson structures.  
 \end{remark}
 \subsection{Poissonoid transformations and integrals of motion.}
The definition of Poissonoid transformations can be specialized to Hamiltonian systems instead of  locally-Hamiltonian ones: If $ \mathcal{X}  $ is a Hamiltonian vector field on a Poisson manifold $ (M , \pi) $, a diffeomorphism  $ f : M \to M $ is a Poissonoid transformation with respect to $ \mathcal{X} $ if the transformed field $ f _\ast \mathcal{X} $ is also Hamiltonian with respect to $ \pi $, that is if there is a smooth function $K $ on $M$ such that $ f _\ast \mathcal{X} = \pi ^\sharp \cdot  \mathbf{d} K $.

Since $ f $ is a diffeomorphism, we have that  the vector field $ f _\ast \mathcal{X} $ is Hamiltonian with respect to $ \pi $ if and only if $ \mathcal{X} $ is Hamiltonian with respect the transformed bivector $ f ^\ast \pi $, i.e. there exists a smooth function $ K' $ on $M$ such that 
\[
    \mathcal{X} = (f ^\ast \pi ^\sharp )\cdot  \mathbf{d} ( K').   
\] 
This means that, if $ f $ is a Poissonoid transformation for $ \mathcal{X}$, then $ \mathcal{X} $ admits a new and possibly different Hamiltonian structure. If, in addition $ \pi $ and $ f ^\ast \pi $ are compatible, then the vector field $ \mathcal{X} $ will be  biHamiltonian.
Under some additional conditions these facts are enough to show that the system has a complete set of integrals in involution.

More explicitly, the existence of the additional Poisson bivector $ f ^\ast \pi $, provides a concrete way of obtaining additional constants of motion from the Casimirs of  $ f ^\ast \pi $
\begin{proposition} Let $ (M , \pi) $ a Poisson manifold, and let $ \{~ ,~\} $ the Poisson bracket associated with $\pi$. Suppose $ \mathcal{X} $ is an Hamiltonian vector field, so that there exist a function $H$ such that $ \mathcal{X} = \pi ^\sharp \cdot \mathbf{d} H  $, and suppose that $f$ is a Poissonoid transfomation for $ \mathcal{X} $, so that there exist a function $ K' $ such that $ \mathcal{X} = (f ^\ast \pi ^\sharp )\cdot  \mathbf{d} ( K')$. Then $ K ' $ and any Casimir of the Poisson bivector $\pi $ are constants of motion of the Hamiltonian system $ (M , \pi , H)$. 
\end{proposition}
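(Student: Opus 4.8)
The plan is to check, for each of the two kinds of function named in the statement, that its Poisson bracket with $H$ vanishes — equivalently, that it is annihilated by the flow of $\mathcal{X}=\mathcal{X}_H$. For a Casimir $C$ of $\pi$ this is immediate from the definition of Casimir function: $\{C,F\}=0$ for every $F\in C^\infty(M)$, and in particular $\{C,H\}=\mathcal{X}_H[C]=0$, so $C$ is a constant of motion of $(M,\pi,H)$. (The same reasoning, using instead the Casimir property of $f^\ast\pi$ together with the computation below, shows that Casimirs of the transformed bivector $f^\ast\pi$ are likewise constants of motion of $(M,\pi,H)$ — this is the form in which the result is applied in the examples.)

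The only genuine computation concerns $K'$. First I would record that, since $f$ is a diffeomorphism, $f^\ast\pi$ is again a skew-symmetric Poisson bivector by Proposition \ref{prop:diffeo_poisson}, so $(f^\ast\pi^\sharp)$ is the sharp map of an honest bivector field and the associated bracket $\{F,G\}':=(f^\ast\pi)(\mathbf{d}F,\mathbf{d}G)$ is skew-symmetric. Then, using $\mathcal{X}=(f^\ast\pi^\sharp)\cdot\mathbf{d}K'$ together with $\mathcal{X}=\mathcal{X}_H$,
\[
\{K',H\} = \mathcal{X}_H[K'] = \mathcal{X}[K'] = \left\langle \mathbf{d}K', (f^\ast\pi^\sharp)\cdot\mathbf{d}K'\right\rangle = (f^\ast\pi)(\mathbf{d}K',\mathbf{d}K') = \{K',K'\}' = 0,
\]
the last step by skew-symmetry. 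Hence $\{K',H\}=0$, i.e. $K'$ is a constant of motion of $(M,\pi,H)$, which finishes the argument.

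There is essentially no obstacle: the proposition is a one-line consequence of skew-symmetry once the right objects are in place. The single point needing care — and the only place the Poissonoid (diffeomorphism) hypothesis is used — is the identification of $f^\ast\pi^\sharp$ with the sharp of a genuine skew bivector, so that evaluating the bracket on a repeated argument gives zero; this is exactly what Proposition \ref{prop:diffeo_poisson} provides. I would present the two claims in the order above (Casimir first, then $K'$), and, since the surrounding text emphasizes the biHamiltonian point of view, I would add the remark that when $\pi$ and $f^\ast\pi$ are moreover compatible the first integrals produced this way feed, via the machinery of Section~2, into an involutive family.
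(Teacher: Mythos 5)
Your proof is correct and takes essentially the same route as the paper: both arguments rest on writing $\mathcal{X}$ in its two Hamiltonian forms, $\pi^\sharp\cdot\mathbf{d}H$ and $(f^\ast\pi)^\sharp\cdot\mathbf{d}K'$, and invoking skew-symmetry of the bracket associated with $f^\ast\pi$ (legitimate by Proposition \ref{prop:diffeo_poisson}). The only difference is bookkeeping: the paper displays the Casimir computation --- and, as you rightly observe, its proof actually treats Casimirs of $f^\ast\pi$, which is the nontrivial reading of the statement --- leaving $K'$ as ``similar,'' whereas you display the $K'$ computation and dispose of the Casimirs in a remark, covering both readings.
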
 

\begin{proof}
   Since $ f $ is a Poissonoid transformation for $ \mathcal{X} $ we have that 
   \[
       \mathcal{X} = \pi ^\sharp \cdot \mathbf{d} H = (f ^\ast \pi ^\sharp) \cdot \mathbf{d} K '.
   \] 
   Suppose $ C $ is a Casimir of $ (f ^\ast \pi)$ , and that $ \{ ~,~ \} ' $ is the Poisson bracket associated to  $ (f ^\ast \pi)$, then
   \begin{align*} 
      0 =& \{ C, K ' \}' =  (f ^\ast \pi) (\mathbf{d} C, \mathbf{d} K' ) \\
        =&  \left\langle  \mathbf{d} C, (f ^\ast \pi ^\sharp)\cdot  \mathbf{d} K' \right \rangle   =  \left\langle  \mathbf{d} C, \mathcal{X}  \right \rangle\\
        =&   \left\langle  \mathbf{d} C,  \pi ^\sharp \cdot \mathbf{d} H \right\rangle=  \pi (\mathbf{d} C, \pi ^\sharp \cdot \mathbf{d} H  \rangle\\
        =&  \{ C, H \}  
  \end{align*} 
  Hence, $\{ C , H \} = 0 $, and $C$ is a constant of motion of the Hamiltonian system $ (M , \pi , H) $. The proof that $ K ' $ is a costant of motion is similar.
\end{proof} 
 
 Another important property is that Poissonoid transformations preserve constants of motion:
 
     \begin{proposition}\label{prop:Poissonoid_tr_preserve_constant_of_motion}
    Let $ (M, \pi , H )$ be a Hamiltonian system, and let $ f $ be a Poissonoid transformation such that $  f _\ast \mathcal{X} _{ H } = \pi ^\sharp \cdot \mathbf{d} K $, then  $ F $ is a constant of motion of the transformed system if and only if  $ f ^\ast F $ is a constant of motion for $ (M , \pi , H) $.          
     \end{proposition}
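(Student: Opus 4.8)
The plan is to reduce the statement to the naturality of the directional derivative under a diffeomorphism together with the injectivity of the pullback map on functions. The one ingredient I would rely on is the elementary identity valid for any diffeomorphism $f:M\to M$, any vector field $\mathcal{X}$ on $M$, and any $F\in C^\infty(M)$:
\[
	f^\ast\bigl((f_\ast\mathcal{X})[F]\bigr)=\mathcal{X}[f^\ast F],
\]
equivalently $(f_\ast\mathcal{X})[F]=\bigl(\mathcal{X}[F\circ f]\bigr)\circ f^{-1}$. This follows at once from the chain rule, or from the local-coordinate expression for the pushforward of a vector field recalled before Proposition \ref{prop:diffeo_poisson}.

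First I would unwind the two conditions. The transformed system is the Hamiltonian system $(M,\pi,K)$, whose vector field is, by the Poissonoid hypothesis, $\mathcal{X}_K=\pi^\sharp\cdot\mathbf{d}K=f_\ast\mathcal{X}_H$. Hence the phrase ``$F$ is a constant of motion of the transformed system'' means $\mathcal{X}_K[F]=\{F,K\}=0$, that is $(f_\ast\mathcal{X}_H)[F]=0$; while ``$f^\ast F$ is a constant of motion of $(M,\pi,H)$'' means $\mathcal{X}_H[f^\ast F]=\{f^\ast F,H\}=0$.

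Then I would apply the identity above with $\mathcal{X}=\mathcal{X}_H$, obtaining $\mathcal{X}_H[f^\ast F]=f^\ast\bigl((f_\ast\mathcal{X}_H)[F]\bigr)=f^\ast\bigl(\mathcal{X}_K[F]\bigr)$. Since $f$ is a diffeomorphism, the pullback $f^\ast:C^\infty(M)\to C^\infty(M)$, $G\mapsto G\circ f$, is a bijection, in particular injective, so $f^\ast\bigl(\mathcal{X}_K[F]\bigr)=0$ if and only if $\mathcal{X}_K[F]=0$. Chaining these equivalences yields $\mathcal{X}_K[F]=0\iff\mathcal{X}_H[f^\ast F]=0$, which is exactly the assertion. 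I do not expect any real obstacle here: the only points requiring care are the bookkeeping of pushforward versus pullback (and the direction of $f$) and using the Poissonoid hypothesis in the single place it is needed, namely to identify the transformed vector field with $f_\ast\mathcal{X}_H$. As an alternative I would note that one can argue purely with Poisson brackets: from $f^\ast\{\cdot,\cdot\}_\pi=\{f^\ast\cdot,f^\ast\cdot\}_{f^\ast\pi}$ and the reformulation $\mathcal{X}_H=(f^\ast\pi)^\sharp\cdot\mathbf{d}(f^\ast K)$ recorded just after the definition of Poissonoid transformation, one gets $\{f^\ast F,H\}_\pi=\{f^\ast F,f^\ast K\}_{f^\ast\pi}=f^\ast\{F,K\}_\pi$, and again both sides vanish simultaneously by injectivity of $f^\ast$.
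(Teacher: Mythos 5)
Your proof is correct and follows essentially the same route as the paper: the paper's computation $\{F,K\}=\langle \mathbf{d}F, f_\ast\mathcal{X}_H\rangle$ followed by pulling back to get $\{f^\ast F,H\}$ is exactly your naturality identity $f^\ast\bigl((f_\ast\mathcal{X}_H)[F]\bigr)=\mathcal{X}_H[f^\ast F]$, combined with the Poissonoid hypothesis $\mathcal{X}_K=f_\ast\mathcal{X}_H$. You merely make explicit the injectivity of $f^\ast$, which the paper leaves implicit in ``the proof follows.''
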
 
 
     \begin{proof} \[\frac{ dF } { dt } = \{ F, K \} =  \left\langle \mathbf{d} F , \pi ^\sharp \cdot \mathbf{d} K \right\rangle = \left\langle \mathbf{d} F , f _\ast \mathcal{X} _H \right\rangle = \left\langle \mathbf{d} F , f _\ast \pi ^\sharp \cdot \mathbf{d} H \right\rangle \]
             since $ \mathcal{X} _K = f _\ast \mathcal{X} _H $.  Taking the pull-back of the above yields 
             \[f ^\ast\left(   \frac{ d   F } { dt }\right)  =  \frac{ d  f ^\ast F } { dt } = f ^\ast \left\langle \mathbf{d} F, f _\ast ( \pi ^\sharp  \cdot \mathbf{d} H) \right\rangle = \left\langle \mathbf{d} (f ^\ast F) , \pi ^\sharp \cdot \mathbf{d} H \right\rangle = \{ f ^\ast F, H \}.
             \]
             The proof follows.
     \end{proof} 
     \begin{remark}
        Since the transformation $f$ is a diffeomorphism, functionally independent constants of motion are sent into functionally independent constants of motion.  In the case when the Poisson structure coincides with a symplectic structure, Proposition \ref{prop:Poissonoid_tr_preserve_constant_of_motion} applies to canonoid transformations. 
     \end{remark} 
\begin{example}[Euler's Equations for the rigid body]
Let  $ \mathfrak{ so}(3) $ be the Lie algebra of $ SO (3) $, the group of rotations in $ \mathbb{R}^3  $, and let $ \pi $ be the Poisson tensor associated to the Lie-Poisson bracket.  Then $ (M, \pi) $, is a Poisson manifold. On the manifold $ M = \mathfrak{ so}^\ast (3) $ we introduce the coordinates $ \mathbf{m} = (m _1 , m _2 , m _3)   \in \mathfrak{ so}^\ast (3) \cong \mathbb{R}  ^3 $.
In these notations the Poisson tensor has the form
\[\pi _{ \mathbf{m} }  = \begin{bmatrix}
	0 & -m _3  & m _2 \\
	m _3  & 0&-m _1\\
	-m _2 &m _1 &0
	\end{bmatrix}.
\]
Euler's rigid body equations, are the Hamiltonian equations on $ (\mathfrak{ so}^\ast (3) , \pi) $  with Hamiltonian function
\[
	H = \frac{1}{2} \left( \frac{ m _1^2 } { I _1 } + \frac{ m _2 ^2 } { I _2 } + \frac{ m _2 ^2 } { I _3 } \right),
\]
where $ I _1 , I _2 , I _3 $ are the principal moments of inertia of the rigid body.
The corresponding Hamiltonian vector field, given by $ \mathcal{X} = \pi ^\sharp \cdot \mathbf{d} H$  is:
\[\mathcal{X}=
	\begin{bmatrix}[2.2]
		\dfrac{ I _2 - I _3 } { I _2 I _3 } m _2 m _3 \\
		\dfrac{ I _3 - I _1 } { I _3 I _1 } m _3 m _1 \\
		\dfrac{ I _1 - I _2 } { I _1 I _2 } m _1 m _2
	\end{bmatrix}
\]
Let $ N = \mathfrak{ so}^\ast (3) $ we introduce the coordinates $ \mathbf{n} = (n _1 , n _2 , n _3)   \in \mathfrak{ so}^\ast (3) \cong \mathbb{R}  ^3 $, then $ (N , \pi _{ \mathbf{n} }) $ is a Poisson manifold.
Consider the diffeomorphism  $ f : M \to N $ such that $ (n _1 , n _2 , n _3 ) = f (m _1 , m _2 , m _3) $ defined by the equations $n _1 = a m _1 , n _2 = b m _2 , n _3 = c m _3$, where $a$, $b$ and $c$ are non-zero constants.
The pushforward of $\mathcal{X}$ can be expressed as
\[f _\ast \mathcal{X}=
	\begin{bmatrix}[2.2]
		\dfrac{ (I _2 - I _3)a } { I _2 I _3bc } n _2 n _3 \\[1em]
		\dfrac{ (I _3 - I _1)b} { I _3 I _1ac } n _3 n _1  \\
		\dfrac{ (I _1 - I _2)c } { I _1 I _2ab } n _1 n _2
	\end{bmatrix}
\]
If $ a = \sqrt { I _2 I _3 }, b = \sqrt { I _1 I _3 } $   and $ c = \sqrt { I _1 I _2 } $

\[f _\ast \mathcal{X}=
	\begin{bmatrix}[2.2]
		\dfrac{ (I _2 - I _3) } {I _1  I _2 I _3 } n _2 n _3 \\
		\dfrac{ (I _3 - I _1)} { I _1 I _2 I _3  } n _3 n _1 \\
		\dfrac{ (I _1 - I _2) } { I _1 I _2 I _3 } n _1 n _2
	\end{bmatrix}
\]
and one possible corresponding Hamiltonian, obtained from $f _\ast \mathcal{X} = \pi _{ \mathbf{n} } \mathbf{d} K $, is
\[K =- \dfrac{1}{2} \left( \dfrac{ n _1 ^2 } { I _2 I _3 } + \dfrac{ n _2 ^2 } { I _1 I _3 } + \dfrac{ n _3 ^2 } { I _1 I _2 } \right).\]

Pulling back K we get $ f ^\ast K = -\frac{1}{2} ( m _1 ^2 + m _2 ^2 + m _2 ^2) $. Pulling back the form $ \pi _{ \mathbf{n} } $ yields
\[ f ^\ast \pi _{ \mathbf{n} } =
	\begin{bmatrix}[2]
		0                        & - \dfrac{ m _3 } { I _3 } & \dfrac{ m _2}{I _2 }     \\
		\dfrac{ m _3 } { I _3 }  & 0                         & -\dfrac{ m _1 } { I _1 } \\
		-\dfrac{ m _2 } { I _2 } & \dfrac{ m _1 } { I _1 }   & 0
	\end{bmatrix}
\]
and thus $ \mathcal{X} = \pi _{ \mathbf{m} }  ^\sharp \cdot \mathbf{d} H = (f ^\ast \pi _{ \mathbf{n} } ) ^\sharp  \cdot \mathbf{d} (f ^\ast  K)$. This shows that the rigid body equations are biHamiltonian.
\end{example} 
 \begin{example}[Euler's equations on $ \mathfrak{ so} ^\ast  (4) $]  
Here we use the same notations as in \cite{falqui_note_2007}. The manifold  $ M = \mathfrak{ so} ^\ast (4)$ is six dimensional.
Since $ \mathfrak{ so }(4) $ is isomorphic to the space of $ 4 \times 4 $ skew-symmetric matrices, identifying $ \mathfrak{ so}^\ast (4) $ with $ \mathfrak{ so}(4) $  we can write any element of $ \mathfrak{ so}^\ast (4) $ as
\[
	\mathcal{M} =  \sum_{ i <j = 1 } ^4  m _{ ij } \left( E _{ ij } - E _{ ji }\right).
\]
where  $ E _{ ij } $ denotes the elementary matrix whose $ (i,j) $ entry is 1. The manifold
$ M = \mathfrak{ so}^\ast (4) $ is endowed with a Lie-Poisson structure that, in the  variables $\mathbf{m} = (m _{ 12 }, m _{ 13 } , m _{ 14 } , m _{ 23 } , m _{ 24 } , m _{ 34 })$, can be written as
\[
	\pi _{ \mathbf{m} }  = \begin{bmatrix}
	0 & - m _{ 23 }  & - m _{ 24 } & m _{ 13 } & m _{ 14 } & 0 \\
	m _{ 23 } & 0 & - m _{ 34 } & - m _{ 12 } & 0 & m _{ 14 } \\
	m _{ 24 } & m _{ 34 } & 0 & 0 & - m _{ 12 } & - m _{ 13 } \\
	- m _{ 13 } & m _{ 12 } & 0 & 0 & - m _{ 34 } & m _{ 24 } \\
	- m _{ 14 } & 0 & m _{ 12 } & m _{ 34 } & 0 & - m _{ 23 } \\
	0 & - m _{ 14 } & m _{ 13 } & - m _{ 24 } & m _{ 23 } & 0
	\end{bmatrix}.
\]
The Euler-Manakov equations for the rigid body on $ (\mathfrak{ so}^\ast (4) , \pi _{ \mathbf{m} }) $ are the Hamiltonian equations with the following quadratic Hamiltonian
\[
	H = \frac{1}{2} \sum _{ i < j } a _{ ij } m _{ ij} ^2
\]
where the coefficients $ a _{ ij } $ can be written as
\[
	a _{ ij } = J _l ^2 + J _k ^2
\]
with $ \{ i,j,l,k \} $ a permutation of $ \{ 1,2,3,4 \} $.
The rank of the $ \mathfrak{ so } ^\ast  (4) $ Lie-Poisson structure is $ 4 $ almost everywhere, the Casimirs are
\[
	C _1 = m _{ 12 } ^2 + m _{ 13 } ^2 + m _{ 14 } ^2 + m _{ 23 } ^2 + m _{ 24 } ^2 + m _{ 34 } ^2,  \quad C _2 = m _{ 12 } m _{ 34 } + m _{ 14 } m _{ 23 } - m _{ 13 } m _{ 24 }
\]
where $ C _1 = - \frac{1}{2} \operatorname{Tr} (\mathcal{M}) $, and $ C _2 = \operatorname{Pf }(\mathcal{M})$ (with $ \operatorname{Pf}  (\mathcal{M})$  the Pfaffian of $ \mathcal{M} $).
The corresponding Hamiltonian vector field, given by $ \mathcal{X} = \pi ^\sharp  \cdot \mathbf{d} H $ is
\[
	\mathcal{X} = \begin{bmatrix}[1.3]
	J _1 ^2 (m _{ 13 } m _{ 23 } + m _{ 14 } m _{ 24 }) - J _2 ^2 (m _{ 13 } m _{ 23 } + m _{ 14 } m _{ 24 })     \\
	J _3 ^2 (m _{ 12 } m _{ 23 } - m _{ 14 } m _{ 34 }) + J _{ 1 } ^2 (m _{ 14 }  m _{ 34 } - m _{ 12 } m _{ 23 })    \\
	J _4 ^2 ( m _{ 12 } m _{ 24 } + m _{ 13 } m _{ 34 }) - J _1 ^2  (m _{ 12 } m _{ 24 } + m _{ 13 } m _{ 34 }) \\
	J _2 ^2 (m _{ 12 } m _{ 13 } + m _{ 24 } m _{ 34 }) - J _3 ^2 (m _{ 12 } m _{ 13 } + m _{ 24 } m _{ 34 }) \\
	J _4 ^2 (m _{ 23 } m _{ 34 } - m _{ 12 } m _{ 14 }) + J _2 ^2 (m _{ 12 } m _{ 14 } - m _{ 23 } m _{ 34 }) \\
	J _3 ^2 (m _{ 13 } m _{ 14 } + m _{ 23 } m _{ 24 }) - J _4 ^2 (m _{ 13 } m _{ 14 } + m _{ 23 } m _{ 24 })
	\end{bmatrix}
\]
Let $ N = \mathfrak{ so} (4) $ we introduce the coordinates $ \mathbf{n} = (n _{ 12 } , n _{ 13 } , n _{ 14 } ,n _{ 23 } , n _{ 24 } , n _{ 34 }  ) $ then $ (N , \pi _{ \mathbf{n} }) $ is a Poisson manifold.
Consider the diffeomorphism  $ f : M \to N $ defined by the equations
\begin{align*}
	n _{ 12 } & =  \frac{ m _{ 12 }} { J _1 J _2 }   ,~ n _{ 13 }  =  \frac{ m _{ 13 } } { J _1 J _3 }   , ~ n _{ 14 }  = \frac{  m _{ 14 }} { J _1 J _4 } \\
	n _{ 23 } & =   \frac{ m _{ 23 }} { J _2 J _3 }  , ~n _{ 24 }  = \frac{ m _{ 24 }} { J _2 J _4 }    , ~ n _{ 34 }  = \frac{ m _{ 34 }} { J _3 J _4 }
\end{align*}
The pushforward of $\mathcal{X}$ can be expressed as
\[
	f _\ast  \mathcal{X} = \begin{bmatrix}[1.3]
	J _1 ^2 (J _3 ^2 n _{ 13 } n _{ 23 } +J _4 ^2  n _{ 14 } n _{ 24 }) - J _2 ^2 (J _3 ^2 n _{ 13 } n _{ 23 } + J _4 ^2  n _{ 14 } n _{ 24 })     \\
	J _3 ^2 (J _2 ^2 n _{ 12 } n _{ 23 } - J _4 ^2 n _{ 14 } n _{ 34 }) + J _{ 1 } ^2 (J _4 ^2 n _{ 14 }  n _{ 34 } -J _2 ^2  n _{ 12 } n _{ 23 })    \\
	J _4 ^2 ( J _2 ^2 n _{ 12 } n _{ 24 } + J _3 ^2 n _{ 13 } n _{ 34 }) - J _1 ^2  (J _2 ^2 n _{ 12 } n _{ 24 } + J _3 ^2 n _{ 13 } n _{ 34 }) \\
	J _2 ^2 (J _1 ^2 n _{ 12 } n _{ 13 } + J _4 ^2 n _{ 24 } n _{ 34 }) - J _3 ^2 (J _1 ^2 n _{ 12 } n _{ 13 } + J _4 ^2 n _{ 24 } n _{ 34 }) \\
	J _4 ^2 (J _3 ^2 n _{ 23 } n _{ 34 } - J _1 ^2  n _{ 12 } n _{ 14 }) + J _2 ^2 (J _1 ^2 n _{ 12 } n _{ 14 } -J _3 ^2  n _{ 23 } n _{ 34 }) \\
	J _3 ^2 (J _1 ^2 n _{ 13 } n _{ 14 } + J _2 ^2 n _{ 23 } n _{ 24 }) - J _4 ^2 (J _1 ^2 n _{ 13 } n _{ 14 } + J _2 ^2 n _{ 23 } n _{ 24 })
	\end{bmatrix}
\]
and one possible Hamiltonian, corresponding to $f _\ast \mathcal{X} $, and obtained from $f _\ast \mathcal{X} = \pi _{ \mathbf{n} } \mathbf{d} K $, is
\[K =- \dfrac{1}{2} \left(J _1 ^2 J _2 ^2 n _{ 12 } ^2 + J _1^2  J _3^2  n _{ 13 } ^2 + J _1 ^2  J _4 ^2 n _{ 14 } ^2 + J _2 ^2 J _3 ^2 n _{ 23 } ^2 + J _2 ^2 J _4 ^2 n _{ 24 } ^2 + J _3 ^2 J _4 ^2 n _{ 34 } ^2  \right).\]
Pulling back K we get
\[f ^\ast K = -\frac{1}{2} ( m _{ 12 } ^2 + m _{ 13 } ^2 + m _{14 } ^2 + m _{ 23 } ^2 + m _{ 24 } ^2 + m _{ 34 } ^2), \]
a Casimir of $ \pi_{ \mathbf{m} }  $. Pulling back the form $ \pi _{ \mathbf{n} } $ yields
\[
	f ^\ast \pi _{ \mathbf{n} }  = \begin{bmatrix}[1.3]
	0 & - J _1 ^2 m _{ 23 }  & - J _1 ^2 m _{ 24 } & J _2 ^2 m _{ 13 } & J _2 ^2 m _{ 14 } & 0 \\
	J _1 ^2 m _{ 23 } & 0 & - J _1 ^2 m _{ 34 } & - J _3  ^2 m _{ 12 } & 0 & J _3  ^2 m _{ 14 } \\
	J _1 ^2 m _{ 24 } & J _1 ^2 m _{ 34 } & 0 & 0 & - J _4 ^2 m _{ 12 } & - J _4 ^2 m _{ 13 } \\
	- J _2 ^2 m _{ 13 } & J _3 ^2 m _{ 12 } & 0 & 0 & - J _2 ^2 m _{ 34 } & J _3 ^2 m _{ 24 } \\
	- J _2 ^2 m _{ 14 } & 0 &J _4 ^2  m _{ 12 } & J _2 ^2 m _{ 34 } & 0 & - J _4 ^2 m _{ 23 } \\
	0 & - J _3 ^2 m _{ 14 } & J _4 ^2 m _{ 13 } & - J _3 ^2 m _{ 24 } & J _4 ^2 m _{ 23 } & 0
	\end{bmatrix}.
\]
Therefore, $ \mathcal{X} = \pi _{ \mathbf{m} }  ^\sharp\cdot  \mathbf{d} H = (f ^\ast \pi _{ \mathbf{n} } ) ^\sharp  \cdot \mathbf{d} (f ^\ast  K)$. This computation recovers the Poisson structure obtained in \cite{bolsinov_compatible_1992,morosi_euler_1996,falqui_note_2007}. It can be shown that this Poisson structure is compatible with the original one. Thus, we rediscovered that the Euler-Manakov equations of motion admit a bihamiltonian formulation (compare our expressions with \cite{falqui_note_2007}). Using this additional Poisson structure it is possible to prove the integrability of the Euler-Manakov equations (see \cite{bolsinov_compatible_1992,morosi_euler_1996,falqui_note_2007}), in fact they admit one additional integral of motion
\[
	I _1 = J _3 ^2 J _4 ^2 m _{ 12 } ^2 + J _2 ^2 J _4 ^2 m _{ 13 } ^2+ J _2 ^2 J _3 ^2 m _{ 14 } ^2+ J _1 ^2 J _4 ^2 m _{ 23 } ^2+ J _1 ^2 J _3 ^2 m _{ 24 } ^2+ J _1 ^2 J _2 ^2 m _{ 34 } ^2.
\]
\end{example} 
\begin{example}[Kirchhoff's equations]
In the previous examples it was possible to find a biHamiltonian structure by finding a Poissonoid transformation. In those examples, the transformation was a rescaling, and thus it was quite simple. We now we give an example where the Poissonoid transformation is more complex.

The motion of a rigid body in an ideal fluid can be described by the so called {\bf Kirchhoff's equations}.  These equations can be written as Hamiltonian equations on $ \mathfrak{ e } ^\ast (3)$  (the Lie algebra of the group $ E (3) $ of motions of the three-dimensional Euclidean space) with  the Lie-Poisson bracket. The Lie algebra $ \mathfrak{ e}(3) $ is a semidirect sum of $ \mathfrak{ so}(3) $ and  the group of translations in $\mathbb{R}^3$, that is $ \mathfrak{ e }(3) = \mathfrak{ so }(3) \oplus \mathbb{R}  ^3 $. On the manifold $ \mathcal{M} \cong \mathfrak{ e}^\ast (3) $, we introduce the coordinates $ \mathbf{z} = ( \mathbf{p} , \mathbf{m} ) $, where
\[
	\mathbf{p} = ( p _1 , p _2 , p _3 ) \in \mathbb{R}^3, \quad \mathbf{m} = ( m _1 , m _2 , m _3 ) \in \mathbb{R}  ^3 \cong \mathfrak{ so}(3)
\]
are two three dimensional vectors. The hat map $ \hat~ :\mathbb{R}^3   \to \mathfrak{ so}(3) $ defined as
\[
	\mathbf{v} = (v _1 , v _2 , v _3 ) \to  \mathbf{\hat v}  =  \begin{bmatrix}
	0 & -v _3  & v _2 \\
	v _3  & 0&-v _1\\
	-v _2 &v _1 &0
	\end{bmatrix}
\]
defines an isomorphism of the Lie algebras $(\mathbb{R}^3  , \times) $ and $ (\mathfrak{ so}(3) , [,]) $. If we identify the Lie algebra $ \mathfrak{ e}(3) $ with its dual, in these notations, the Lie-Poisson tensor has the form
\[
	\pi  = \begin{bmatrix}
	0 & 0  & 0& 0 & - p _3  & p _2  \\
	0 & 0 & 0 &  p _3  & 0 & - p _1  \\
	0 & 0 & 0 & - p _2  &  p _1  & 0 \\
	0 & - p _3  &  p _2  & 0 & - m _3 & m _2 \\
	p _3  & 0 &- p _1  & m _3 & 0 & - m _1 \\
	- p _2  & p _1  & 0 & - m _2  & m _1  & 0.
	\end{bmatrix}
\]
This Poisson tensor has the following quadratic Casimirs:
\[
	C _1 = p _1 ^2 + p _2 ^2 + p _3 ^2, \quad C _2 = m _1 p _1 + m _2 p _2 + m _3 p _3.
\]
Hamilton's equations corresponding to a quadratic Hamiltonian are called Kirchhoff equations.

A famous integrable case of the Kirchhoff equations was discovered by  Clebsch and it is characterized by the Hamiltonian
\[
	H _1 = \frac{1}{2} \left(  m _1 ^2 + m _2 ^2 + m _3 ^2 + \omega _1 p _1 ^2 + \omega _2 p _2 ^2 + \omega _3 p _3 ^2 \right)
\]
The Hamiltonian vector field in this case is
\[
	\mathcal{X} = \pi ^\sharp \cdot \mathbf{d} H _1  =
	\begin{bmatrix}[1.3]
		m _3 p _2 - m _2 p _3               \\
		m _1 p _3 - m _3 p _1               \\
		m _2 p _1 - m _1 p _2               \\
		( \omega _3 - \omega _2 ) p _2 p _3 \\
		( \omega _1 - \omega _3 ) p _1 p _3 \\
		( \omega _2 - \omega _1 ) p _1 p _2
	\end{bmatrix}
\]
and an additional integral of motion is
\[
	I = \frac{1}{2} \left( \omega _1 m _1 ^2 + \omega _2 m _2 ^2 + \omega _3 m _3 ^2 - \omega _2 \omega _3 p _1 ^2 - \omega _3 \omega _1 p _2 ^2 - \omega _1 \omega _2 p _3 ^2 \right).
\]

A Poisson bivector compatible with $\pi$  is
\[
	\widetilde \eta  =
	\begin{bmatrix}
		0      & - m _3                         & m _2                           & 0                               & 0                              & 0                              \\
		m _3   & 0                              & - m _1                         & (\omega _1 - \omega _2  ) p _3  & 0                              & (\omega _2 - \omega _1)   p _1 \\
		- m _2 & m _1                           & 0                              & (\omega _3 - \omega _1)    p _2 & (\omega _1 - \omega _3)   p _1 & 0                              \\
		0      & (\omega _2 - \omega _1)   p _3 & (\omega _1 - \omega _3)   p _2 & 0                               & (\omega _3 - \omega _1) m _3   & (\omega _1 - \omega _2) m _2   \\
		0      & 0                              & (\omega _3 - \omega _1)   p _1 & (\omega _1 - \omega _3) m _3    & 0                              & 0                              \\
		0      & (\omega _1 - \omega _2)   p _1 & 0                              & (\omega _2 - \omega _1) m _2    & 0                              & 0
	\end{bmatrix}
\]
Note that a linear change of variables transforms the Euler equations on $ \mathfrak{ so}^\ast (4) $  to equations $\mathfrak{ e}^\ast (3) $, which in the case of a positive definite quadratic Hamiltonian are the Kirchhoff's equations describing the motion of a rigid body in an ideal fluid \cite{bolsinov_compatible_1992}. Thus, in principle, the  Poisson bivector above  and the Poissonoid transformation we find below could be obtained from the previous example. However, we prefer to find them with a direct computation.

In this case, it is easy to verify  that $ \mathcal{X} = \pi ^\sharp\cdot  \mathbf{d} H _1 = (\widetilde \eta ) ^\sharp \cdot \mathbf{d} ( - \frac{1}{2} C _1  ) $. It can also be shown that this Poisson structure is compatible with the original one. Hence, the Clebsch system is biHamiltonian.

	In analogy with the previous examples, we search now for a transformation $\phi  = f ^{ - 1 }  $ such that
	\begin{equation}\label{pb}
		-\frac 12\tilde \eta=f ^*\pi_F,
	\end{equation}
	(we consider here $-\frac 12 \tilde \eta$ instead of $\tilde \eta$ for computational convenience) and, consequently, $\mathcal{X}=(f ^*\pi_F) d(C_1)$, where $F^1 ,\ldots, F^6 $ are new coordinates and
	\[
		\pi_F  = \begin{bmatrix}
		0 & 0  & 0& 0 & - F ^3   & F ^2   \\
		0 & 0 & 0 &  F ^3   & 0 & - F ^1   \\
		0 & 0 & 0 & - F ^2   &  F ^1   & 0 \\
		0 & - F ^3   &  F ^2   & 0 & - F ^6  & F ^5   \\
		F ^3   & 0 &- F ^1   & F ^6  & 0 & - F ^4  \\
		- F ^2   & F ^1   & 0 & - F ^5   & F ^4   & 0
		\end{bmatrix}.
	\]
	While in all the previous examples $f$ was a simple rescaling of the old coordinates, here we must assume  that $f$, and thus $ \phi $, are linear and homogeneous in the new coordinates $F ^i$, so that
	\begin{equation} \label{eqn:definition_of_f}
        \phi : \phi ^i=p_i=a^i_jF^j, \qquad \phi ^{i+3}=m_i=a^{i+3}_jF^j, \quad i=1,\ldots ,3; \, j=1,\ldots, 6;
	\end{equation}
	with $a^k_j$ constants. The previous relations allow to write the components of $\tilde \eta$ with respect to the $(\phi ^j)$ as functions of the $(F^j)$, therefore, we can solve the equation (\ref{pb}) without inverting the transformation $\phi $. Substituting   $ \phi =f ^{ -1 } $ and  $ X=F  $ in equation \eqref{eqn:pull_back_bivector},  and using \eqref{eqn:definition_of_f}, yields:
	\begin{align*}
		( f ^\ast \pi) ^{ jk } & = \left( \frac{ \partial \phi  ^j } { \partial F ^r } \circ f \right)  \left( \frac{ \partial \phi  ^k } { \partial F ^s } \circ f \right) \pi ^{ rs } \circ f \\
		                          & =  \left( a ^j _r \circ f \right)  \left( a ^k _s  \circ f \right) \pi ^{ rs } \circ f                                                                \\
		                          & =   a ^j _r  a ^k _s   \pi ^{ rs } \circ (\phi  ^{ - 1 }  )
	\end{align*}
    Therefore, condition (\ref{pb})  written explicitly  in local coordinates is
    \[  -\frac 12 \tilde \eta ^{ jk }(x) =  a ^j _r  a ^k _s   \pi ^{ rs }( f (x)) =   a ^j _r  a ^k _s   \pi ^{ rs }  (\phi ^{ - 1 }(x)  )\]
	or, if we take $ x = \phi (F) $
	\[
		-\frac 12 \tilde \eta ^{ jk }(\phi (F)) =     a ^j _r  a ^k _s   \pi ^{ rs }  (\phi  ^{ - 1 }(\phi  (F))  )=  a ^j _r  a ^k _s   \pi ^{ rs }  (F )
	\]
that can also be written as
\[
	-\frac 12 \tilde \eta ^{jk}=a^j_ra^k_s\pi_F^{rs}
\]
	This last equation can be solved for the $a^j_k$ giving for example the following linear transformation
	\[
		\begin{bmatrix} p_1 \\
			p_2 \\
			p_3 \\
			m_1 \\
			m_2 \\
			m_3
		\end{bmatrix}=\begin{bmatrix}
		a & 0 & -\frac {a}{2\epsilon}B & 0 & 0 & 0 \\
		0 & 0 &  0 & 0 & \frac 12 A & 0 \\
		0 & 0 & 0 & \frac 12 B & 0 & \epsilon \\
		0 & 0 & 0 & -\epsilon A & 0 & \frac 12 AB\\
		0 & -\frac{a}{2\epsilon}C & 0 & 0 & 0 & 0 \\
		-\frac{a}{2\epsilon}AB & 0 & -aA & 0 & 0 & 0
		\end{bmatrix}
		\begin{bmatrix}
			F_1 \\
			F_2 \\
			F_3 \\
			F_4 \\
			F_5 \\
			F_6
		\end{bmatrix},
	\]
	where $a$, $\epsilon$ are real parameters, $A=\sqrt{\omega_1-\omega_2}$, $B=\sqrt{\omega_1-\omega_3-4\epsilon^2}$, $ C = \omega _1 - \omega _3 $, and whose determinant is
	\[
	-\left(\frac{a}{4\epsilon}AC \right)^3.
	\]
	If $\omega_1>\omega_2>\omega_3\geq 0$ then a positive  value of  $\epsilon$ can always be found such that the transformation is real. In the new coordinates we have
    \begin{align*}
	C_1= & a^2(F^1)^2+\left(\frac a{2\epsilon}B\right)^2(F ^3 )^2+\frac 14 B^2(F^4 )^2+\frac 14A^2(F ^5 )^2\\
    & +\epsilon^2(F ^6 )^2-\frac {a^2}\epsilon B F ^1 F ^3 +\epsilon B F ^4 F ^6 .
\end{align*}
\end{example} 
%
\section{Infinitesimal  Poissonoid transformations and symmetries}
%
Let $ (M , \pi ,\mathcal{X}) $ be a locally Hamiltonian vector field with a locally defined Hamiltonian function $H$, and $ f _t $ a one-parameter group of  Poissonoid diffeomorphisms, then $ (f _t )_\ast \mathcal{X}= \pi ^\sharp \cdot \mathbf{d} K _t  $. This expression can be rewritten as 
\[
 \mathcal{X} = f _t ^\ast ( \pi) ^\sharp \cdot (f _t ^\ast \mathbf{d} K _t).  
\]
Let $ \xi $ be the vector field on $M$ that is the  infinitesimal generator of $ f _t $. Differentiating  the previous equation with respect to $t$ yields 

\begin{equation} 
    \begin{aligned} 
   0& = \left .  \frac{ d } { dt }\right |_{t = 0 } \left[  f _t ^\ast ( \pi) ^\sharp \cdot (f _t ^\ast \mathbf{d} K _t)\right]\\
    & = (\mathcal{L} _{ \xi } \pi ^\sharp) \cdot (\mathbf{d} H) + \pi ^\sharp \cdot (\mathcal{L} _{ \xi } \mathbf{d} H) + \pi ^\sharp \cdot (\mathbf{d} \dot K)  \\
  & = \mathcal{L} _{ \xi } (\pi ^\sharp \cdot \mathbf{d} H) + \pi ^\sharp\cdot  (\mathbf{d} \dot K)   
\end{aligned}      
\end{equation} 
where we used the definition of Lie derivative and $ \dot K $ is the derivative of $ K $ with respect to $t$ computed at $ t = 0 $.
Hence
\[
     \mathcal{L} _{ \xi } (\pi ^\sharp \cdot \mathbf{d} H) =[\xi , \mathcal{X} ] = - \pi ^\sharp \cdot (\mathbf{d} \dot K) =  \pi ^\sharp \cdot (\mathbf{d} F)
\]
where $ F = - \dot K $.
A vector field $ \xi $ satisfying the equation above is called a {\bf infinitesimal  Poissonoid transformation}.

\begin{proposition}  \label{prop:poissonoid_basic} 
A vector field $ \xi $ is a infinitesimal  Poissonoid transformation for a locally Hamiltonian vector field $ \mathcal{X} $ if and only if $ [\xi , \mathcal{X} ]$  is a locally Hamiltonian  vector field. Moreover, if $ \xi $ is an infinitesimal symmetry  of $ \mathcal{X} $ then   it is also an infinitesimal   Poissonoid transformation. 

\end{proposition}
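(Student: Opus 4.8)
The plan is to unwind the definitions and observe that the first equivalence is essentially a tautology. The key elementary fact is that for any vector field $\xi$ the Lie derivative of $\mathcal{X}$ along $\xi$ equals the Lie bracket, $\mathcal{L}_\xi \mathcal{X} = [\xi, \mathcal{X}]$, and that although the defining relation for an infinitesimal Poissonoid transformation is written via a local Hamiltonian $H$ as $\mathcal{L}_\xi(\pi^\sharp \cdot \mathbf{d} H)$, the quantity $\pi^\sharp \cdot \mathbf{d} H$ is just $\mathcal{X}$ on the domain of $H$; hence $\mathcal{L}_\xi(\pi^\sharp \cdot \mathbf{d} H) = [\xi, \mathcal{X}]$ is globally defined and independent of the choice of local Hamiltonian. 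First I would therefore restate the definition in the form: $\xi$ is an infinitesimal Poissonoid transformation if and only if, near every point $x \in M$, there is a neighborhood $U$ and a smooth function $F_U$ on $U$ with $[\xi, \mathcal{X}] = \pi^\sharp \cdot \mathbf{d} F_U$ on $U$. By the definition of a locally Hamiltonian vector field on a Poisson manifold recalled in Section 2, this is precisely the assertion that $[\xi, \mathcal{X}]$ is a locally Hamiltonian vector field, which proves the ``if and only if''.

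For the second statement, suppose $\xi$ is an infinitesimal symmetry of $\mathcal{X}$, i.e. $\mathcal{L}_\xi \mathcal{X} = [\xi, \mathcal{X}] = 0$. The zero vector field is trivially locally Hamiltonian: on any coordinate neighborhood $U$ it equals $\pi^\sharp \cdot \mathbf{d} F_U$ with $F_U$ taken to be any locally constant function (or, more generally, any Casimir). Hence, by the equivalence just established, $\xi$ is an infinitesimal Poissonoid transformation, with associated function $F = 0$.

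The argument has no genuine obstacle; it is conceptual bookkeeping, and the proposition should really be read as a clarification of the definition rather than a substantial result. The one point that needs care is keeping the local-versus-global distinction straight: in keeping with the paper's ``locally Hamiltonian'' terminology, the function $F$ realizing $[\xi, \mathcal{X}] = \pi^\sharp \cdot \mathbf{d} F$ need only exist locally, with no compatibility demanded among the local choices near different points. It is also worth stating explicitly---to prevent conflation with the motivating computation preceding the proposition---that this statement concerns only the infinitesimal (differential) condition and makes no claim that such a $\xi$ integrates to a one-parameter group of Poissonoid diffeomorphisms.
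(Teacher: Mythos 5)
Your proof is correct and takes essentially the same route as the paper: both treat the first equivalence as a direct unwinding of the definition of an infinitesimal Poissonoid transformation (namely $[\xi,\mathcal{X}]=\pi^\sharp\cdot\mathbf{d}F$ for a locally defined $F$, which is verbatim the definition of a locally Hamiltonian vector field), and both settle the second claim by observing that $[\xi,\mathcal{X}]=0$ is locally Hamiltonian with $F$ a constant function. Your added remarks on independence of the choice of local Hamiltonian and on the local-versus-global bookkeeping are sound but do not change the argument.
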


\begin{proof}

      By definition of infinitesimal Poissonoid transformation we have  $[\xi , \mathcal{X} ]=  \pi ^\sharp \cdot (\mathbf{d} F)$, and thus $ \xi $ is locally Hamiltonian if and only if $ \xi $ is an infinitesimal  Poissonoid transformation. If  $ \xi $ is an infinitesimal symmetry  of $ \mathcal{X} $ then $ [\xi , \mathcal{X} ] = 0 $. Consequently, $ [\xi, \mathcal{X} ] $ is locally Hamiltonian with $ F $ the constant function, and thus it is an infinitesimal Poissonoid transformation.  
      \end{proof}
In \cite{carinena_canonoid_2013} canonoid transformations are studied using cohomology techniques. If $ (M , \pi, \mathcal{X})$ is a locally Hamiltonian system, then,  in analogy with \cite{carinena_canonoid_2013}  one can introduce  twisted boundary and  coboundary operators defined as follows:
\[
    \mathbf{\partial } _{ \mathcal{X} } = \mathbf{i} _{ \mathcal{X} } \circ \mathbf{d}\circ \mathbf{i} _{ \mathcal{X} },   
    \quad\quad \mathbf{d} _{ \mathcal{X} } = \mathbf{d}  \circ \mathbf{i} _{ \mathcal{X} } \circ \mathbf{d}  
\]
where $ \mathbf{d}  $ is the usual de Rham differential and $ \mathbf{i} _{ \mathcal{X} } $ denotes contraction. 
   \begin{proposition} Suppose $ (M , \pi , \mathcal{X} ) $ is a  locally Hamiltonian  system  with (locally defined) Hamiltonian $H$, then $ \mathcal{X} _F $ is a  Hamiltonian infinitesimal symmetry  of $ (M , \pi , \mathcal{X}) $ if and only if $ \mathbf{d} _{ \mathcal{X} } F \in \ker (\pi ^\sharp) $.  

   \end{proposition}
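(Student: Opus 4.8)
The plan is to show that, for the locally Hamiltonian vector field $\mathcal{X}$, applying $\pi^\sharp$ to $\mathbf{d}_\mathcal{X} F$ produces exactly $\mathcal{L}_\mathcal{X}\mathcal{X}_F=[\mathcal{X},\mathcal{X}_F]$; once this identity is in hand the stated equivalence is immediate.

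First I would unwind $\mathbf{d}_\mathcal{X} F = \mathbf{d}(\mathbf{i}_\mathcal{X}\mathbf{d} F)$. By Cartan's formula, on a function $\mathbf{i}_\mathcal{X}\mathbf{d} F = \mathcal{L}_\mathcal{X} F = \mathcal{X}[F]$, which is again a function, and since $\mathbf{d}$ commutes with $\mathcal{L}_\mathcal{X}$ this gives $\mathbf{d}_\mathcal{X} F = \mathbf{d}(\mathcal{L}_\mathcal{X} F) = \mathcal{L}_\mathcal{X}(\mathbf{d} F)$; equivalently, in splitting coordinates with $\mathcal{X}=\pi^\sharp\cdot\mathbf{d} H_U$, this is simply $\mathbf{d}\{F,H_U\}$. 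Next I would apply $\pi^\sharp$ and invoke the fact recalled above that a locally Hamiltonian vector field is a Poisson vector field, so $\mathcal{L}_\mathcal{X}\pi=0$. By the Leibniz rule for the Lie derivative of the contraction of the bivector against a one-form, the correction term $(\mathcal{L}_\mathcal{X}\pi^\sharp)\cdot\mathbf{d} F$ — which is $(\mathcal{L}_\mathcal{X}\pi)$ contracted against $\mathbf{d} F$ — vanishes, hence
\[
\pi^\sharp\cdot\mathbf{d}_\mathcal{X} F \;=\; \pi^\sharp\cdot(\mathcal{L}_\mathcal{X}\mathbf{d} F) \;=\; \mathcal{L}_\mathcal{X}(\pi^\sharp\cdot\mathbf{d} F) \;=\; \mathcal{L}_\mathcal{X}\mathcal{X}_F \;=\; [\mathcal{X},\mathcal{X}_F].
\]
Therefore $\mathbf{d}_\mathcal{X} F\in\ker(\pi^\sharp)$ if and only if $[\mathcal{X}_F,\mathcal{X}]=\mathcal{L}_{\mathcal{X}_F}\mathcal{X}=0$, i.e.\ if and only if $\mathcal{X}_F$ is an infinitesimal symmetry of the vector field $\mathcal{X}$.

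To finish I would observe that $\mathcal{X}_F=\pi^\sharp\cdot\mathbf{d} F$ is by construction Hamiltonian, hence a Poisson vector field, so $\mathcal{L}_{\mathcal{X}_F}\pi=0$ automatically; thus being a \emph{Hamiltonian} infinitesimal symmetry of $(M,\pi,\mathcal{X})$ adds nothing beyond the condition $\mathcal{L}_{\mathcal{X}_F}\mathcal{X}=0$, and the equivalence follows from the displayed identity. I do not expect a genuine obstacle here; the only point requiring care is the identification and vanishing of the correction term $(\mathcal{L}_\mathcal{X}\pi^\sharp)\cdot\mathbf{d} F$, which is exactly where the hypothesis that $\mathcal{X}$ be (locally) Hamiltonian enters, and is the reason the statement does not extend to an arbitrary Poisson vector field $\mathcal{X}$.
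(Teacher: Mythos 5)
Your proof is correct and follows essentially the same route as the paper's: both rest on the identity $[\mathcal{X},\mathcal{X}_F]=\pi^\sharp\cdot(\mathbf{d}_{\mathcal{X}}F)$, obtained from $\mathcal{L}_{\mathcal{X}}\pi=0$ (valid because $\mathcal{X}$ is locally Hamiltonian) together with Cartan's formula giving $\mathcal{L}_{\mathcal{X}}\mathbf{d}F=\mathbf{d}_{\mathcal{X}}F$, after which the equivalence with $\mathbf{d}_{\mathcal{X}}F\in\ker(\pi^\sharp)$ is immediate. The only difference is cosmetic: you apply the Leibniz rule to $\pi^\sharp\cdot\mathbf{d}F$ directly, while the paper derives the same identity by pairing with an arbitrary one-form $\alpha$ and comparing Lie derivatives.
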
 
   \begin{proof}
       Let $F \in C ^{ \infty }(M)$, and let  $  \mathcal{X} _F = \pi ^\sharp \cdot \mathbf{d} F $  
       be its Hamiltonian vector field. Let $ \alpha $ be an arbitrary  one form, then the preceding equation can be written as
       \[
           \left\langle \alpha , \mathcal{X} _F \right\rangle = \left\langle \alpha , \pi ^\sharp \cdot  \mathbf{d} F \right\rangle. 
       \]
        Taking the Lie derivative of the left hand side yields  
        \begin{equation}\label{eqn:Lie_left} 
            \mathcal{L}_{ \mathcal{X} }  \left\langle \alpha , \mathcal{X}  _F \right\rangle = \left\langle \mathcal{L} _{ \mathcal{X} } \alpha , \mathcal{X}_F \right\rangle + \left\langle \alpha , \mathcal{L} _{ \mathcal{X}  }( \mathcal{X} _F)   \right\rangle. 
        \end{equation}         Taking the derivative of the right hand side yields
        \begin{equation}\label{eqn:Lie_right}\begin{aligned}             \mathcal{L} _{ \mathcal{X} }(\left\langle  \alpha  , \pi ^\sharp \cdot \mathbf{d} F \right\rangle  ) & = \mathcal{L} _{ \mathcal{X} } (\pi (\alpha , \mathbf{d} F))   \\
            & = (\mathcal{L} _{ \mathcal{X} } \pi) (\alpha , \mathbf{d} F)   + \pi (\mathcal{L} _{ \mathcal{X} }( \alpha) , \mathbf{d} F) + \pi (\alpha , \mathcal{L} _{ \mathcal{X} } (\mathbf{d} F))\\
      &   =  \pi (\mathcal{L} _{ \mathcal{X} }( \alpha ), \mathbf{d} F) + \pi (\alpha , \mathcal{L} _{ \mathcal{X} } (\mathbf{d} F))\\ 
      &   = \left\langle \mathcal{L} _{ \mathcal{X} } (\alpha) , \pi ^\sharp \cdot \mathbf{d} F \right\rangle    + \left\langle \alpha ,  \pi ^\sharp \cdot (\mathcal{L} _{ \mathcal{X} }  (\mathbf{d} F)) \right\rangle    
        \end{aligned} \end{equation} 
        since $ \mathcal{X} $ is locally Hamiltonian. Since $ \alpha $ is arbitrary, comparing \ref{eqn:Lie_left} and \ref{eqn:Lie_right} gives 
        \begin{equation}\label{eqn:Lie_end} 
            \mathcal{L} _{ \mathcal{X} } (\mathcal{X} _F) =[\mathcal{X} , \mathcal{X} _F ]=  \pi ^\sharp \cdot (\mathcal{L} _{ \mathcal{X} } (\mathbf{d} F)).      
        \end{equation} 
       Moreover, by Cartan's magic formula 
       \[ 
           \mathcal{L} _{ \mathcal{X} } (\mathbf{d} F) = \mathbf{d}  (\mathbf{i} _{ \mathcal{X} }( \mathbf{d} F)) + \mathbf{i} _{ \mathcal{X} } (\mathbf{d} ^2 F) =   \mathbf{d}  (\mathbf{i} _{ \mathcal{X} }( \mathbf{d} F)) = \mathbf{d} _{ \mathcal{X} } F
       \]   
      and hence 
      \begin{equation}\label{eqn:Lie_end1} 
          \mathcal{L} _{ \mathcal{X} } (\mathcal{X} _F) =[\mathcal{X} , \mathcal{X} _F ]=  \pi ^\sharp \cdot  (\mathbf{d}_{ \mathcal{X} }  F).      
      \end{equation} 
      Thus $ [ \mathcal{X} , \mathcal{X} _F ] = 0 $ if and only if $ \mathbf{d} _{ \mathcal{X} } F $ is in $ \ker ( \pi ^\sharp ) $. 
     Since $ \mathcal{X} $ is locally Hamiltonian we also have that $ \mathcal{L} _{ \mathcal{X} } H = 0 $ 
    \end{proof}
    If $ \pi $ is non-degenerate and  $ H _{ \mathcal{X} } ^0 (M) = \{ F \in C ^{ \infty } (M) | \mathbf{d} _{ \mathcal{X} }F= 0 \} $ is the zero cohomology group of $ \mathbf{d} _{ \mathcal{X} } $, then $ H ^0 _{ \mathcal{X}}(M)  $ coincides with  the set of Hamiltonian infinitesimal symmetries  of $ (M , \pi , \mathcal{X}) $, reproducing the result given in \cite{carinena_canonoid_2013} for the symplectic case. 
    This result suggests that the cohomology approach introduced in \cite{carinena_canonoid_2013}, is not well adapted to the Poisson case. It may be possible to give a nice cohomological interpretation of symmetries in this case by using certain cohomology groups associated with a foliated space, called tangential cohomology  groups (see  \cite{moore_tangential_1988}) and references therein).
   
    Let $ \alpha $ be a differential one-form on $ (M , \pi)$. For the vector field associated to $ \alpha $ we use the following notation, $ \alpha ^\sharp : = \pi ^\sharp (\alpha) $.

   \begin{proposition}   Let $ (M , \pi  , \mathcal{X}) $ be a locally Hamiltonian system.  The vector field associated to the $ \mathbf{d} _{ \mathcal{X} } $ exact one-form $ \beta = \mathbf{d} _{ \mathcal{X} }  F $ is $ \beta ^\sharp = [\mathcal{X} , \mathcal{X} _F ] $, where $ \mathcal{X} _F $ is the Hamiltonian vector field of $F$.  
   \end{proposition}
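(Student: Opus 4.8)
The plan is to unwind both sides of the claimed identity from the definitions and then read off the result from equation \eqref{eqn:Lie_end1}, which was established in the proof of the preceding proposition. On the left-hand side, by definition $\beta^\sharp = \pi^\sharp(\beta)$, and since $\mathbf{d}_{\mathcal{X}} = \mathbf{d}\circ\mathbf{i}_{\mathcal{X}}\circ\mathbf{d}$ we have $\beta = \mathbf{d}_{\mathcal{X}} F = \mathbf{d}(\mathbf{i}_{\mathcal{X}}\mathbf{d} F)$, so $\beta^\sharp = \pi^\sharp\cdot\mathbf{d}(\mathbf{i}_{\mathcal{X}}\mathbf{d} F)$; in particular this exhibits $\beta^\sharp$ as the Hamiltonian vector field of the function $\mathbf{i}_{\mathcal{X}}\mathbf{d} F = \mathcal{X}[F]$.

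For the right-hand side, I would recall from the proof of the previous proposition that, using that $\mathcal{X}$ is locally Hamiltonian (hence Poisson, so $\mathcal{L}_{\mathcal{X}}\pi = 0$), one obtains $\mathcal{L}_{\mathcal{X}}(\mathcal{X}_F) = [\mathcal{X}, \mathcal{X}_F] = \pi^\sharp\cdot(\mathcal{L}_{\mathcal{X}}\mathbf{d} F)$, and then, by Cartan's magic formula together with $\mathbf{d}^2 F = 0$, that $\mathcal{L}_{\mathcal{X}}\mathbf{d} F = \mathbf{d}(\mathbf{i}_{\mathcal{X}}\mathbf{d} F) = \mathbf{d}_{\mathcal{X}} F$. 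Combining these two facts — that is, precisely equation \eqref{eqn:Lie_end1} — gives $[\mathcal{X}, \mathcal{X}_F] = \pi^\sharp\cdot\mathbf{d}_{\mathcal{X}} F = \beta^\sharp$, which is the assertion.

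If a fully self-contained argument is preferred over citing \eqref{eqn:Lie_end1}, I would simply repeat the short computation: for an arbitrary one-form $\alpha$, differentiate the identity $\langle\alpha, \mathcal{X}_F\rangle = \pi(\alpha, \mathbf{d} F)$ along $\mathcal{X}$, expand both sides with the Leibniz rule for $\mathcal{L}_{\mathcal{X}}$ (on the pairing and on $\pi$), cancel the two copies of $\langle\mathcal{L}_{\mathcal{X}}\alpha, \pi^\sharp\cdot\mathbf{d} F\rangle$, drop the $(\mathcal{L}_{\mathcal{X}}\pi)$-term since $\mathcal{L}_{\mathcal{X}}\pi = 0$, and conclude $[\mathcal{X}, \mathcal{X}_F] = \pi^\sharp\cdot(\mathcal{L}_{\mathcal{X}}\mathbf{d} F)$; Cartan's formula then finishes as above. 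There is no genuine obstacle here: the statement is essentially a repackaging of \eqref{eqn:Lie_end1} in the notation $\beta^\sharp = \pi^\sharp(\beta)$. The only point worth a word of care is that all the manipulations are local (since $H$, and hence $\mathcal{X}$, is only locally defined), but this is harmless because $\mathcal{X}$, $\mathcal{X}_F$, their bracket $[\mathcal{X}, \mathcal{X}_F]$, and the one-form $\mathbf{d}_{\mathcal{X}} F = \mathbf{d}(\mathbf{i}_{\mathcal{X}}\mathbf{d} F)$ are all globally well-defined, so the local identity patches to a global one.
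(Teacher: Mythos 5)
Your argument is correct and follows the same route as the paper: the paper's proof consists precisely of invoking the identity $[\mathcal{X},\mathcal{X}_F]=\pi^\sharp\cdot(\mathcal{L}_{\mathcal{X}}\mathbf{d}F)$ established in the preceding proposition (equation \eqref{eqn:Lie_end}) together with Cartan's formula $\mathcal{L}_{\mathcal{X}}\mathbf{d}F=\mathbf{d}(\mathbf{i}_{\mathcal{X}}\mathbf{d}F)=\mathbf{d}_{\mathcal{X}}F$, which is exactly your citation of \eqref{eqn:Lie_end1}. Your added remarks on the self-contained computation and on locality are harmless elaborations of the same proof.
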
 
   \begin{proof}
     Equation   \ref{eqn:Lie_end} yields 
       \[
           \beta ^\sharp =[\mathcal{X} , \mathcal{X} _F ]=  \pi ^\sharp \cdot (\mathcal{L} _{ \mathcal{X} } (\mathbf{d} F)) = \pi ^\sharp \cdot \mathbf{d} _{ \mathcal{X} } F= \pi ^\sharp \cdot \beta .  
       \]  
   \end{proof} 

   \begin{proposition} Let $ (M , \pi , \mathcal{X}) $ be a locally-Hamiltonian system  and let $ \beta $ be a one-form  on $ (M , \pi) $. Then the vector field $ \beta ^\sharp = \pi ^\sharp \cdot \beta $ is an  infinitesimal Poissonoid transformation, if and only if $ \mathbf{d} _{ \mathcal{X} } \beta  = \mathbf{d} \alpha _1  $, where $ \alpha _1 \in \ker \pi ^\sharp $. In particular, if $ \mathbf{d} _{ \mathcal{X} }\beta  = 0 $, then $ \beta ^\sharp = \pi ^\sharp \cdot \beta $ is an infinitesimal Poissonoid transformation. If $ \pi $ is non-degenerate then   the vector field $ \beta ^\sharp = \pi ^\sharp \cdot \beta $ is an  infinitesimal Poissonoid transformation, if and only if $ \mathbf{d} _{ \mathcal{X} } \beta  = 0  $.
   \end{proposition}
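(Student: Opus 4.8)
The plan is to reduce the statement to a fact about the commutator $[\mathcal{X},\beta^\sharp]$. By Proposition~\ref{prop:poissonoid_basic}, $\beta^\sharp=\pi^\sharp\cdot\beta$ is an infinitesimal Poissonoid transformation for $\mathcal{X}$ if and only if $[\mathcal{X},\beta^\sharp]$ is locally Hamiltonian, so the whole argument consists of computing this commutator and then recognizing when it is locally Hamiltonian.

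First I would compute $[\mathcal{X},\beta^\sharp]=\mathcal{L}_{\mathcal{X}}(\pi^\sharp\cdot\beta)$. Since $\mathcal{X}$ is locally Hamiltonian it is a Poisson vector field, so $\mathcal{L}_{\mathcal{X}}\pi=0$; pairing $\pi^\sharp\cdot\beta$ against an arbitrary one-form $\alpha$ and repeating verbatim the computation of \eqref{eqn:Lie_right} with $\beta$ in place of $\mathbf{d}F$ gives $\mathcal{L}_{\mathcal{X}}(\pi^\sharp\cdot\beta)=\pi^\sharp\cdot(\mathcal{L}_{\mathcal{X}}\beta)$. Next, Cartan's magic formula $\mathcal{L}_{\mathcal{X}}\beta=\mathbf{d}(\mathbf{i}_{\mathcal{X}}\beta)+\mathbf{i}_{\mathcal{X}}(\mathbf{d}\beta)$ splits this into $[\mathcal{X},\beta^\sharp]=\mathcal{X}_{\mathbf{i}_{\mathcal{X}}\beta}+\pi^\sharp\cdot(\mathbf{i}_{\mathcal{X}}\mathbf{d}\beta)$, where $\mathcal{X}_{\mathbf{i}_{\mathcal{X}}\beta}=\pi^\sharp\cdot\mathbf{d}(\mathbf{i}_{\mathcal{X}}\beta)$ is the (globally) Hamiltonian vector field of the smooth function $\mathbf{i}_{\mathcal{X}}\beta=\beta(\mathcal{X})$. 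Because sums and differences of locally Hamiltonian vector fields are locally Hamiltonian, $[\mathcal{X},\beta^\sharp]$ is locally Hamiltonian if and only if $\pi^\sharp\cdot(\mathbf{i}_{\mathcal{X}}\mathbf{d}\beta)$ is.

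It then remains to express ``$\pi^\sharp\cdot(\mathbf{i}_{\mathcal{X}}\mathbf{d}\beta)$ is locally Hamiltonian'' in terms of $\mathbf{d}_{\mathcal{X}}\beta$, which by the definition $\mathbf{d}_{\mathcal{X}}=\mathbf{d}\circ\mathbf{i}_{\mathcal{X}}\circ\mathbf{d}$ equals $\mathbf{d}(\mathbf{i}_{\mathcal{X}}\mathbf{d}\beta)$. For the ``if'' direction: if $\mathbf{d}_{\mathcal{X}}\beta=\mathbf{d}\alpha_1$ with $\alpha_1\in\ker\pi^\sharp$, then $\mathbf{i}_{\mathcal{X}}\mathbf{d}\beta-\alpha_1$ is a closed one-form, hence by the Poincar\'e lemma locally equals $\mathbf{d}g_U$; on such a neighborhood $\pi^\sharp\cdot(\mathbf{i}_{\mathcal{X}}\mathbf{d}\beta)=\pi^\sharp\cdot\mathbf{d}g_U+\pi^\sharp\cdot\alpha_1=\pi^\sharp\cdot\mathbf{d}g_U=\mathcal{X}_{g_U}$, so it is locally Hamiltonian. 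For the converse: if $\pi^\sharp\cdot(\mathbf{i}_{\mathcal{X}}\mathbf{d}\beta)$ is locally Hamiltonian, then on a neighborhood $U$ of any point it equals $\pi^\sharp\cdot\mathbf{d}g_U$, so $\alpha_1:=\mathbf{i}_{\mathcal{X}}\mathbf{d}\beta-\mathbf{d}g_U$ lies in $\ker\pi^\sharp$ there, and applying $\mathbf{d}$ yields $\mathbf{d}_{\mathcal{X}}\beta=\mathbf{d}\alpha_1$ on $U$. The two special cases follow immediately: taking $\alpha_1=0$ gives ``$\mathbf{d}_{\mathcal{X}}\beta=0\ \Rightarrow\ \beta^\sharp$ infinitesimal Poissonoid''; and if $\pi$ is non-degenerate then $\ker\pi^\sharp=\{0\}$ forces $\alpha_1=0$, so the condition collapses to $\mathbf{d}_{\mathcal{X}}\beta=0$, with the ``if'' direction running verbatim with $\alpha_1=0$.

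The only delicate point is conceptual rather than computational: on a genuine (non-symplectic) Poisson manifold ``locally Hamiltonian'' is strictly weaker than ``Poisson'' and is formulated neighborhood-by-neighborhood, so the characterization ``$\mathbf{d}_{\mathcal{X}}\beta=\mathbf{d}\alpha_1$ with $\alpha_1\in\ker\pi^\sharp$'' is most naturally read in this local sense (a global kernel one-form $\alpha_1$ is enough for the ``if'' direction, but the $\alpha_1$ produced in the converse direction need not patch globally). I would phrase and use the statement in the local form consistent with the rest of the section; everything else is the single Lie-derivative identity mirroring \eqref{eqn:Lie_right} together with Cartan's formula and the Poincar\'e lemma.
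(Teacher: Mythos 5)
Your proof is correct and follows essentially the same route as the paper's: reduce to the commutator of $\mathcal{X}$ with $\beta^\sharp$, use $\mathcal{L}_{\mathcal{X}}\pi=0$ (valid since $\mathcal{X}$ is locally Hamiltonian) to get $\mathcal{L}_{\mathcal{X}}(\pi^\sharp\cdot\beta)=\pi^\sharp\cdot(\mathcal{L}_{\mathcal{X}}\beta)$, apply Cartan's formula, and split $\mathbf{i}_{\mathcal{X}}\mathbf{d}\beta$ into an exact part plus a piece in $\ker\pi^\sharp$. The only difference is one of presentation: you make the ``only if'' direction and the local (neighborhood-by-neighborhood) nature of $\alpha_1$ explicit, which the paper's proof leaves largely implicit.
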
 
   \begin{proof} 
      $ \beta ^\sharp $ is infinitesimally Poissonoid if and only if  $ [ \beta ^\sharp , \mathcal{X} ] = \pi ^\sharp \cdot \mathbf{d} F $ for some (locally defined) function $F$. 
       The left hand side can be written as  
       \begin{align*}
           [ \beta ^\sharp, \mathcal{X} ] & = [ \pi ^\sharp \cdot \beta , \mathcal{X} ] \\
         & =  - \mathcal{L} _{ \mathcal{X} } (\pi ^\sharp \cdot  \beta) \\
        & = - \pi ^\sharp \cdot (\mathcal{L} _{ \mathcal{X} } \beta) \\
      & = - \pi ^\sharp \cdot  (\mathbf{d} (\mathbf{i} _{ \mathcal{X} } \beta) + \mathbf{i} _{ \mathcal{X} } (\mathbf{d} \beta ))  \\ 
       \end{align*}
      If $ (\mathbf{d} (\mathbf{i} _{ \mathcal{X} } \beta) + \mathbf{i} _{ \mathcal{X} } (\mathbf{d} \beta ))$ is closed, that is if, locally, we have  
      \[0=\mathbf{d} ((\mathbf{d} (\mathbf{i} _{ \mathcal{X} } \beta) + \mathbf{i} _{ \mathcal{X} } (\mathbf{d} \beta )))=
          \mathbf{d}^2  (\mathbf{i} _{ \mathcal{X} } \beta) + \mathbf{d} (\mathbf{i} _{ \mathcal{X} } (\mathbf{d} \beta ))=  \mathbf{d} (\mathbf{i} _{ \mathcal{X} } (\mathbf{d} \beta ))= \mathbf{d} _{ \mathcal{X} } \beta 
      \]
      then $ [ \beta ^\sharp , \mathcal{X} ] $ can be written, locally, as  $ \pi ^\sharp \cdot \mathbf{d} F $. The converse is not true unless $ \pi ^\sharp $ is non-degenerate. In order to do the general case let  $ \mathbf{i} _{ \mathcal{X} } (\mathbf{d} \beta) = \alpha _0 + \alpha _1 $ where $ \alpha _0 $ is closed and $ \alpha _1 \in \ker \pi ^\sharp $. This is equivalent to writing $ \mathbf{d} _{ \mathcal{X} } \beta = \mathbf{d} \alpha _1 $ with $ \alpha _1 \in \ker \pi ^\sharp $. Then there is a locally defined function $G$ such that we can write $ \alpha_0  = \mathbf{d} G $, and thus 
      \begin{align*} 
          [ \beta ^\sharp, \mathcal{X} ]&= - \pi ^\sharp \cdot  (\mathbf{d} (\mathbf{i} _{ \mathcal{X} } \beta) + \mathbf{i} _{ \mathcal{X} } (\mathbf{d} \beta ))\\
          & = - \pi ^\sharp \cdot  (\mathbf{d} (\mathbf{i} _{ \mathcal{X} } \beta+  G) + \alpha _1 )\\
          & = - \pi ^\sharp \cdot  (\mathbf{d} (\mathbf{i} _{ \mathcal{X} } \beta+  G)  )\\
          & = \pi ^\sharp \cdot \mathbf{d} F,
      \end{align*}
      with $F = - \mathbf{d} (\mathbf{i} _{ \mathcal{X} } \beta+  G)$.  
   \end{proof}
   This result generalizes the well known fact that infinitesimal canonical transformations are generated by closed forms in the de Rahm cohomology (that is the basis of the theory of generating functions) and the fact that infinitesimal canonoid transformations are generated by closed forms in the twisted cohomology introduced in \cite{carinena_canonoid_2013} to the case of Poissonoid transformations. In this case, however, not all the  infinitesimal Poissonoid transformations are generated by $ \mathbf{d} _{ \mathcal{X} } $-closed differential forms. 
   
   It seems possible to state the proposition above more elegantly by introducing  differential forms that are tangential to the foliation,  see \cite{moore_tangential_1988} and references therein for possible definitions of these forms.  
  
   \subsection{Master Symmetries}
   Let $ (M , \pi , \mathcal{X}) $ be a locally-Hamiltonian system. Recall that an  infinitesimal symmetry of $ \mathcal{X} $ is a vector field $ \xi $ such that $ [\xi , \mathcal{X} ]= 0$.  A  {\bf master symmetry} for $ \mathcal{X} $ is a vector field  $ \mathcal{X} $ such that 
   \[[\xi ,\mathcal{X} ]\neq 0, \quad\quad [[\xi , \mathcal{X} ], \mathcal{X} ] = 0.\]
    More generally, a {\bf master symmetry of degree $m$} for $ \mathcal{X} $ is a vector field  $ \mathcal{X} $ such that 
   \[[\xi ,\mathcal{X} ]\neq 0,\ldots,[\cdots [[\xi ,\underbrace{\mathcal{X} ],\mathcal{X} ],\cdots ,\mathcal{X} ] }_{m}\neq  0 ,~[\cdots [[\xi ,\underbrace{\mathcal{X} ],\mathcal{X} ],\cdots ,\mathcal{X} ] }_{m+1}= 0.
   \]
   The last condition in the equation above can also be written as $ \mathcal{L} _{ \mathcal{X} } ^{ m + 1 } (\xi)  = 0 $.
   Neither master symmetries nor Poissonoid infinitesimal transformation do, in general, generate constants of motion. However, the next proposition shows that  some special infinitesimal Poissonoid transformations  generate constants of motion. 
   \begin{proposition}
        Suppose $ \xi $ is an infinitesimal Poissonoid transformation of the Hamiltonian system $ (M , \pi ,H) $, such that  the relationship  $ [ \xi , \mathcal{X}  _H ] = \pi ^\sharp \cdot \mathbf{d} F $ is  satisfied globally  and $ \mathcal{L} _{ [\xi , \mathcal{X} _H ]}H = 0 $. Then $ F $ is a constant of motion  of $ (M, \pi, H) $  and $ \xi $ is a master symmetry of degree 1.
    \end{proposition}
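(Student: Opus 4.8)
The plan is to unwind the two hypotheses into statements about the Hamiltonian vector fields $\mathcal{X}_H=\pi^\sharp\cdot\mathbf{d}H$ and $\mathcal{X}_F=\pi^\sharp\cdot\mathbf{d}F$, and then invoke the standard fact that the assignment $G\mapsto\mathcal{X}_G$ reverses Lie brackets. By hypothesis $\xi$ is an infinitesimal Poissonoid transformation whose associated relation $[\xi,\mathcal{X}_H]=\pi^\sharp\cdot\mathbf{d}F$ holds globally, so $[\xi,\mathcal{X}_H]=\mathcal{X}_F$ with $F\in C^\infty(M)$ globally defined; and the second hypothesis reads $\mathcal{L}_{[\xi,\mathcal{X}_H]}H=[\xi,\mathcal{X}_H][H]=0$.

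First I would show that $F$ is a constant of motion. Substituting $[\xi,\mathcal{X}_H]=\mathcal{X}_F$ into the second hypothesis gives $\mathcal{X}_F[H]=0$, i.e.\ $\{H,F\}=0$; by skew-symmetry of the Poisson bracket $\{F,H\}=-\{H,F\}=0$, that is $\mathcal{X}_H[F]=0$, so $F$ is a first integral of $(M,\pi,H)$. As a byproduct this makes $\mathcal{X}_F=[\xi,\mathcal{X}_H]$ a Hamiltonian infinitesimal symmetry of $\mathcal{X}_H$, consistent with Noether's theorem.

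Next I would verify the master symmetry condition, namely $[[\xi,\mathcal{X}_H],\mathcal{X}_H]=0$, which after substitution is the assertion $[\mathcal{X}_F,\mathcal{X}_H]=0$. For this I would use the identity $[\mathcal{X}_F,\mathcal{X}_H]=-\mathcal{X}_{\{F,H\}}$: since $\{F,H\}$ is the zero function its differential vanishes and hence $[\mathcal{X}_F,\mathcal{X}_H]=0$. If one prefers not to quote this identity, the same conclusion follows by evaluating $[\mathcal{X}_F,\mathcal{X}_H]$ on an arbitrary $G\in C^\infty(M)$ and using the Jacobi identity (together with the skew-symmetry of $\{\cdot,\cdot\}$) to rewrite $[\mathcal{X}_F,\mathcal{X}_H][G]$ as $\{\{F,H\},G\}=0$. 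Since $[\xi,\mathcal{X}_H]=\mathcal{X}_F$ is, by the standing assumption implicit in the statement, not identically zero, the relations $[\xi,\mathcal{X}_H]\neq0$ and $[[\xi,\mathcal{X}_H],\mathcal{X}_H]=0$ are exactly the definition of a master symmetry of degree $1$ (equivalently $\mathcal{L}_{\mathcal{X}_H}^{2}(\xi)=[\mathcal{X}_H,[\mathcal{X}_H,\xi]]=0$), while in the degenerate case $[\xi,\mathcal{X}_H]=0$ the field $\xi$ is simply an infinitesimal symmetry of $\mathcal{X}_H$.

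The computation is short, so there is no serious obstacle: the only real content is the bracket-reversing identity $[\mathcal{X}_F,\mathcal{X}_H]=-\mathcal{X}_{\{F,H\}}$ (or, equivalently, the Jacobi-identity manipulation), and the points that need care are the sign conventions — the paper uses $\mathcal{X}_H[F]=\{F,H\}$ — and flagging that ``master symmetry of degree $1$'' tacitly presumes $[\xi,\mathcal{X}_H]\neq0$.
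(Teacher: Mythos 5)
Your proof is correct and follows essentially the same route as the paper: there, $F$ is shown to be a constant of motion by observing that $[\xi,\mathcal{X}_H]=\mathcal{X}_F$ is a Hamiltonian infinitesimal symmetry and invoking Noether's theorem (whose proof in the paper is exactly your direct computation $\mathcal{X}_F[H]=\{H,F\}=-\{F,H\}=\mathcal{X}_H[F]$), and the master symmetry condition is then obtained from the same bracket-reversing identity you use, written in the paper with a slight abuse of notation as $0=\{F,H\}=[\mathcal{X}_F,\mathcal{X}_H]=[[\xi,\mathcal{X}_H],\mathcal{X}_H]$. Your explicit statement of the identity $[\mathcal{X}_F,\mathcal{X}_H]=-\mathcal{X}_{\{F,H\}}$ and the remark that ``degree $1$'' tacitly presumes $[\xi,\mathcal{X}_H]\neq 0$ are fine refinements, not a different argument.
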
   
    \begin{proof}
        Since  $ [ \xi , \mathcal{X}  _H ] = \pi ^\sharp \cdot \mathbf{d} F $ holds globally  and $ \mathcal{L} _{ [\xi , \mathcal{X} _H ]}H = 0 $ then $ \mathcal{X} _F = [ \xi , \mathcal{X}  _H ]  $ is a Hamiltonian symmetry of the system $ (M , \pi , H) $. Thus, by Noether's theorem, $ F $ is a constant of motion, that is $ \{ F, H \} = 0 $. It follows that,
        \[
           0= \{ F, H \} = [\mathcal{X} _F , \mathcal{X} _H ] = [[\xi , \mathcal{X} _H ], \mathcal{X} _H ],
        \] 
        and thus $ \xi $ is a master symmetry of degree 1. 
    \end{proof} 

    Let $ (M , \pi , \mathcal{X}) $ be a locally-Hamiltonian system. A function $ T $    is called a {\bf generator of constants of motion of degree $m$ } for $ \mathcal{X} $  if 
   \[
       \mathcal{L} _{ \mathcal{X} } T \neq 0 , \ldots ,~ \mathcal{L} _{ \mathcal{X} } ^m T \neq 0 ,~ \mathcal{L} _{ \mathcal{X} } ^{ m + 1 } T = 0. 
   \]

   Suppose $ \xi $ is a Hamiltonian vector field with Hamiltonian function $ T $, with $ \xi = \pi ^\sharp \cdot \mathbf{d} T $. 
   Taking the Lie derivative of both sides yields 
   \[\mathcal{L} _{\mathcal{X} }(\xi) =   \mathcal{L} _{ \mathcal{X} } (\pi ^\sharp \cdot \mathbf{d} T)  =  \mathcal{L} _{ \mathcal{X} } (\pi ^\sharp) \cdot \mathbf{d} T +   \pi ^\sharp \cdot(\mathcal{L} _{ \mathcal{X} } ( \mathbf{d} T)) =   \pi ^\sharp \cdot \mathbf{d} (\mathcal{L} _{ \mathcal{X} } ( T)).  \]
   This relation can be generalized to 
   \[
       \mathcal{L} _{ \mathcal{X} } ^{ m + 1 } (\xi) = \pi ^\sharp \cdot \mathbf{d} (\mathcal{L} _{ \mathcal{X} } ^{ m + 1 }( T)).    
   \]
   We call $T$ {\bf the generator of a Hamiltonian master symmetry of degree $ m  $}
   if and only if $ \mathbf{d} (\mathcal{L} _{ \mathcal{X} } ^{ m + 1 } (T))  \in \ker \pi ^\sharp $.
   Note that, in particular, If $ T $ is the generator of constants of motion of degree $m$, then $ \mathcal{L} _{ \mathcal{X} } ^{ m + 1 } T = 0$, and thus $ \mathbf{d} ( \mathcal{L} _{ \mathcal{X} } ^{ m + 1 } T) = 0 $. It follows that  $ \mathcal{L} _{ \mathcal{X} } ^{ m + 1 } (\xi) = 0 $, and thus $ \xi $ is a master symmetry of degree $m$. The converse is clearly not true if $ \pi ^\sharp $ is degenerate.  
   The following proposition shows that Poissonoid transformations are very general: every master symmetry, and also every infinitesimal symmetry of the vector field can be generated using Poissonoid transformations.

   \begin{proposition}
       Let $ (M , \pi , \mathcal{X}) $ be a locally-Hamiltonian vector field. Suppose that $ \xi $ is a master symmetry of degree $ m + 1 $. Then  the vector field $ \mathcal{L} _{ \mathcal{X} } ^m (\xi) $  is an infinitesimal Poissonoid transformation.
   \end{proposition}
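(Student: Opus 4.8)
My plan is to reduce the claim to Proposition~\ref{prop:poissonoid_basic}, which says that a vector field $\eta$ is an infinitesimal Poissonoid transformation for $\mathcal{X}$ exactly when $[\eta,\mathcal{X}]$ is locally Hamiltonian. Taking $\eta=\mathcal{L}_{\mathcal{X}}^{m}(\xi)$ and using the identity $[\,\cdot\,,\mathcal{X}]=-\mathcal{L}_{\mathcal{X}}(\,\cdot\,)$ together with $\mathcal{L}_{\mathcal{X}}\circ\mathcal{L}_{\mathcal{X}}^{m}=\mathcal{L}_{\mathcal{X}}^{m+1}$, I get $\bigl[\mathcal{L}_{\mathcal{X}}^{m}(\xi),\mathcal{X}\bigr]=-\mathcal{L}_{\mathcal{X}}^{m+1}(\xi)$, so the statement is equivalent to showing that $\mathcal{L}_{\mathcal{X}}^{m+1}(\xi)$ is a locally Hamiltonian vector field.

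Next I would feed in the hypothesis. Because $\xi$ is a master symmetry of degree $m+1$ we have $\mathcal{L}_{\mathcal{X}}^{m+2}(\xi)=0$, i.e. $\bigl[\mathcal{L}_{\mathcal{X}}^{m+1}(\xi),\mathcal{X}\bigr]=0$, so $\mathcal{L}_{\mathcal{X}}^{m+1}(\xi)$ is an infinitesimal symmetry of $\mathcal{X}$; equivalently, $\mathcal{L}_{\mathcal{X}}^{m}(\xi)$ is itself a master symmetry of degree $1$. To promote ``symmetry of $\mathcal{X}$'' to ``locally Hamiltonian'' I would write $\mathcal{L}_{\mathcal{X}}^{m+1}(\xi)=\bigl[\mathcal{X},\mathcal{L}_{\mathcal{X}}^{m}(\xi)\bigr]$ and use that, for any Poisson vector field $Y$ (so $\mathcal{L}_{Y}\pi=0$) and $\mathcal{X}=\pi^{\sharp}\cdot\mathbf{d}H$ locally,
\[
[\mathcal{X},Y]=-\mathcal{L}_{Y}\bigl(\pi^{\sharp}\cdot\mathbf{d}H\bigr)=-(\mathcal{L}_{Y}\pi^{\sharp})\cdot\mathbf{d}H-\pi^{\sharp}\cdot\mathbf{d}(Y[H])=\pi^{\sharp}\cdot\mathbf{d}\bigl(-Y[H]\bigr),
\]
so that $[\mathcal{X},Y]=\mathcal{X}_{-Y[H]}$ is locally Hamiltonian. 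With $Y=\mathcal{L}_{\mathcal{X}}^{m}(\xi)$ this gives $\mathcal{L}_{\mathcal{X}}^{m+1}(\xi)=\mathcal{X}_{-Y[H]}$, hence locally Hamiltonian, and Proposition~\ref{prop:poissonoid_basic} then closes the argument.

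The step I expect to be the real obstacle is exactly the input required above: I need $Y=\mathcal{L}_{\mathcal{X}}^{m}(\xi)$ to be a Poisson vector field, and this does \emph{not} follow formally from $\xi$ being a master symmetry — as the text stresses, an infinitesimal symmetry of $\mathcal{X}$ need not even be Poisson. It does hold once one assumes that $\xi$ is a Poisson vector field (equivalently, in the symplectic case, locally Hamiltonian): indeed $\mathcal{X}$ is Poisson because it is locally Hamiltonian, the Lie bracket of two Poisson vector fields is again Poisson, and therefore every iterate $\mathcal{L}_{\mathcal{X}}^{k}(\xi)=\bigl[\mathcal{X},[\mathcal{X},\dots,[\mathcal{X},\xi]\dots]\bigr]$ is Poisson, in particular $Y=\mathcal{L}_{\mathcal{X}}^{m}(\xi)$. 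I would thus run the proof under that (mild, and in all of the paper's examples satisfied) extra assumption; to dispense with it one would instead have to control the term $(\mathcal{L}_{Y}\pi^{\sharp})\cdot\mathbf{d}H$ directly — equivalently, show the obstruction $\mathcal{L}_{\mathcal{X}}^{m+1}(\mathcal{L}_{\xi}\pi)$ to $\mathcal{L}_{\mathcal{X}}^{m+1}(\xi)$ being Poisson vanishes — which is the delicate point.
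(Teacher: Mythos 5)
The divergence from the paper comes down to how the master-symmetry hypothesis is cashed in. The paper's own proof is a two-line reduction: it reads ``master symmetry of degree $m+1$'' as supplying exactly $\mathcal{L}_{\mathcal{X}}^{m+1}(\xi)=[\mathcal{L}_{\mathcal{X}}^{m}(\xi),\mathcal{X}]=0$, so that $\mathcal{L}_{\mathcal{X}}^{m}(\xi)$ is an infinitesimal symmetry of $\mathcal{X}$, and then invokes the second clause of Proposition~\ref{prop:poissonoid_basic}: an infinitesimal symmetry is automatically an infinitesimal Poissonoid transformation, because the vanishing bracket is trivially locally Hamiltonian (with $F$ constant). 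No Poisson-ness of $\xi$, and no control of $\mathcal{L}_{Y}\pi$, is needed anywhere: the bracket you are trying to prove locally Hamiltonian is identically zero.

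You instead took the displayed definition literally (degree $m+1$ meaning $\mathcal{L}_{\mathcal{X}}^{m+2}(\xi)=0$ with $\mathcal{L}_{\mathcal{X}}^{m+1}(\xi)\neq 0$), which leaves $[\mathcal{L}_{\mathcal{X}}^{m}(\xi),\mathcal{X}]=-\mathcal{L}_{\mathcal{X}}^{m+1}(\xi)$ a possibly nonzero symmetry of $\mathcal{X}$; since a symmetry of $\mathcal{X}$ need not be locally Hamiltonian, you then add the hypothesis that $\xi$ is a Poisson vector field, and your computation $[\mathcal{X},Y]=\pi^{\sharp}\cdot\mathbf{d}\bigl(-Y[H_U]\bigr)$ for Poisson $Y$ is correct (locally, with $H_U$). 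So your argument is internally sound, but it proves a genuinely different statement, with an extra hypothesis the paper neither imposes nor needs. To be fair, the mismatch originates in the paper: its definition of degree and the first line of its proof are off by one relative to each other. Still, the proof makes the intended reading unambiguous, and under that reading the ``delicate point'' you single out (showing $\mathcal{L}_{\mathcal{X}}^{m}(\xi)$ is Poisson) simply does not arise; if you insist on your indexing, the honest fix is to conclude instead that $\mathcal{L}_{\mathcal{X}}^{m+1}(\xi)$ (rather than $\mathcal{L}_{\mathcal{X}}^{m}(\xi)$) is infinitesimally Poissonoid, which again follows from Proposition~\ref{prop:poissonoid_basic} with no extra assumptions.
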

   \begin{proof}
      Since $ \xi $ is a master symmetry of degree $ m + 1 $, we have
      \[
           \mathcal{L} _{ \mathcal{X} } ^{ m + 1 } (\xi)  =  [\mathcal{L} _{ \mathcal{X} } ^{ m   } (\xi), \mathcal{X} ]  = 0. 
      \] 
     Then  $ \mathcal{L} _{ \mathcal{X} } ^m (\xi) $  is an infinitesimal Poissonoid transformation by Proposition \ref{prop:poissonoid_basic}.
   \end{proof} 

 \begin{proposition} Let $ (M , \pi , \mathcal{X}) $ be a locally-Hamiltonian system  and let $ \beta $ be a one-form  on $ (M , \pi) $. Suppose the vector field $ \beta ^\sharp = \pi ^\sharp \cdot \beta $ is an infinitesimal Poissonoid transformation such that $ \mathbf{d} _{ \mathcal{X} } \beta = 0 $.    Then $\beta ^\sharp $ is master symmetry of degree $m$ ( $ m \geq 1 $) if and only if $ \mathbf{i} _{ \mathcal{X} } \beta $ is the generator of a Hamiltonian master symmetry of degree $ m - 1 $.
 \end{proposition}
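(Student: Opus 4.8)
The plan is to reduce the whole statement to one computation, that of $\mathcal{L}_{\mathcal{X}}(\beta^\sharp)$, rewritten in terms of the function $T:=\mathbf{i}_{\mathcal{X}}\beta$; note that the hypothesis $\mathbf{d}_{\mathcal{X}}\beta=0$ already forces $\beta^\sharp$ to be an infinitesimal Poissonoid transformation (by the preceding proposition), so it is the only hypothesis really used. The first, purely formal, ingredient is that a vector field $V$ is a master symmetry of degree $m$ ($m\ge 1$) precisely when $\mathcal{L}_{\mathcal{X}}V\neq 0$ and $\mathcal{L}_{\mathcal{X}}V$ is a master symmetry of degree $m-1$, with the convention that a master symmetry of degree $0$ is a symmetry of $\mathcal{X}$; this is immediate from the defining chain $\mathcal{L}_{\mathcal{X}}V\neq0,\dots,\mathcal{L}_{\mathcal{X}}^{m}V\neq0,\ \mathcal{L}_{\mathcal{X}}^{m+1}V=0$. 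Hence it suffices to read off the master-symmetry degree of $\mathcal{L}_{\mathcal{X}}(\beta^\sharp)$.

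Next I would compute that field. Since $\mathcal{X}$ is locally Hamiltonian it is a Poisson vector field, so $\mathcal{L}_{\mathcal{X}}\pi=0$ and hence $\mathcal{L}_{\mathcal{X}}(\pi^\sharp\cdot\beta)=\pi^\sharp\cdot(\mathcal{L}_{\mathcal{X}}\beta)$; Cartan's magic formula gives $\mathcal{L}_{\mathcal{X}}\beta=\mathbf{d}(\mathbf{i}_{\mathcal{X}}\beta)+\mathbf{i}_{\mathcal{X}}(\mathbf{d}\beta)=\mathbf{d}T+\gamma$, where $\gamma:=\mathbf{i}_{\mathcal{X}}(\mathbf{d}\beta)$, so that
\[
\mathcal{L}_{\mathcal{X}}(\beta^\sharp)=\mathcal{X}_{T}+\pi^\sharp\cdot\gamma,\qquad \mathcal{X}_{T}:=\pi^\sharp\cdot\mathbf{d}T.
\]
The key point, and the only place $\mathbf{d}_{\mathcal{X}}\beta=0$ enters, is that the remainder $\pi^\sharp\cdot\gamma$ is $\mathcal{X}$-invariant: since $\mathbf{d}_{\mathcal{X}}\beta=\mathbf{d}(\mathbf{i}_{\mathcal{X}}\mathbf{d}\beta)=\mathbf{d}\gamma$, the hypothesis says exactly that $\gamma$ is closed, while $\mathbf{i}_{\mathcal{X}}\gamma=\mathbf{i}_{\mathcal{X}}\mathbf{i}_{\mathcal{X}}(\mathbf{d}\beta)=0$ automatically; applying Cartan once more together with $\mathcal{L}_{\mathcal{X}}\pi=0$ yields $\mathcal{L}_{\mathcal{X}}(\pi^\sharp\cdot\gamma)=\pi^\sharp\cdot(\mathbf{d}\mathbf{i}_{\mathcal{X}}\gamma+\mathbf{i}_{\mathcal{X}}\mathbf{d}\gamma)=0$.

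Adding an $\mathcal{X}$-invariant vector field leaves every iterated Lie derivative $\mathcal{L}_{\mathcal{X}}^{j}$ with $j\ge 1$ unchanged, hence does not alter master-symmetry degrees; so $\mathcal{L}_{\mathcal{X}}(\beta^\sharp)$ is a master symmetry of degree $m-1$ if and only if $\mathcal{X}_{T}$ is. Finally, commuting $\mathcal{L}_{\mathcal{X}}$ through $\pi^\sharp$ and $\mathbf{d}$ (as in the identity recalled just before the statement) gives $\mathcal{L}_{\mathcal{X}}^{k}(\mathcal{X}_{T})=\pi^\sharp\cdot\mathbf{d}(\mathcal{L}_{\mathcal{X}}^{k}T)$ for all $k\ge 0$; consequently $\mathcal{L}_{\mathcal{X}}^{m}(\mathcal{X}_{T})=0$ if and only if $\mathbf{d}(\mathcal{L}_{\mathcal{X}}^{m}T)\in\ker\pi^\sharp$, which is exactly the defining property of $T=\mathbf{i}_{\mathcal{X}}\beta$ being the generator of a Hamiltonian master symmetry of degree $m-1$, and the lower conditions $\mathcal{L}_{\mathcal{X}}^{j}(\mathcal{X}_{T})\neq 0$ for $1\le j\le m-1$ correspond through the same identity to $\mathbf{d}(\mathcal{L}_{\mathcal{X}}^{j}T)\notin\ker\pi^\sharp$. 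Chaining the equivalences $\beta^\sharp$ is a master symmetry of degree $m$ $\Leftrightarrow$ $\mathcal{L}_{\mathcal{X}}(\beta^\sharp)$ is one of degree $m-1$ $\Leftrightarrow$ $\mathcal{X}_{T}$ is one of degree $m-1$ $\Leftrightarrow$ $\mathbf{i}_{\mathcal{X}}\beta$ generates a Hamiltonian master symmetry of degree $m-1$ then proves the proposition.

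The main obstacle is the identification in the second paragraph: splitting $\mathcal{L}_{\mathcal{X}}(\beta^\sharp)$ into the Hamiltonian part $\mathcal{X}_{T}$ plus the possibly non-exact part $\pi^\sharp\cdot\gamma$, and proving that the latter is $\mathcal{X}$-invariant and hence invisible to all higher Lie derivatives. This is precisely where $\mathbf{d}_{\mathcal{X}}\beta=0$, reinforced by the free identity $\mathbf{i}_{\mathcal{X}}\mathbf{i}_{\mathcal{X}}=0$, does its work, and it is why the statement is an equivalence rather than a one-way implication. The remaining ingredients --- the degree bookkeeping, commuting $\mathcal{L}_{\mathcal{X}}$ through $\pi^\sharp$ and $\mathbf{d}$, and treating the degenerate case $m=1$ via the convention that a master symmetry of degree $0$ is a symmetry --- are routine.
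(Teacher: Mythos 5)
Your proof is correct and uses exactly the same ingredients as the paper's argument --- $\mathcal{L}_{\mathcal{X}}\pi^\sharp=0$, Cartan's magic formula, $\mathbf{i}_{\mathcal{X}}\mathbf{i}_{\mathcal{X}}=0$, and the hypothesis $\mathbf{d}_{\mathcal{X}}\beta=0$; the paper iterates the computation directly on the one-form to get $\mathcal{L}_{\mathcal{X}}^{m+1}\beta^\sharp=\pi^\sharp\cdot\mathbf{d}\bigl(\mathcal{L}_{\mathcal{X}}^{m}(\mathbf{i}_{\mathcal{X}}\beta)\bigr)$, while you split off the $\mathcal{X}$-invariant remainder $\pi^\sharp\cdot\mathbf{i}_{\mathcal{X}}\mathbf{d}\beta$ after one Lie derivative and recurse on the degree, which is the same calculation in different packaging. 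Your bookkeeping of the non-vanishing clauses (and of the case $m=1$) is, if anything, more explicit than the paper's, which only matches the vanishing conditions.
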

   \begin{proof}
        $ \beta ^\sharp $ is a master symmetry of degree $m$ if and only if 
        \begin{equation}\label{eqn:master_symmetry} 
           \mathcal{L} _{ \mathcal{X} } ^{ m + 1 } \beta  ^\sharp = 0, \quad \mbox{and }\quad  \mathcal{L} _{ \mathcal{X} } ^m \beta ^\sharp \neq 0.
       \end{equation} 
       Since $ \beta ^\sharp = \pi ^\sharp \cdot \beta $, we have 
       \begin{align*} 
           \mathcal{L} _{ \mathcal{X} } ^{ m + 1 } (\beta ^\sharp)&  = \mathcal{L} _{ \mathcal{X} } ^{ m + 1 } (\pi ^\sharp \cdot \beta)   =   \mathcal{L} _{ \mathcal{X} } ^{ m } (\mathcal{L} _{ \mathcal{X} } (\pi ^\sharp \cdot \beta))\\ &  =   \mathcal{L} _{ \mathcal{X} } ^{ m } (\mathcal{L} _{ \mathcal{X} } (\pi ^\sharp) \cdot \beta + \pi ^\sharp \cdot (\mathcal{L} _{ \mathcal{X} } (\beta))  ) \\ & =   \mathcal{L} _{ \mathcal{X} } ^{ m } ( \pi ^\sharp \cdot (\mathcal{L} _{ \mathcal{X} } (\beta))  )\\
           & = \cdots \\
           & = \pi ^\sharp \cdot(  \mathcal{L} _{ \mathcal{X} } ^{ m + 1 } \beta ),
       \end{align*}
      where we used that $ \mathcal{L} _{ \mathcal{X} } \pi ^\sharp = 0 $ since $ \mathcal{X} $ is locally Hamiltonian.  Hence, equation \ref{eqn:master_symmetry} can be equivalently written as
      \[
           \mathcal{L} _{ \mathcal{X} } ^{ m + 1 } \beta  ^\sharp =  \pi ^\sharp \cdot (\mathcal{L} _{ \mathcal{X} } ^{ m + 1 } \beta)  = 0, \quad \mbox{and }\quad  \mathcal{L} _{ \mathcal{X} } ^m \beta ^\sharp =  \pi ^\sharp \cdot (\mathcal{L} _{ \mathcal{X} } ^{ m  } \beta) \neq 0,
      \]
      or 
      \[
            (\mathcal{L} _{ \mathcal{X} } ^{ m + 1 } \beta) \in \ker \pi ^\sharp  , \quad \mbox{and }\quad  (\mathcal{L} _{ \mathcal{X} } ^{ m  } \beta) \notin \ker \pi ^\sharp. 
      \]
      Using Cartan's magic formula yields 

      \begin{align*} 
          \mathcal{L} _{ \mathcal{X} } ^{ m + 1 } \beta & = \mathcal{L} _{ \mathcal{X} } ^m (\mathcal{L} _{ \mathcal{X} } \beta) = \mathcal{L} _{ \mathcal{X} } ^m (\mathbf{d} (\mathbf{i} _{ \mathcal{X} } \beta) + \mathbf{i} _{ \mathcal{X} } \mathbf{d} \beta)\\
      &     = \mathbf{d} (\mathcal{L} _{ \mathcal{X} } ^m (\mathbf{i} _{ \mathcal{X} }\beta )  ) + \mathcal{L} _{ \mathcal{X} } ^m (\mathbf{i} _{ \mathcal{X} } \mathbf{d} \beta)  \\
    &  =  \mathbf{d} (\mathcal{L} _{ \mathcal{X} } ^m (\mathbf{i} _{ \mathcal{X} }\beta )  )+ \mathcal{L} _{ \mathcal{X} } ^{m-1} (\mathbf{i} _{ \mathcal{X} } \mathbf{d} ( \mathbf{i} _{ \mathcal{X} }\mathbf{d} \beta  )+ \mathbf{d} (\mathbf{i} _{ \mathcal{X} } \mathbf{i} _{ \mathcal{X} }( \mathbf{d} \beta)) ) \\
   & =   \mathbf{d} (\mathcal{L} _{ \mathcal{X} } ^m (\mathbf{i} _{ \mathcal{X} }\beta )  )+ \mathcal{L} _{ \mathcal{X} } ^{m-1} (\mathbf{i} _{ \mathcal{X} } (\mathbf{d} _{ \mathcal{X} } \beta)   )\\
  & =  \mathbf{d} (\mathcal{L} _{ \mathcal{X} } ^m (\mathbf{i} _{ \mathcal{X} }\beta )  )
      \end{align*} 
      where we used the fact that  $ \mathbf{i} _{ \mathcal{X} } \mathbf{i} _{ \mathcal{X} } \alpha = 0 $ for any  $k$-form $ \alpha $, and the fact that $ \mathcal{X} $ is an  infinitesimal Poissonoid transformations such that $ \mathbf{d} _{ \mathcal{X} } \beta = 0 $.     It follows that $ \mathcal{L} _{ \mathcal{X} } ^{ m + 1 } \beta ^\sharp = 0 $ if and only if $ \mathbf{d} (\mathcal{L} _{ \mathcal{X} } ^{ m  } (\mathbf{i} _{ \mathcal{X} } \beta ))  \in \ker \pi ^\sharp $, that is if and only if $ \mathbf{i} _{ \mathcal{X} } \beta $ is the generator of a Hamiltonian master symmetry of degree $ m -1$. 

   \end{proof}

\section*{Acknowledgments}
The authors wish to thank Luca Degiovanni, Francesco Fass\`o, and Anton Izosimov for comments and suggestions, and Shengda Hu for  several discussions concerning this work. We would also  like to thank both  reviewers for their insightful comments on the paper, as these comments led us to an improvement of our work.  The second author was supported  by an NSERC Discovery Grant.

\bibliographystyle{aims}
\bibliography{Books,Papers}

\providecommand{\href}[2]{#2}
\providecommand{\arxiv}[1]{\href{http://arxiv.org/abs/#1}{arXiv:#1}}
\providecommand{\url}[1]{\texttt{#1}}
\providecommand{\urlprefix}{URL }
\begin{thebibliography}{10}

\bibitem{abraham_foundations_1978}
\newblock R.~Abraham and J.~E. Marsden,
\newblock \emph{Foundations of Mechanics},
\newblock American Mathematical Soc., 1978.

\bibitem{bolsinov_compatible_1992}
\newblock A.~V. Bolsinov,
\newblock Compatible \{{P}\}oisson {Brackets} on \{{L}\}ie {Algebras} and
  {Completeness} of {Families} of {Functions} in {Involution},
\newblock \emph{Mathematics of the USSR-Izvestiya}, \textbf{38} (1992), 69.

\bibitem{carinena_canonoid_2013}
\newblock J.~F. Cari{\~n}ena, F.~Falceto and M.~F. Ra{\~n}ada,
\newblock Canonoid transformations and master symmetries,
\newblock \emph{Journal of Geometric Mechanics}, \textbf{5} (2013), 151--166.

\bibitem{carinena_canonoid_1988}
\newblock J.~F. Cari{\~n}ena and M.~F. Ra{\~n}ada,
\newblock Canonoid transformations from a geometric perspective,
\newblock \emph{Journal of Mathematical Physics}, \textbf{29} (1988),
  2181--2186.

\bibitem{carinena_generating_1990}
\newblock J.~F. Cari{\~n}ena and M.~F. Ra{\~n}ada,
\newblock Generating functions, bi-{Hamiltonian} systems, and the
  quadratic-{Hamiltonian} theorem,
\newblock \emph{Journal of Mathematical Physics}, \textbf{31} (1990), 801--807.

\bibitem{choquet-bruhat_analysis_2000}
\newblock Y.~Choquet-Bruhat,
\newblock \emph{Analysis, Manifolds and Physics, Part {II} - Revised and
  Enlarged Edition},
\newblock Elsevier, 2000.

\bibitem{currie_canonical_1972}
\newblock D.~G. Currie and E.~J. Saletan,
\newblock Canonical transformations and quadratic hamiltonians,
\newblock \emph{II Nuovo Cimento B Series 11}, \textbf{9} (1972), 143--153.

\bibitem{daskaloyannis_unified_2006}
\newblock C.~Daskaloyannis and K.~Ypsilantis,
\newblock Unified treatment and classification of superintegrable systems with
  integrals quadratic in momenta on a two-dimensional manifold,
\newblock \emph{Journal of Mathematical Physics}, \textbf{47} (2006), 042904.

\bibitem{falqui_note_2007}
\newblock G.~Falqui,
\newblock A {Note} on the {Rotationally} {Symmetric} \${SO}(4)\$ \{{E}\}uler
  {Rigid} {Body},
\newblock \emph{Symmetry, Integrability and Geometry: Methods and
  Applications}.

\bibitem{fasano_analytical_2006}
\newblock A.~Fasano and S.~Marmi,
\newblock \emph{Analytical Mechanics : An Introduction},
\newblock Oxford University Press, 2006.

\bibitem{fomenko_integrability_1988}
\newblock A.~T. Fomenko,
\newblock \emph{Integrability and Nonintegrability in Geometry and Mechanics},
\newblock Kluwer Academic Publishers, Dordrecht, 1988.

\bibitem{jauch_problem_1940}
\newblock J.~M. Jauch and E.~L. Hill,
\newblock On the {Problem} of {Degeneracy} in {Quantum} {Mechanics},
\newblock \emph{Physical Review}, \textbf{57} (1940), 641--645.

\bibitem{kalnins_completeness_2001}
\newblock E.~G. Kalnins, J.~M. Kress, G.~S. Pogosyan and W.~M. Jr,
\newblock Completeness of superintegrability in two-dimensional
  constant-curvature spaces,
\newblock \emph{Journal of Physics A: Mathematical and General}, \textbf{34}
  (2001), 4705.

\bibitem{landolfi_certain_2007}
\newblock G.~Landolfi and G.~Soliani,
\newblock On certain canonoid transformations and invariants for the parametric
  oscillator,
\newblock \emph{Journal of Physics A: Mathematical and Theoretical},
  \textbf{40} (2007), 3413.

\bibitem{laurent-gengoux_poisson_2012}
\newblock C.~Laurent-Gengoux, A.~Pichereau and P.~Vanhaecke,
\newblock \emph{Poisson Structures},
\newblock Springer, 2012.

\bibitem{libermann_symplectic_1987}
\newblock P.~Libermann and C.-M. Marle,
\newblock \emph{Symplectic Geometry and Analytical Mechanics},
\newblock Springer, 1987.

\bibitem{magri_eight_2004}
\newblock F.~Magri, P.~Casati, G.~Falqui and M.~Pedroni,
\newblock Eight lectures on integrable systems,
\newblock in \emph{Integrability of Nonlinear Systems} (eds.
  Y.~Kosmann-Schwarzbach, K.~M. Tamizhmani and B.~Grammaticos),
\newblock no. 638 in Lecture Notes in Physics, Springer Berlin Heidelberg,
  2004,
\newblock 209--250.

\bibitem{marmo_equivalent_1976}
\newblock G.~Marmo,
\newblock Equivalent {Lagrangians} and quasicanonical transformations,
\newblock in \emph{Group {Theoretical} {Methods} in {Physics}} (eds. A.~Janner,
  T.~Janssen and M.~Boon),
\newblock no.~50 in Lecture {Notes} in {Physics}, Springer Berlin Heidelberg,
  1976,
\newblock 568--572.

\bibitem{marsden_introduction_1999}
\newblock J.~E. Marsden and T.~S. Ratiu,
\newblock \emph{Introduction to mechanics and symmetry: a basic exposition of
  classical mechanical systems},
\newblock Springer, 1999.

\bibitem{moore_tangential_1988}
\newblock C.~C. Moore and C.~Schochet,
\newblock Tangential {Cohomology},
\newblock in \emph{Global {Analysis} on {Foliated} {Spaces}},
\newblock no.~9 in Mathematical {Sciences} {Research} {Institute}
  {Publications}, Springer New York, 1988,
\newblock 68--91.

\bibitem{morosi_euler_1996}
\newblock C.~Morosi and L.~Pizzocchero,
\newblock On the \{{E}\}uler equation: {Bi}-\{{H}\}amiltonian structure and
  integrals in involution,
\newblock \emph{Letters in Mathematical Physics}, \textbf{37} (1996), 117--135.

\bibitem{negri_canonoid_1987}
\newblock L.~J. Negri, L.~C. Oliveira and J.~M. Teixeira,
\newblock Canonoid transformations and constants of motion,
\newblock \emph{Journal of Mathematical Physics}, \textbf{28} (1987),
  2369--2372.

\bibitem{poincare_sur_1890}
\newblock H.~Poincar{\'e},
\newblock Sur le probl{\'e}me des trois corps et les {\'e}quations de la
  dynamique,
\newblock \emph{Acta Mathematica}, \textbf{13} (1890), A3--A270.

\bibitem{rudolph_differential_2012}
\newblock G.~Rudolph and M.~Schmidt,
\newblock \emph{Differential Geometry and Mathematical Physics: Part {I}.
  Manifolds, {L}ie Groups and {H}amiltonian Systems},
\newblock Springer, 2012.

\bibitem{saletan_theoretical_1971}
\newblock E.~J. Saletan and A.~H. Cromer,
\newblock \emph{Theoretical mechanics},
\newblock Wiley, 1971.

\bibitem{tempesta_quantum_2002}
\newblock P.~Tempesta, E.~Alfinito, R.~A. Leo and G.~Soliani,
\newblock Quantum models related to fouled {Hamiltonians} of the harmonic
  oscillator,
\newblock \emph{Journal of Mathematical Physics}, \textbf{43} (2002),
  3538--3553.

\bibitem{vaisman_lectures_1994}
\newblock I.~Vaisman,
\newblock \emph{Lectures on the Geometry of Poisson Manifolds},
\newblock Springer, 1994.

\bibitem{vegas_canonical_1989}
\newblock F.~J.~D. Vegas,
\newblock On the canonical transformation theorem of {Currie} and {Saletan},
\newblock \emph{Journal of Physics A: Mathematical and General}, \textbf{22}
  (1989), 1927.

\bibitem{whittaker_treatise_1988}
\newblock E.~T. Whittaker,
\newblock \emph{A Treatise on the Analytical Dynamics of Particles and Rigid
  Bodies},
\newblock Cambridge University Press, 1988.

\end{thebibliography}

\end{document}

\subsection{Infinitesimal (weakly) Poissonoid transformations}
A diffeomorphism $ f:M \to M $ is weakly Poissonoid with respect to $ \mathcal{X} $, when $\mathcal{X} $ is a Poisson vector field with respect to $ f ^\ast \pi $, that is if and only if $ \mathcal{L} _{ \mathcal{X} } (f ^\ast \pi) = 0 $. If we consider a one-parameter group of weakly Poissonoid diffeomorphisms, then   $ \mathcal{L} _{ \mathcal{X} } f_t  ^\ast (\pi) = 0 $. Let $ \xi $ be the vector field on $M$ that is the  infinitesimal generator of $ f _t $. Differentiating with respect to $t$ yields 
\begin{equation} \label{eqn:infinitesimal_weakly_poissonoid}
    \left .  \frac{ d } { dt }\right |_{t = 0 }  \mathcal{L} _{ \mathcal{X} } (f_t  ^\ast \pi) = \mathcal{L} _{ \mathcal{X} } \left(\left.\frac{ d } { dt }\right | _{ t = 0 } ( f _t ^\ast \pi) \right)  = \mathcal{L} _{ \mathcal{X} } \mathcal{L} _{ \xi } \pi  = 0    
\end{equation} 
where the last equality follows from the definition of Lie derivative. 
A vector field satisfying \eqref{eqn:infinitesimal_weakly_poissonoid} is called a {\bf infinitesimal weakly Poissonoid transformation}
\begin{proposition}
A vector field $ \xi $ is the infinitesimal generator of a weakly Poissonoid transformation for a locally Hamiltonian vector field $ \mathcal{X} $ if and only if $ [\xi , \mathcal{X} ]$  is a Poisson vector field.
\end{proposition}

\begin{proof}

     Since $ \xi $ is  the infinitesimal generator of a weakly Poissonoid transformation for 
    $ \mathcal{X} $ we have that $ \mathcal{L} _{ \mathcal{X} } \mathcal{L} _{ \xi } \pi=0 $. Moreover,  $ \mathcal{L} _{ \mathcal{X}  }\pi = 0 $, since $ \mathcal{X} $ is locally Hamiltonian. Hence, we have 
    \[
        \mathcal{L} _{ [\xi , \mathcal{X} ]} \pi = \mathcal{L} _{ \xi } \mathcal{L} _{ \mathcal{X}  }\pi - \mathcal{L} _{ \mathcal{X} } \mathcal{L} _{ \xi } \pi =0
    \]
    and thus $ [ \xi , \mathcal{X} ] $ is a Poisson vector field. 

\end{proof}